\newcommand{\col}{\text{col}}
\newcommand{\tr}{\text{tr}}
\newcommand{\rank}{\text{rank}}
\newcommand{\diag}{\text{diag}}
\newcommand{\vect}{\text{vec}}
\newcommand{\Ex}{\mathbb{E}} 
\newcommand{\real}{\mathbb{R}} 
\newcommand{\In}{\mathbb{I}} 
\def\ba{\boldsymbol{a}}
\def\bu{\boldsymbol{u}}
\def\bv{\boldsymbol{v}}
\def\bx{\boldsymbol{x}}
\def\by{\boldsymbol{y}}
\def\bA{\boldsymbol{A}}
\def\bB{\boldsymbol{B}}
\def\bD{\boldsymbol{D}}
\def\bE{\boldsymbol{E}}
\def\bI{\boldsymbol{I}}
\def\bJ{\boldsymbol{J}}
\def\bL{\boldsymbol{L}}
\def\bM{\boldsymbol{M}}
\def\bP{\boldsymbol{P}}
\def\bR{\boldsymbol{R}}
\def\bU{\boldsymbol{U}}
\def\bV{\boldsymbol{V}}
\def\bW{\boldsymbol{W}}
\def\bX{\boldsymbol{X}}
\newcommand{\0}{\mathbf{0}}
\newcommand{\1}{\mathbf{1}}
\newcommand{\bbeta}{\boldsymbol{\beta}}
\newcommand{\bmu}{\boldsymbol{\mu}}
\newcommand{\bpi}{\boldsymbol{\pi}}
\newcommand{\btheta}{\boldsymbol{\theta}}
\newcommand{\bTheta}{\boldsymbol{\Theta}}
\newcommand{\be}{\begin{equation}}
\newcommand{\ee}{\end{equation}}
\newcommand{\bea}{\begin{eqnarray}}
\newcommand{\eea}{\end{eqnarray}}
\newcommand{\bes}{\begin{eqnarray*}}
\newcommand{\ees}{\end{eqnarray*}}
\newcommand{\bi}{\begin{itemize}}
\newcommand{\ei}{\end{itemize}}
\newcommand{\bpm}{\begin{pmatrix}}
\newcommand{\epm}{\end{pmatrix}}
\newtheorem{thm}{Theorem}[section]
\newtheorem{lem}[thm]{Lemma}
\newtheorem{cor}[thm]{Corollary}
\newtheorem{prop}[thm]{Proposition}
\newtheorem{defin}[thm]{Definition}
\newcommand{\wbTheta}{{\overline{\boldsymbol{\Theta}}}}
\title{A General Framework for Association Analysis of Heterogeneous Data}
\author[1]{Gen Li}
\author[2]{Irina Gaynanova}
\affil[1]{Department of Biostatistics, Mailman School of Public Health, Columbia University}
\affil[2]{Department of Statistics, Texas A$\&$M University}
\begin{document}

\date{}

\maketitle

\begin{abstract}

Multivariate association analysis is of primary interest in many applications.
Despite the prevalence of high-dimensional and non-Gaussian data (such as count-valued or binary), most existing methods 
only apply to low-dimensional data with continuous measurements. Motivated by the Computer Audition Lab 500-song (CAL500) music annotation study, we develop a new framework for the association analysis of two sets of high-dimensional and heterogeneous (continuous/binary/count) data.
%
We model heterogeneous random variables using exponential family distributions, and exploit a structured decomposition of the underlying natural parameter matrices to identify shared and individual patterns for two data sets.  We also introduce a new measure of the strength of association, and a permutation-based procedure to test its significance. An alternating iteratively reweighted least squares algorithm is devised for model fitting, and several variants are developed to expedite computation and achieve variable selection. 
The application to the CAL500 data sheds light on the relationship between acoustic features and semantic annotations, and provides effective means for automatic music annotation and retrieval.

\end{abstract}

\section{Introduction}
With the advancement of measurement technologies, data acquisition becomes cheaper and easier.
Often, data are collected from multiple sources or different platforms on the same set of samples, which are known as multi-view or multi-modal data.
One of the main challenges associated with the analysis of multi-view data is that measurements from different sources may have heterogeneous types, such as continuous, binary, and count-valued.
For instance, the motivating Computer Audition Lab 500-song (CAL500) data \citep{turnbull2007towards} contain two sets of variables, acoustic features and semantic annotations, which are collected for 502 Western popular songs from the past 50 years.
The acoustic features characterize the audio textures of a song, and are continuous variables obtained from well-developed signal processing methods \citep[see][for example]{logan2000mel}.
The semantic annotations represent a song with a binary vector of labels over a multi-word vocabulary of semantic concepts.
The labels correspond to different genres, usages, instruments, characteristics, and vocal types.


In large music databases, it is often desired to have computers automatically generate a short description for a novel song from its acoustic features (auto-tagging), or select relevant songs based on a multi-word semantic query (music retrieval) \citep{turnbull2007towards,turnbull2008semantic,barrington2007audio,bertin2008autotagger,goto2004recent}.
The CAL500 study provides a well annotated music database to achieve these goals.
The matched acoustic features and annotation profiles facilitate the investigation of the association between the two sets of variables.
The association analysis may not only reveal how audio textures jointly affect listeners' subjective feelings, but also identify annotation patterns that can be used for music retrieval.
As a result, it may give rise to new, effective auto-tagging and retrieval methods.

One of the most popular methods for the multivariate association analysis is the canonical correlation analysis (CCA) \citep{hotelling1936relations}.
The CCA seeks linear combinations of the two sets of continuous variables with the maximal correlation. The loadings of the combinations offer insights into how the two sets of variables are related, whereas the resulting correlation is used to assess the strength of association. Furthermore, the canonical variables can be used for subsequent analyses such as regression \citep{luo2016canonical} and clustering \citep{chaudhuri2009multi}. However, the standard CCA has many limitations. On the one hand, it implicitly assumes that both sets of variables are real-valued in order to make the linear combinations interpretable. Moreover, the Gaussian assumption is used to provide a probabilistic interpretation \citep{bach2005probabilistic}. That said, the CCA is not appropriate for non-Gaussian data, such as the binary annotations in the CAL500 study.
On the other hand, the CCA suffers from overfitting for high dimensional data.
When the number of variables in either data set exceeds the sample size, the largest canonical correlation will always be one, resulting in misleading conclusions.
Several extensions have been studied in the literature to address the overfitting issue, with sparsity regularization being the most common approach \citep{witten2009penalized,chen2012efficient,chen2013sparse}.
These methods, however, are not directly applicable to non-Gaussian data.

To conduct the association analysis of the CAL500 data, we develop a new framework that accommodates  high-dimensional heterogeneous variables.
We call it the {\em Generalized Association Study} (GAS) framework.
We model heterogeneous data types (binary/count/continuous) using exponential family distributions, and exploit a structured decomposition of the underlying natural parameter matrices to capture the dependency structure between the variables.
The natural parameter matrices are specifically factorized into joint and individual structure, where the joint structure characterizes the association between the two data sets, and individual structure captures the remaining variation in each set.
The proposed framework builds upon a low-rank model, which reduces the overfitting issue for high dimensional data.
To our knowledge, this is the first attempt to generalize the multivariate association analysis to high dimensional non-Gaussian data from a frequentist perspective.
We apply the method to the CAL500 data, and explicitly characterize the dependency structure between the acoustic features and the semantic annotations.
We further use the proposed framework to devise new procedures for auto-tagging and music retrieval.
The resulting annotation performance is superior to existing methods.


%
The proposed model connects to the joint and individual variation explained (JIVE) model \citep{Lock2013} and the inter-battery factor analysis (IBFA) model \citep{tucker1958inter,browne1979maximum} under the Gaussian assumption.
\cite{Klami2010Bayesian,klami2013bayesian,virtanen2011bayesian} extended the IBFA model to non-Gaussian data under the Bayesian framework and developed Bayesian CCA methods for the association analysis.
However, the Bayesian methods require Gaussian priors for technical considerations, and are computationally prohibitive for large data.
A major difference of the proposed method is that we treat the underlying natural parameters as fixed effects and exploit a frequentist approach to estimate them without imposing any prior distribution. 
The model parameters can be efficiently estimated using generalized linear models (GLM) and the algorithm scales well to large data.
In addition, variable selection can be easily incorporated into the proposed framework to further facilitate interpretation.
{\color{black} A similar idea has been explored in the context of mixed graphical models \citep{cheng2017high,yang2014semiparametric,lee2015generalized}, which extend Gaussian graphical models to mixed data types. However, graphical models generally focus on characterizing relations between variables rather than data sets, and thus are not directly suitable for the purpose of music annotation and retrieval.}


Another unique contribution of the paper is that we introduce a new measure of the strength of association between the two heterogeneous data sets: the {\em association coefficient}.
We devise a permutation-based test which formally assesses the significance of association and provides a p-value.
We apply the methods to the CAL500 data, and identify a statistically significant, yet moderate, association between the acoustic features and the semantic annotations.
The statistical significance warrants the analysis of the dependency structure between the heterogeneous data types.
The moderate association may partially explain why auto-tagging and query-by-semantic-description are challenging problems, and no existing machine learning method provides extraordinary performance \citep{turnbull2008semantic,bertin2008autotagger}.

The rest of the paper is organized as follows.
In Section \ref{sec:model}, we introduce the model and discuss identifiability conditions under the GAS framework.
In Section \ref{sec:asso}, we describe the new association coefficient and a permutation-based hypothesis test for the significance of association.
In Section \ref{sec:alg}, we elaborate the model fitting procedure. 
In Section \ref{sec:CAL500}, we apply the proposed framework to the CAL500 data, and discuss new procedures for auto-tagging and music retrieval.
In Section \ref{sec:sim}, we conduct comprehensive simulation studies to compare our approach with existing methods.
Discussion and concluding remarks are provided in Section \ref{sec:dis}.
Proofs, technical details of the algorithm, a detailed description of the rank estimation procedure, and additional simulation results can be found in the supplementary material.

\section{Generalized Association Study Framework}\label{sec:model}
In this section, we first introduce a statistical model for characterizing the dependency structure between two non-Gaussian data sets. Then we discuss the identifiability of the proposed model.

\subsection{Model}

Let $\bX_1$ and $\bX_2$ be two data matrices of size $n\times p_1$ and $n\times p_2$, respectively, with rows being the samples (matched between the matrices) and columns being the variables.
We assume the entries of each data matrix are realizations of univariate random variables from a single-parameter exponential family distribution (e.g., Gaussian, Poisson, Bernoulli). In particular, the random variables may follow different distributions in different matrices.
The probability density function of each random variable $x$ takes the form
\[
f(x|\theta)=h(x)\exp\{x\theta-b(\theta)\},
\]
where $\theta\in\mathbb{R}$ is a natural parameter, $b(\cdot)$ is a convex cumulant function, and $h(\cdot)$ is a normalization function.
The expectation of the random variable is $\mu=b'(\theta)$.
Following the notation in the GLM framework, the canonical link function is defined as $g(\mu)={b'}^{-1}(\mu)$.
The notation for some commonly used exponential family distributions is given in Table \ref{tab:EF}.

\begin{table}[htbp]
\caption{The notation for some commonly used  exponential family distributions.}
\label{tab:EF}
\begin{center}
\begin{tabular}{lcccc}
\toprule
 & Mean $\mu$ &  Natural Parameter $\theta$ & $b(\theta)$ & $g(\mu)$ \\
\toprule
{Gaussian} & \multirow{2}{*}{$\mu$} & \multirow{2}{*}{$\mu$} & \multirow{2}{*}{${\theta^2\over2}$} & \multirow{2}{*}{$\mu$}\\
{(with unit variance)} & &&&\\
\midrule
Poisson  & $\lambda$ & $\log\lambda$ & $\exp(\theta)$ & $\log(\mu)$\\
\midrule
Bernoulli & $p$ & $\log{p\over 1-p}$ & $\log\{1+\exp(\theta)\}$ & $\log{\mu\over 1-\mu}$\\
\bottomrule
\end{tabular}
\end{center}
\end{table}

{\color{black}
Each random variable in the data matrix $\bX_k$ corresponds to a unique underlying natural parameter, and all the natural parameters form an $n\times p_k$ parameter matrix $\bTheta_k\in\mathbb{R}^{n\times {p_k}}$. 
The univariate random variables are assumed conditionally independent, given the underlying natural parameters.
The relation among the random variables is captured by the intrinsic patterns of the natural parameter matrices $\bTheta_1$ and $\bTheta_2$, which serve as the building block of the proposed model.
We remark that the conditional independence assumption given underlying natural parameters is commonly used in the literature for modeling multivariate non-Gaussian data.
See, \cite{zoh2016pcan,she2012reduced,lee2015generalized,goldsmith2015generalized}, for example.
On the one hand, univariate exponential family distributions 
are more tractable than the multivariate counterparts \citep{johnson1997discrete}.
Other than the multivariate Gaussian distribution, multivariate exponential family distributions are generally less studied and hard to use.}
On the other hand, the entry-wise natural parameters can be used to capture the statistical dependency in multivariate settings, acting similarly to a covariance matrix. For example, \cite{collins2001generalization} provided an alternative interpretation of the principal component analysis (PCA) using the low rank approximation to the natural parameter matrix.



Under the independence assumption, each entry of $\bX_k$ follows an exponential family distribution with the probability density function $f_k(\cdot)$ and the corresponding natural parameter matrix $\bTheta_k$.
To characterize the joint structure between the two data sources and the individual structure within each data source, we model $\bTheta_1$ and $\bTheta_2$ as
\be\label{GCCA}\left\{
\begin{aligned}
\bTheta_1&=\1\bmu_1^T+\bU_0\bV_1^T+\bU_1\bA_1^T\\
\bTheta_2&=\1\bmu_2^T+\bU_0\bV_2^T+\bU_2\bA_2^T
\end{aligned}\right..
\ee
Each parameter matrix is decomposed into three parts: the intercept (the first term), the joint structure (the second term) and the individual structure (the third term).
In particular, $\1$ is an length-$n$ vector of all ones and $\bmu_k$ is a length-$p_k$ {\em intercept} vector for $\bTheta_k$.
Let $r_0$ and $r_k$ denote the joint and individual ranks respectively, where $r_0\leq\min(n,p_1,p_2)$ and $r_k\leq \min(n,p_k)$. Then, $\bU_0$ is an $n\times r_0$ shared {\em score} matrix between the two parameter matrices;
$(\bV_1^T,\bV_2^T)^T$ is a $(p_1+p_2)\times r_0$ shared {\em loading} matrix, where $\bV_k$ corresponds to $\bTheta_k$ only;
$\bU_k$ and $\bA_k$ are $n\times r_k$ and $p_k\times r_k$ individual score and loading matrices for $\bTheta_k$, respectively.

The decomposition of the natural parameter matrices in \eqref{GCCA} has an equivalent form from the matrix factorization perspective.
More specifically,
\bes\label{GCCAmat}
\left(\bTheta_1,\bTheta_2\right)=\left(\1,\bU_0,\bU_1,\bU_2\right)
\bpm
\bmu_1^T & \bmu_2^T\\
\bV_1^T & \bV_2^T\\
\bA_1^T &\0\\
\0 & \bA_2^T
\epm,
\ees
where $\0$ represents any zero matrix of compatible size.
This structured decomposition sheds light on the association and specificity of the two data sources.
Loosely speaking, if the joint structure dominates the decomposition, the two parameter matrices are deemed highly associated. 
On the contrary, if the individual structure is dominant, the two data sets are less connected.
A more rigorous measure of association is given in Section \ref{sec:asso}.

\subsection{Connection to existing models}
Under the Gaussian assumption on $\bX_1$ and $\bX_2$, 
Model \eqref{GCCA} is identical to the JIVE model with two data sets \citep{Lock2013}:
{\color{black}
\begin{align*}
\bX_1 = \1\bmu_1^{T}+ \bU_0\bV_1^{T} + \bU_1\bA_1^{T}+ \bE_1,\\
\bX_2 = \1\bmu_2^{T}+ \bU_0\bV_2^{T} + \bU_2\bA_2^{T}+ \bE_2,
\end{align*}
where $\bE_1$ and $\bE_2$ are additive noise matrices. JIVE is an example of linked component models \citep{Zhou:2016fn}, where the dependency between two data sets is characterized by the presence of fixed shared latent components (i.e.g, $\bU_0$). When the shared components are absent, JIVE reduces to individual PCA models for $\bX_1$ and $\bX_2$. When the individual components are absent, JIVE reduces to a consensus PCA model \citep{Westerhuis:1998ux}. These models are closely related to the factor analysis, and the main difference is the deterministic (rather than probabilistic) treatment of latent components. If we substitute the fixed parameters $\bU_0$ and $\bU_k$ with Gaussian random variables, Model \eqref{GCCA} coincides with the IBFA model \citep{tucker1958inter,browne1979maximum}. The deterministic approach, however, allows us to interpret JIVE as a multi-view generalization of the standard PCA. While explicitly designed for modeling associations between two data sets, CCA cannot take into account individual latent components. As a result, it has been shown that linked component models often outperform CCA in the estimation of joint associations \citep{Trygg:2003fq, Jia:2010tua, Zhou:2016jj}. For further comparison between CCA and JIVE, we refer the reader to \citet{Lock2013}. 

The proposed framework extends linked component models to the exponential family distributions. Rewriting Model~\eqref{GCCA} with respect to each entry of $\bX_1$ and $\bX_2$ (denoted by $x_{1ij}$ and $x_{2ik}$) leads to
\begin{align*}
x_{1ij} &\sim f_1(\theta_{1ij}), \quad x_{2ik} \sim f_2(\theta_{2ik})\quad\mbox{with}\\
\theta_{1ij} &= \mu_{1j} + \sum_{r=1}^{r_0}u_{0ir}v_{1jr} + \sum_{l=1}^{r_1}u_{1il}a_{1jl},\\
\theta_{2ik}&= \mu_{2j} + \sum_{r=1}^{r_0}u_{0ir}v_{2kr} + \sum_{m=1}^{r_2}u_{2im}a_{2km}.
\end{align*}
where $f_1(\cdot)$ and $f_2(\cdot)$ are exponential family probability density functions associated with $\bX_1$ and $\bX_2$; and $u_{0ir}$, $u_{1il}$, $u_{2im}$, $v_{1jr}$, $v_{2kr}$, $a_{1jl}$, $a_{2km}$ are elements of $\bU_0$, $\bU_1$, $\bU_2$, $\bV_1$, $\bV_2$, $\bA_1$, and $\bA_2$, respectively. The above display reveals that $\bU_0$, $\bU_1$, $\bU_2$ can be viewed as fixed latent factors with $\bU_0$ being shared across both data sets, and $\bU_1$, $\bU_2$ being data set-specific. As such, this model is closely connected to the factor analysis in the context of generalized linear models. The factors are used to model the means of random variables through the canonical link functions rather than directly. The deterministic treatment allows us to interpret our model as a multi-view generalization of the exponential PCA \citep{collins2001generalization}, similar to JIVE as a multi-view generalization of the standard PCA.
}

\subsection{Identifiability}\label{sec:id}
To ensure the identifiability of Model \eqref{GCCA}, we consider the following regularity conditions:
%
\bi
\item The columns of the individual score matrices ($\bU_1$ and $\bU_2$) are linearly independent; the intercept ($\bmu_k$) and the columns of the joint and individual loading matrices ($\bV_k$ and $\bA_k$) corresponding to each data type are linearly independent;
\item The score matrices are column-centered (i.e., $\1^T(\bU_0,\bU_1,\bU_2)=\0$), and the column space of the joint score matrix is orthogonal to that of the individual score matrices  (i.e., $\bU_0^T(\bU_1,\bU_2)=\0$);
\item Each score matrix has orthogonal columns, and each loading matrix has orthonormal columns (i.e., $\bV_1^T\bV_1+\bV_2^T\bV_2=\bI$, $\bA_1^T\bA_1=\bI$ and $\bA_2^T\bA_2=\bI$, where $\bI$ is an identity matrix of compatible size).
\ei
The first condition ensures that the joint and individual ranks are correctly specified.
The second condition orthogonalizes the intercept, the joint and the individual patterns.
The last condition rules out the arbitrary rotation and rescaling of each decomposition, if the column norms of respective score matrices are distinct (this is almost always true in practice).
{\color{black}We remark that the orthonormality condition for the concatenated joint loadings in $(\bV_1^T,\bV_2^T)^T$ is more general than separate orthonormality conditions for $\bV_1$ and $\bV_2$, and is beneficial for modeling data with different scales and structures.}
Under the above conditions, Model \eqref{GCCA} is uniquely defined up to trivial column reordering and sign switches.
The rigorous proof of the model identifiability partially attributes to the Theorem 1.1 in the supplementary material of \cite{Lock2013}. For completeness, we restate the theorem under our framework:
\begin{prop}\label{thm:unique}
Let
\begin{equation*}
\left\{
\begin{aligned}
\bTheta_1&=\bJ_1+\bB_1,\\
\bTheta_2&=\bJ_2+\bB_2,
\end{aligned}
\right.
\end{equation*}
$\bJ=(\bJ_1,\bJ_2)$ and $\bB=(\bB_1,\bB_2)$, where $\rank(\bJ)=r_0$ and $\rank(\bB_k)=r_k$ for $k=1,2$.
Suppose the model ranks are correctly specified, i.e., $\rank(\bB)=r_1+r_2$ and $\rank(\bTheta_k)=r_0+r_k$ for $k=1,2$.
There exists a unique parameter set  $\{\bJ_1,\bJ_2,\bB_1,\bB_2\}$ satisfying $\bJ^T\bB=\0$.
\end{prop}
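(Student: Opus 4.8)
The strategy is to observe that the orthogonality constraint $\bJ^{T}\bB=\0$ leaves essentially one free object — the column space $\mathcal J:=\col(\bJ)\subseteq\real^{n}$ — and then to identify that space intrinsically with $\col(\bTheta_{1})\cap\col(\bTheta_{2})$. \emph{Step 1 (reduction).} Suppose $\bTheta_{k}=\bJ_{k}+\bB_{k}$ satisfies $\bJ^{T}\bB=\0$, and set $\mathcal J=\col(\bJ_{1})+\col(\bJ_{2})$, $\mathcal B=\col(\bB_{1})+\col(\bB_{2})$; the constraint says exactly that $\mathcal J\perp\mathcal B$. Applying the orthogonal projector $P_{\mathcal J}$ onto $\mathcal J$ to the identity $\bTheta_{k}=\bJ_{k}+\bB_{k}$ annihilates $\bB_{k}$ (its columns lie in $\mathcal B\subseteq\mathcal J^{\perp}$) and fixes $\bJ_{k}$, so $\bJ_{k}=P_{\mathcal J}\bTheta_{k}$ and hence $\bB_{k}=(\bI-P_{\mathcal J})\bTheta_{k}$. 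Thus the quadruple $\{\bJ_{1},\bJ_{2},\bB_{1},\bB_{2}\}$ is a function of $\mathcal J$ alone, and everything reduces to showing that $\mathcal J$ is uniquely determined by $\bTheta_{1},\bTheta_{2}$.

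\emph{Step 2 (identifying $\mathcal J$).} I would first observe that the rank hypotheses force $\col(\bJ_{1})=\col(\bJ_{2})=\mathcal J$ with $\dim\mathcal J=r_{0}$: since $\bJ^{T}\bB=\0$ gives $\col(\bJ_{k})\perp\col(\bB_{k})$, the sum $\col(\bJ_{k})\oplus\col(\bB_{k})$ contains $\col(\bTheta_{k})$ and has dimension $\dim\col(\bJ_{k})+r_{k}$, while $\dim\col(\bTheta_{k})=r_{0}+r_{k}$ and $\dim\mathcal J=\rank(\bJ)=r_{0}$, which squeezes $\dim\col(\bJ_{k})=r_{0}$. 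Next, rank--nullity applied to $(\bI-P_{\mathcal J})$ restricted to $\col(\bTheta_{k})$ gives
\[
\rank(\bB_{k})=\dim\col(\bTheta_{k})-\dim\!\bigl(\col(\bTheta_{k})\cap\mathcal J\bigr)=(r_{0}+r_{k})-\dim\!\bigl(\col(\bTheta_{k})\cap\mathcal J\bigr),
\]
so $\rank(\bB_{k})=r_{k}$ forces $\dim(\col(\bTheta_{k})\cap\mathcal J)=r_{0}=\dim\mathcal J$, i.e.\ $\mathcal J\subseteq\col(\bTheta_{k})$ for $k=1,2$, hence $\mathcal J\subseteq\col(\bTheta_{1})\cap\col(\bTheta_{2})$. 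For the reverse inclusion I would use that $\rank(\bB)=r_{1}+r_{2}=\rank(\bB_{1})+\rank(\bB_{2})$ implies $\col(\bB_{1})\cap\col(\bB_{2})=\{0\}$; combined with $\mathcal J\perp\mathcal B$, a short computation gives $(\mathcal J\oplus\col(\bB_{1}))\cap(\mathcal J\oplus\col(\bB_{2}))=\mathcal J$, and since $\col(\bTheta_{k})\subseteq\mathcal J\oplus\col(\bB_{k})$ this yields $\col(\bTheta_{1})\cap\col(\bTheta_{2})\subseteq\mathcal J$. Therefore $\mathcal J=\col(\bTheta_{1})\cap\col(\bTheta_{2})$, which depends only on the data; by Step 1 this proves uniqueness. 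Existence follows by running Step 1 in reverse — take $\mathcal J:=\col(\bTheta_{1})\cap\col(\bTheta_{2})$, $\bJ_{k}:=P_{\mathcal J}\bTheta_{k}$, $\bB_{k}:=\bTheta_{k}-\bJ_{k}$, and verify the rank identities together with $\bJ^{T}\bB=\0$ — where one uses that correct specification also supplies $\rank(\bTheta_{1},\bTheta_{2})=r_{0}+r_{1}+r_{2}$, equivalently $\dim\bigl(\col(\bTheta_{1})\cap\col(\bTheta_{2})\bigr)=r_{0}$.

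I expect the main obstacle to be the bookkeeping in Step 2: translating each matrix-rank hypothesis into the right subspace statement (notably $\col(\bJ_{k})=\mathcal J$ and $\col(\bB_{1})\cap\col(\bB_{2})=\{0\}$) and then chasing the identity $(\mathcal J\oplus\col(\bB_{1}))\cap(\mathcal J\oplus\col(\bB_{2}))=\mathcal J$. The delicate point is the interplay of two different kinds of independence — $\mathcal B=\col(\bB_{1})\oplus\col(\bB_{2})$ is a direct sum that need not be orthogonal, whereas $\mathcal J$ is orthogonal to all of $\mathcal B$ — and it is precisely this combination that collapses the intersection onto $\mathcal J$; it is easy to slip by $r_{0}$ in the dimension counts. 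I would also treat the degenerate cases $r_{0}=0$ or some $r_{k}=0$ explicitly, since several of the projector arguments trivialize there.
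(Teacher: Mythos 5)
Your uniqueness argument is correct and complete, and it is in fact more than the paper supplies: the paper does not prove this proposition but defers to Theorem~1.1 in the supplement of Lock et al.\ (2013). Your mechanism --- under $\bJ^T\bB=\0$ the whole decomposition is the function $\bJ_k=P_{\mathcal J}\bTheta_k$ of the single subspace $\mathcal J=\col(\bJ)$, which is then pinned down as $\col(\bTheta_1)\cap\col(\bTheta_2)$ --- is essentially the JIVE identifiability argument, and each step checks out: the rank hypotheses force $\col(\bJ_k)=\mathcal J$ and $\col(\bTheta_k)=\mathcal J\oplus\col(\bB_k)$, the condition $\rank(\bB)=r_1+r_2$ gives $\col(\bB_1)\cap\col(\bB_2)=\{0\}$, and the collapse $(\mathcal J\oplus\col(\bB_1))\cap(\mathcal J\oplus\col(\bB_2))=\mathcal J$ correctly uses \emph{both} $\mathcal J\perp\mathcal B$ and the trivial intersection of the individual column spaces.

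The gap is in the existence half. You assert that correct specification ``also supplies'' $\rank(\bTheta_1,\bTheta_2)=r_0+r_1+r_2$, equivalently $\dim\bigl(\col(\bTheta_1)\cap\col(\bTheta_2)\bigr)=r_0$. This does not follow from the stated hypotheses: for a \emph{non-orthogonal} decomposition the conditions $\mathcal J\cap\col(\bB_k)=\{0\}$ for each $k$ separately do not force $\mathcal J\cap\bigl(\col(\bB_1)+\col(\bB_2)\bigr)=\{0\}$. Concretely, with $n=3$, $p_1=p_2=2$, take
\begin{equation*}
\bJ_1=\bJ_2=\bpm 1&0\\0&0\\0&0\epm,\qquad
\bB_1=\bpm 0&0\\0&1\\0&0\epm,\qquad
\bB_2=\bpm 0&1\\0&1\\0&0\epm .
\end{equation*}
All hypotheses hold with $r_0=r_1=r_2=1$ (in particular $\rank(\bB)=2$ and $\rank(\bTheta_k)=2$), yet $\col(\bTheta_1)=\col(\bTheta_2)$ is two-dimensional, so by your own uniqueness argument no orthogonal decomposition with these ranks can exist. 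Hence either the proposition must be read as a pure ``at most one'' statement --- which is all that identifiability of Model~(2.1) requires, since that model produces an orthogonal decomposition by construction --- or the hypothesis $\rank(\bTheta_1,\bTheta_2)=r_0+r_1+r_2$ must be added explicitly, after which your existence construction does go through (one checks $\rank(P_{\mathcal J}\bTheta_k)$, $\rank\bigl((\bI-P_{\mathcal J})\bTheta_k\bigr)$ and $\rank(\bB)$ by the same rank--nullity counts you already use). This imprecision is arguably inherited from the paper's restatement rather than introduced by you, but as written your existence step is not justified.
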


In Model \eqref{GCCA}, we have $\bJ_k=\1\bmu_k^T+\bU_0\bV_k^T$ and $\bB_k=\bU_k\bA_k^T$ ($k=1,2$).
Our first identifiability condition is equivalent to the rank prerequisite in the proposition 2.1.
The second condition guarantees $\bJ^T\bB=\0$.
Hence the joint and individual patterns of our model are uniquely defined.
Furthermore, our last identifiability condition is the standard condition that guarantees the uniqueness of the singular value decomposition (SVD) of a matrix \citep{golub2012matrix}.

%
%

\section{Association Coefficient and Permutation Test}\label{sec:asso}
\subsection{Association Coefficient}
Model \eqref{GCCA} specifies the joint and individual structure of the natural parameter matrices underlying the two data sets.
The relative weights of the joint structure can be used to measure the strength of association between the two data sources.
Intuitively, if the joint structure dominates the individual structure, the latent generating schemes of the two data sets are coherent. Consequently, the two data sources are deemed highly associated.
On the contrary, if the joint signal is weak, each data set roughly follows an independent EPCA generative model~\citep{collins2001generalization}, and hence the two data sources are unrelated. To formalize this idea, we define an association coefficient between the two data sets as follows.
\begin{defin}\label{d:rho}
Let $\bX_1\in\real^{n\times p_1}$ and $\bX_2 \in \real^{n\times p_2}$ be two data sets with $n$ matched samples, and assume $\bX_k$ ($k=1,2$) follows an exponential family distribution with the entrywise underlying natural parameter matrix $\bTheta_k$. Let $\wbTheta_k$ be the column centered $\bTheta_k$. The \textbf{association coefficient} between $\bX_1$ and $\bX_2$ is defined as
\be\label{ac}
    \rho(\bX_1,\bX_2)=\frac{\|\wbTheta_1^T\wbTheta_2\|_\star}{\|\wbTheta_1\|_\mathbb{F}\|\wbTheta_2\|_\mathbb{F}},
\ee
where $\|\cdot\|_\star$ and $\|\cdot\|_\mathbb{F}$ represent the nuclear norm and Frobenius norm of a matrix, respectively.
In particular, under Model \eqref{GCCA} with the identifiability conditions, the association coefficient has the expression $$\rho(\bX_1,\bX_2)={\|\bV_1\bU_0^T\bU_0\bV_2^T+\bA_1\bU_1^T\bU_2\bA_2^T\|_\star\over\|\bU_0\bV_1^T+\bU_1\bA_1^T\|_\mathbb{F}\|\bU_0\bV_2^T+\bU_2\bA_2^T\|_\mathbb{F}}.$$
\end{defin}

The definition of the association coefficient \eqref{ac} only depends on the natural parameter matrix underlying each data set. It does not rely on our model assumption. Thus it is applicable in a broad context.
Furthermore, the association coefficient satisfies the following properties.
The proof can be found in Section A of the supplementary material.

\begin{prop}
\bi
\item[(i)] The association coefficient $\rho(\bX_1,\bX_2)$ is bounded between 0 and 1.
\item[(ii)] $\rho(\bX_1,\bX_2)=0$ if and only if the column spaces of $\wbTheta_1$ and $\wbTheta_2$ are mutually orthogonal.
\item[(iii)] $\rho(\bX_1,\bX_2)=1$ if $\wbTheta_1$ and $\wbTheta_2$ have the same left singular vectors and proportional singular values.
\ei
\end{prop}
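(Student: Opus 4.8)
The plan is to dispatch the three claims in turn. Throughout I assume $\wbTheta_1\ne\0$ and $\wbTheta_2\ne\0$, so that the denominator in \eqref{ac} is strictly positive and $\rho$ is well defined. Claim (ii) together with the lower bound in (i) is immediate: since $\|\cdot\|_\star\ge 0$ and the denominator is positive, $\rho(\bX_1,\bX_2)\ge 0$, with equality iff $\|\wbTheta_1^T\wbTheta_2\|_\star=0$; and because $\|\cdot\|_\star$ is a genuine norm, this holds iff $\wbTheta_1^T\wbTheta_2=\0$. The entries of $\wbTheta_1^T\wbTheta_2$ are exactly the Euclidean inner products of columns of $\wbTheta_1$ with columns of $\wbTheta_2$, so $\wbTheta_1^T\wbTheta_2=\0$ is equivalent to $\col(\wbTheta_1)\perp\col(\wbTheta_2)$, which proves (ii).

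For the upper bound in (i) it suffices to establish the matrix inequality $\|\wbTheta_1^T\wbTheta_2\|_\star\le\|\wbTheta_1\|_\mathbb{F}\|\wbTheta_2\|_\mathbb{F}$, after which dividing by the positive denominator yields $\rho\le 1$. I would expand the product into rank-one pieces, $\wbTheta_1^T\wbTheta_2=\sum_{k=1}^{n}\ba_k\bb_k^T$, where $\ba_k$ is the $k$-th row of $\wbTheta_1$ regarded as a column vector and $\bb_k^T$ is the $k$-th row of $\wbTheta_2$; since $\|\ba_k\bb_k^T\|_\star=\|\ba_k\|_2\|\bb_k\|_2$, the triangle inequality for the nuclear norm followed by the Cauchy--Schwarz inequality gives
\[
\|\wbTheta_1^T\wbTheta_2\|_\star\le\sum_{k=1}^{n}\|\ba_k\|_2\|\bb_k\|_2\le\Big(\sum_{k=1}^{n}\|\ba_k\|_2^2\Big)^{1/2}\Big(\sum_{k=1}^{n}\|\bb_k\|_2^2\Big)^{1/2}=\|\wbTheta_1\|_\mathbb{F}\|\wbTheta_2\|_\mathbb{F}.
\]
An equivalent route is nuclear--spectral duality: $\|\wbTheta_1^T\wbTheta_2\|_\star=\max_{\|\bQ\|_2\le1}\langle\wbTheta_1\bQ,\wbTheta_2\rangle\le\|\wbTheta_1\bQ\|_\mathbb{F}\|\wbTheta_2\|_\mathbb{F}\le\|\wbTheta_1\|_\mathbb{F}\|\wbTheta_2\|_\mathbb{F}$. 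This matrix inequality is the only step that is not purely mechanical, and I expect it to be the main (if modest) obstacle.

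For (iii), suppose $\wbTheta_1=\bU\bD_1\bV_1^T$ and $\wbTheta_2=\bU\bD_2\bV_2^T$ are singular value decompositions sharing the left-singular-vector matrix $\bU$, with $\bD_2=c\bD_1$ for some $c>0$ (proportional singular values), where $\bU,\bV_1,\bV_2$ have orthonormal columns and $\bD_1,\bD_2$ are nonnegative diagonal. Then $\wbTheta_1^T\wbTheta_2=\bV_1\bD_1(\bU^T\bU)\bD_2\bV_2^T=\bV_1(\bD_1\bD_2)\bV_2^T$, and since $\bD_1\bD_2$ is nonnegative diagonal while $\bV_1,\bV_2$ have orthonormal columns, this factorization exhibits the singular values, so $\|\wbTheta_1^T\wbTheta_2\|_\star=\tr(\bD_1\bD_2)=c\,\tr(\bD_1^2)=c\|\bD_1\|_\mathbb{F}^2$. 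Combined with $\|\wbTheta_1\|_\mathbb{F}=\|\bD_1\|_\mathbb{F}$ and $\|\wbTheta_2\|_\mathbb{F}=\|\bD_2\|_\mathbb{F}=c\|\bD_1\|_\mathbb{F}$, this gives $\rho(\bX_1,\bX_2)=c\|\bD_1\|_\mathbb{F}^2/(c\|\bD_1\|_\mathbb{F}^2)=1$, completing the argument.
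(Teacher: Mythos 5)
Your proof is correct, and parts (ii) and (iii) coincide with the paper's argument essentially line for line (the paper writes $\bD_1=c\bD_2$ rather than $\bD_2=c\bD_1$, which is immaterial). The only genuine divergence is in the key inequality $\|\wbTheta_1^T\wbTheta_2\|_\star\le\|\wbTheta_1\|_\mathbb{F}\|\wbTheta_2\|_\mathbb{F}$ for the upper bound in (i). The paper establishes this via a variational characterization of the nuclear norm, namely the lemma $\|\bX\|_\star=\min_{\bA,\bB:\,\bX=\bA\bB}\|\bA\|_\mathbb{F}\|\bB\|_\mathbb{F}$, proved through the SVD of $\bX$ together with a Cauchy--Schwarz bound on $\tr(\bU^T\bA\bB\bV)$; the desired inequality then follows by exhibiting $\wbTheta_1^T\cdot\wbTheta_2$ as one admissible factorization. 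You instead expand the product into rank-one outer products over the sample index, $\wbTheta_1^T\wbTheta_2=\sum_k\ba_k\bb_k^T$, and apply the triangle inequality for $\|\cdot\|_\star$ followed by scalar Cauchy--Schwarz; your duality route $\|\bM\|_\star=\max_{\|\bQ\|_2\le1}\langle\bQ,\bM\rangle$ is likewise valid. Your argument is more elementary in that it only needs the upper-bound direction and never has to establish the variational identity as an equality, while the paper's lemma is a slightly stronger, reusable statement (its attainment half is what makes the bound tight and foreshadows the equality case in (iii)). Your explicit caveat that $\wbTheta_1,\wbTheta_2\ne\0$ so the denominator is positive is a small point the paper leaves implicit.
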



The first property puts the association coefficient on scale, making it similar to the conventional notion of correlation.
A smaller value means weaker association, and vice versa.
The second and third properties establish the conditions for ``no association" and ``perfect association", respectively.
We remark that the second property provides a necessary and sufficient condition for $\rho(\bX_1,\bX_2)=0$, while the third property only provides a sufficient condition for $\rho(\bX_1,\bX_2)=1$.
In the context of Model \eqref{GCCA}, we have the following corollary.
\begin{cor}
Suppose Model \eqref{GCCA} has correctly specified ranks and satisfies the identifiability conditions. Then,
\bi
\item[(i)] $\rho(\bX_1,\bX_2)=0$, if and only if $\bU_0=\0$ and $\bU_1^T\bU_2=\0$;
\item[(ii)] $\rho(\bX_1,\bX_2)=1$, if $\bU_1=\0$, $\bU_2=\0$, $\bV_1^T\bV_1=c\bI$ and $\bV_2^T\bV_2=(1-c)\bI$ for some constant $0<c<1$.
\ei
\end{cor}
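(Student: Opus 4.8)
The first step is to rewrite the ingredients of $\rho$ under Model~\eqref{GCCA}. Since the identifiability conditions make the score columns centered ($\1^T(\bU_0,\bU_1,\bU_2)=\0$), column-centering removes only the intercepts, so $\wbTheta_k=\bU_0\bV_k^T+\bU_k\bA_k^T$; and since $\bU_0^T\bU_1=\0$ and $\bU_0^T\bU_2=\0$, the two cross terms in $\wbTheta_1^T\wbTheta_2$ cancel, leaving
\[
\wbTheta_1^T\wbTheta_2=\bV_1(\bU_0^T\bU_0)\bV_2^T+\bA_1(\bU_1^T\bU_2)\bA_2^T,
\]
which is exactly the model-specific expression recorded in Definition~\ref{d:rho}. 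I will work from this form throughout. Note also that correct rank specification forces $\bU_0$ (when $r_0\ge 1$) and each $\bV_k,\bA_k$ to have full column rank, so all denominators appearing below are nonzero.

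For part (i) I would invoke part (ii) of the preceding proposition: $\rho(\bX_1,\bX_2)=0$ iff the column spaces of $\wbTheta_1$ and $\wbTheta_2$ are mutually orthogonal, i.e.\ iff $\wbTheta_1^T\wbTheta_2=\0$. The ``if'' direction is then immediate, since $\bU_0=\0$ kills the first summand and $\bU_1^T\bU_2=\0$ kills the second. For ``only if,'' suppose $\wbTheta_1^T\wbTheta_2=\0$, i.e.\ $\bV_1(\bU_0^T\bU_0)\bV_2^T=-\bA_1(\bU_1^T\bU_2)\bA_2^T$. The left-hand side has column space contained in $\col(\bV_1)$ and the right-hand side in $\col(\bA_1)$; the first identifiability condition says the columns of $\bmu_1,\bV_1,\bA_1$ are jointly linearly independent, so $\col(\bV_1)\cap\col(\bA_1)=\{\0\}$ and both sides must vanish separately. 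From $\bV_1(\bU_0^T\bU_0)\bV_2^T=\0$ and the full column rank of $\bV_1,\bV_2$, I obtain $\bU_0^T\bU_0=\0$, hence $\bU_0=\0$; from $\bA_1(\bU_1^T\bU_2)\bA_2^T=\0$ together with $\bA_1^T\bA_1=\bA_2^T\bA_2=\bI$, I obtain $\bU_1^T\bU_2=\0$.

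For part (ii), under $\bU_1=\bU_2=\0$ we have $\wbTheta_1=\bU_0\bV_1^T$ and $\wbTheta_2=\bU_0\bV_2^T$. Set $\bD:=\bU_0^T\bU_0$, which is a positive diagonal matrix by the orthogonal-columns condition, factor $\bU_0=\bP\bD^{1/2}$ with $\bP^T\bP=\bI$, and put $\bW_1:=\bV_1/\sqrt{c}$ and $\bW_2:=\bV_2/\sqrt{1-c}$, which have orthonormal columns because $\bV_1^T\bV_1=c\bI$, $\bV_2^T\bV_2=(1-c)\bI$ and $0<c<1$. Then $\wbTheta_1=\sqrt{c}\,\bP\bD^{1/2}\bW_1^T$ and $\wbTheta_2=\sqrt{1-c}\,\bP\bD^{1/2}\bW_2^T$ are singular value decompositions sharing the left singular vectors $\bP$, with singular values differing by the fixed factor $\sqrt{(1-c)/c}$; part (iii) of the preceding proposition then gives $\rho(\bX_1,\bX_2)=1$. (Equivalently, a one-line computation shows the numerator of $\rho$ equals $\sqrt{c(1-c)}\,\tr\bD$ and the denominator equals $\sqrt{c\,\tr\bD}\cdot\sqrt{(1-c)\,\tr\bD}$, whose ratio is $1$.)

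The one genuinely delicate point is the ``only if'' half of (i): one cannot pass from $\wbTheta_1^T\wbTheta_2=\0$ to termwise vanishing without the column-space separation $\col(\bV_1)\cap\col(\bA_1)=\{\0\}$, which is precisely where the linear-independence clause of the first identifiability condition (and the resulting full column rank of the loading matrices) enters. The remaining manipulations are routine bookkeeping with the orthogonality and normalization conditions.
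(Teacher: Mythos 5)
Your proof is correct and follows essentially the same route as the paper's: part (ii) reproduces the paper's explicit SVD construction ($\bL=\bU_0\bD_0^{-1/2}$, $\bV_k$ rescaled by $\sqrt{c}$ and $\sqrt{1-c}$) followed by an appeal to part (iii) of the preceding proposition. For part (i) you work with the vanishing of the Gram matrix $\wbTheta_1^T\wbTheta_2$ and separate the two summands via $\col(\bV_1)\cap\col(\bA_1)=\{\0\}$, whereas the paper equivalently observes $\col(\wbTheta_k)=\col((\bU_0,\bU_k))$ and reads off the orthogonality conditions; both rest on the same identifiability conditions and are interchangeable.
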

Conceptually, the association coefficient is zero when the joint structure is void and the individual patterns are mutually orthogonal in both data sets.
Perhaps less obvious are the conditions for the two data sets to have the association coefficient exactly equal to one.
Not only the individual structure does not exist, but the columns of $\bV_1$ (and $\bV_2$) must be mutually orthogonal with the same norm.
It turns out the additional rigor is necessary.
It reduces the risk of overestimating the association under model misspecification.
See Section A of the supplementary material for some concrete examples.


\subsection{Permutation Test}\label{sec:test}
To formally assess the statistical significance of the association between $\bX_1$ and $\bX_2$, we consider the following hypothesis test:
\bes
\mbox{H}_0: \rho(\bX_1,\bX_2)=0 \ \mbox{ vs } \ \mbox{H}_1: \rho(\bX_1,\bX_2)>0.
\ees
We use the sample version of the association coefficient $\rho(\bX_1,\bX_2)$ as the test statistic, and exploit a permutation-based testing procedure.

More specifically, assume $\wbTheta_1$ and $\wbTheta_2$ are estimated from data (see Section \ref{sec:alg} for parameter estimation).
The original test statistic, denoted by $\rho_0$, can be obtained from \eqref{ac}.
Now we describe the permutation procedure.
Let $\bP_\pi$ be an $n\times n$ permutation matrix with the random permutation $\pi: \{1,\cdots,n\}\mapsto\{1,\cdots, n\}$.
We keep $\bX_1$ fixed and permute the rows of $\bX_2$ based on $\pi$.
As a result, the association between the two data sets is removed while the respective structure is reserved.
The corresponding association coefficient for the permuted data, denoted by $\rho_\pi$, is a random sample under the null hypothesis.
{\color{black}Because the natural parameters are defined individually and permuted along with $\bX_2$, the column centered natural parameter matrix for $\bP_\pi\bX_2$ is $\bP_\pi\wbTheta_2$.}
Thus, we directly obtain the expression of $\rho_\pi$ as
\[
\rho_\pi=\frac{\|\wbTheta_1^T\bP_\pi\wbTheta_2\|_\star}{\|\wbTheta_1\|_\mathbb{F}\|\bP_\pi\wbTheta_2\|_\mathbb{F}}=\frac{\|\wbTheta_1^T\bP_\pi\wbTheta_2\|_\star}{\|\wbTheta_1\|_\mathbb{F}\|\wbTheta_2\|_\mathbb{F}}.
\]
We repeat the permutation procedure multiple times and get a sampling distribution of the association coefficient under the null.
Consequently,  the empirical p-value is calculated as the proportion of permuted values greater than or equal to the original test statistic $\rho_0$.
A small p-value warrants further investigation on the dependency structure between the two data sets.

\section{Model Fitting Algorithm}\label{sec:alg}
In this section, we elaborate an alternating algorithm to estimate the parameters in Model \eqref{GCCA}. We show that the model fitting procedure can be formulated as a collection of GLM fitting problems.
We also discuss how to incorporate variable selection into the framework via a regularization approach.
{\color{black}When fitting the model, we assume the joint and individual ranks are fixed.
We briefly introduce how to select the ranks at the end of this section. A more detailed data-driven rank selection approach is presented in Section D of the supplementary material.}

\subsection{Alternating Iteratively Reweighted Least Square}
The model parameters in \eqref{GCCA} consist of the intercept $\bmu_k$, the joint score $\bU_0$, the individual score $\bU_k$, the joint loading $\bV_k$, and the individual loading $\bA_k$ ($k=1,2$).
To estimate the parameters, we maximize the joint log likelihood of the observed data $\bX_1$ and $\bX_2$, denoted by $\ell(\bX_1,\bX_2|\bTheta_1,\bTheta_2)$.
Under the independence assumption, the joint log likelihood can be written as the summation of the individual log likelihoods for each value. Namely, we have
\be\label{lik}
\ell(\bX_1,\bX_2|\bTheta_1,\bTheta_2)=\sum_{i=1}^n\sum_{j=1}^{p_1}\ell_1(x_{1,ij}|\theta_{1,ij})+\sum_{i=1}^n\sum_{j=1}^{p_2}\ell_2(x_{2,ij}|\theta_{2,ij}),
\ee
where $\bX_k=(x_{k,ij})$ and $\bTheta_k=(\theta_{k,ij})$, and $\ell_k$ is the log likelihood function for the $k$th distribution ($k=1,2$).
In particular, $\bTheta_1$ and $\bTheta_2$ have the structured decomposition in \eqref{GCCA}.
We estimate the parameters in a block-wise coordinate descent fashion: we alternate the estimation between the joint and the individual structure, and between the scores and the loadings (with the intercepts), until convergence.


More specifically, we first fix the joint structure $\{\bU_0, \bV_1, \bV_2\}$, and estimate the individual structure for each data set.
Since the first term in \eqref{lik} only involves $\{\bmu_1,\bU_1,\bA_1\}$, and the second term only involves $\{\bmu_2,\bU_2,\bA_2\}$, the parameter estimation is separable.
We focus on the first term, and the second term can be updated similarly.
We first fix $\bmu_1$ and $\bA_1$ to estimate $\bU_1$.
Let $\bu_{k,(i)}$ be the column vector of the $i$th row of $\bU_k$ ($k=0,1,2$).
The column vector of the $i$th row of $\bTheta_1$, denoted by $\btheta_{1,(i)}$, can be expressed as
\bes
\btheta_{1,(i)}=\bmu_1+\bV_1\bu_{0,(i)}+\bA_1\bu_{1,(i)},
\ees
where everything is fixed except for $\bu_{1,(i)}$.
Noticing that  the $i$th row of $\bX_1$ (i.e., $\bx_{1,(i)}$) and $\btheta_{1,(i)}$ satisfy
\bes\label{GLM}
\Ex(\bx_{1,(i)})=b_1'\left(\btheta_{1,(i)}\right),
\ees
we exactly obtain a GLM with the canonical link.
Namely, $\bx_{1,(i)}$ is a generalized response vector; $\bA_1$ is a $p_1\times r_1$ predictor matrix; $\bmu_1+\bV_1\bu_{0,(i)}$ is an offset; $\bu_{1,(i)}$ is a coefficient vector.
The estimate of $\bu_{1,(i)}$ can be obtained via an iteratively reweighted least squares (IRLS) algorithm \citep{mccullagh1989generalized}.
Furthermore, different rows of $\bU_1$ can be estimated in parallel.
Overall, the estimation of $\bU_1$ is formulated as $n$ parallel GLM fitting problems.
Once $\bU_1$ is estimated, we fix $\bU_1$ and formulate the estimation of $\bmu_1$ and $\bA_1$ as $p_1$ GLMs in a similar fashion.
Consequently, we update the estimate of the individual structure.


Now we estimate the joint structure with fixed individual structure.
When the joint score $\bU_0$ is fixed, the estimation of $\{\bmu_1,\bV_1\}$ and $\{\bmu_2,\bV_2\}$ resembles the estimation of the individual counterparts.
With fixed $\{\bmu_1,\bmu_2,\bV_1,\bV_2\}$, the estimation of $\bU_0$ is slightly different, because it is shared by two data types with different distributions.
Let $\btheta_{0,(i)}=(\btheta_{1,(i)}^T,\btheta_{2,(i)}^T)^T$ be a column vector concatenating the column vectors of the $i$th rows of $\bTheta_1$ and $\bTheta_2$.
Then we have
\bes
\btheta_{0,(i)}=\left(\bmu_1^T+\bu^T_{1,(i)}\bA_1^T\ , \ \bmu_2^T+\bu^T_{2,(i)}\bA_2^T\right)^T +\bV_0\bu_{0,(i)},
\ees
where $\bV_0=(\bV_1^T,\bV_2^T)^T$ is the concatenated joint loading matrix. Notice that
\bes
\Ex(\bx_{1,(i)})=b_1'(\btheta_{1,(i)}),\quad \Ex(\bx_{2,(i)})=b_2'(\btheta_{2,(i)}).
\ees
The formula corresponds to a non-standard GLM where the response consists of observations from different distributions, and different link functions are used correspondingly.
Following the standard GLM model fitting algorithm verbatim, we obtain a slightly modified version of the IRLS algorithm to address this problem. More details can be found in Section B of the supplementary material.

The separately estimated parameters, denoted by $\{\widehat{\bmu_1},\widehat{\bmu_2},\widehat{\bU_0},\widehat{\bU_1},\widehat{\bU_2},\widehat{\bV_1},\\
\widehat{\bV_2},\widehat{\bA_1},\widehat{\bA_2}\}$, may not satisfy the identifiability conditions in Section \ref{sec:id}. In order to find an equivalent set of parameters satisfying the conditions, we conduct the following normalization procedure after each iteration.
We first project the columns of the individual scores $\bU_1$ and $\bU_2$ to the orthogonal complement of the column space of $(\1,\bU_0)$.
The obtained individual score matrices are denoted by $\bU_1^\star$ and $\bU_2^\star$, which are column centered and orthogonal to the columns in $\bU_0$.
The new individual patterns are $\bU_1^\star\widehat{\bA_1}^T$ and $\bU_2^\star\widehat{\bA_2}^T$ accordingly.
To rule out arbitrary rotations and scale changes, we apply the SVD to each individual structure, and let the left singular vectors to absorb the singular values.
As a result, we have
\[
\widetilde{\bU_1}\widetilde{\bA_1}^T=\bU_1^\star\widehat{\bA_1}^T,\quad \widetilde{\bU_2}\widetilde{\bA_2}^T=\bU_2^\star\widehat{\bA_2}^T,
\]
where $\{\widetilde{\bU_1},\widetilde{\bU_2},\widetilde{\bA_1},\widetilde{\bA_2}\}$ satisfies the identifiability conditions.
Next, we add the remaining individual structure to the joint structure, and obtain the new joint structure as
\[
\left(\1\widehat{\bmu_1}^T+\widehat{\bU_0}\widehat{\bV_1}^T+\widehat{\bU_1}\widehat{\bA_1}^T-\widetilde{\bU_1}\widetilde{\bA_1}^T ,\
\1\widehat{\bmu_2}^T+\widehat{\bU_0}\widehat{\bV_2}^T+\widehat{\bU_2}\widehat{\bA_2}^T-\widetilde{\bU_2}\widetilde{\bA_2}^T\right).
\]
Denote the new column mean vector as $\left(\widetilde{\bmu_1}^T,\widetilde{\bmu_2}^T\right)^T$, and center each column of the above joint structure.
Subsequently, we apply SVD to the column-centered joint structure and obtain the new joint score $\widetilde{\bU_0}$ and joint loading $\left(\widetilde{\bV_1}^T,\widetilde{\bV_2}^T\right)^T$.
As a result, the new parameter set $\{\widetilde{\bmu_1},\widetilde{\bmu_2},\widetilde{\bU_0},\widetilde{\bU_1},\\
\widetilde{\bU_2},\widetilde{\bV_1},\widetilde{\bV_2},\widetilde{\bA_1},\widetilde{\bA_2}\}$ satisfies all the conditions, and provides the same likelihood value as the original parameter set.

In summary, we devise an alternating algorithm to estimate the model parameters. Each iteration is formulated as a set of GLMs, fitted by the IRLS algorithm. A step-by-step summary is provided in Algorithm \ref{alg1}.
Because the likelihood value in \eqref{lik} is nondecreasing in each optimization step, and remains constant in the normalization step, the algorithm is guaranteed to converge.
More formally, we have the following proposition.

\begin{prop}\label{prop1}
In each iteration of Algorithm \ref{alg1}, the log likelihood \eqref{lik} is monotonically nondecreasing. If the likelihood function is bounded, the estimates always converge to some stationary point (including infinity).
\end{prop}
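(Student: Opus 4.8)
The plan is to verify the two claims separately: first the monotonicity of the log-likelihood across one iteration, then the convergence under boundedness. The iteration decomposes into finitely many sub-steps of two kinds — GLM-fitting steps (for $\bU_1$, $\bU_2$, then $\{\bmu_1,\bA_1\}$, $\{\bmu_2,\bA_2\}$, then $\{\bmu_1,\bV_1\}$, $\{\bmu_2,\bV_2\}$, then $\bU_0$) and the single normalization step at the end. For monotonicity it suffices to show the objective \eqref{lik} does not decrease at any sub-step. For each GLM-fitting sub-step, fixing all other blocks reduces \eqref{lik} (as a function of the active block alone) to a sum of independent GLM log-likelihoods with canonical link, as the excerpt already establishes; since IRLS for a canonical-link GLM is exactly Fisher scoring / Newton–Raphson on a concave log-likelihood (the cumulant function $b$ is convex, so $-\ell_k$ is convex in the natural parameter, and the composition with the linear predictor preserves concavity in the coefficient vector), each IRLS update weakly increases that partial objective. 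I would state this as a lemma: running IRLS to convergence (or even one monotone step) on a canonical-link GLM does not decrease its log-likelihood. Summing over the independent rows or columns being fit in parallel preserves the inequality. For the $\bU_0$ sub-step the argument is identical modulo the bookkeeping that the ``response'' stacks two distributions with two link functions; the concatenated partial objective is still a sum of per-coordinate exponential-family log-likelihoods, still concave in $\bu_{0,(i)}$, so the modified IRLS is still monotone. Finally, the normalization step leaves \eqref{lik} exactly unchanged: by construction $\widetilde{\bU_1}\widetilde{\bA_1}^T$ equals the projected individual structure $\bU_1^\star\widehat{\bA_1}^T$, the discarded part is folded back into the joint term, and the subsequent re-centering and SVD are exact reparametrizations of the same matrices $\bTheta_1,\bTheta_2$; since \eqref{lik} depends on the parameters only through $\bTheta_1,\bTheta_2$, its value is invariant. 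Chaining the inequalities over all sub-steps gives $\ell^{(t+1)} \ge \ell^{(t)}$.

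For the second claim, let $\ell^{(t)}$ denote the log-likelihood after iteration $t$. By monotonicity $\ell^{(t)}$ is nondecreasing; if \eqref{lik} is bounded above, $\ell^{(t)}$ converges to some finite limit $\ell^\star$. The standard route here is to invoke Zangwill's global convergence theorem (or the classical block-coordinate-ascent convergence results, e.g.\ along the lines of Bezdek–Hathaway or Wu's treatment of EM): the iteration map is a composition of point-to-set maps each of which is closed and strictly increasing off the solution set, the objective is continuous, and the iterates (after normalization) lie in a set where the identifiability constraints hold; hence every limit point of $\{(\widehat{\bmu},\widehat{\bU},\widehat{\bV},\widehat{\bA})^{(t)}\}$ is a fixed point of the algorithm, which — because each block update returns a stationary point of \eqref{lik} in that block with the others held fixed — is a coordinate-wise stationary point, i.e.\ a stationary point of the full likelihood. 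The parenthetical ``including infinity'' is handled by noting that if some block of parameters is unbounded along the iteration, the corresponding $\bTheta_k$ entries diverge and \eqref{lik} still converges (by monotonicity and the boundedness hypothesis), so the conclusion is read in the two-point compactification; no genuine obstruction arises.

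The main obstacle I anticipate is making the ``stationary point'' conclusion fully rigorous rather than heuristic. Two technical gaps need care. First, one must confirm that a single IRLS step — not IRLS run to convergence — already suffices for monotonicity, or else argue that the inner IRLS loop terminates; the cleanest fix is to recall that for canonical-link GLMs the IRLS iterates themselves form a monotone sequence on a concave objective (this is the classical fact that Fisher scoring coincides with Newton on the natural parameter and the step is an ascent step with unit step size for canonical links when the working response is constructed correctly), so the inner loop is itself monotone and the outer inequality holds regardless of how many inner iterations are taken. Second, the block updates are only guaranteed to return a \emph{global} maximizer of the respective partial concave problem, which is what one wants, but the normalization step reshuffles the parametrization between blocks, so one must check the algorithm map is still closed; this is routine because SVD and orthogonal projection are continuous on the relevant (generically full-rank) domain, but it should be stated. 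I would flag that if one only wants the weaker conclusion ``$\ell^{(t)}$ converges'' — which is all Proposition \ref{prop1} literally asserts before the stationarity clause — then only the monotonicity lemma plus the boundedness hypothesis is needed, and the Zangwill machinery can be relegated to a remark.
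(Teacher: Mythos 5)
Your overall strategy --- monotonicity of each GLM sub-step plus exact invariance of the likelihood under the normalization step, followed by ``monotone and bounded implies convergent'' --- is precisely the argument the paper itself gives; in fact the paper's entire justification is the single sentence preceding the proposition, and no further proof appears in the supplement, so your write-up is already more detailed than the source. One sub-claim, however, is false as stated: it is \emph{not} a classical fact that unit-step IRLS/Newton iterates monotonically increase a canonical-link GLM log-likelihood. IRLS for a canonical link is Newton--Raphson, and Newton steps on a concave objective can overshoot and decrease the objective (logistic regression is the standard counterexample), so per-inner-iteration monotonicity fails in general. The correct route --- which you also name --- is that each block subproblem is concave in the active block, IRLS run to convergence returns its global maximizer, and the previous block estimate is feasible for that subproblem; this alone yields the chained inequality, so you should drop the per-step-Newton claim rather than lean on it. Two further details are worth recording: (i) the invariance of the likelihood under normalization requires checking that the rank-$r_0$ SVD of the recombined joint structure is exact (no truncation), which holds because the reabsorbed individual remainder is the projection of $\widehat{\bU_k}\widehat{\bA_k}^T$ onto the column span of $(\1,\widehat{\bU_0})$, so the centered joint structure has rank at most $r_0$; (ii) the Bernoulli separation phenomenon (block MLE at infinity, acknowledged by the paper's ``including infinity'' and its ridge remedy in the supplement) means the block maximizer need not exist, so your Zangwill/closed-map argument for the stationarity of limit points should be restricted to iterates remaining in a compact set --- that machinery, in any case, goes beyond anything the paper actually proves.
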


\begin{algorithm}[h]
\caption{The Alternating IRLS Algorithm for Fitting Model \eqref{GCCA}}\label{alg1}
\begin{algorithmic}
\State Initialize $\{\bmu_1,\bmu_2,\bU_0,\bU_1,\bU_2,\bV_1,\bV_2,\bA_1,\bA_2\}$;
\While {The likelihood \eqref{lik} has not reached convergence}
\bi
\item Fix the joint structure $\{\bU_0,\bV_1,\bV_2\}$
    \bi
    \item Fix $\{\bmu_1,\bA_1\}$, and estimate each row of $\bU_1$ via parallel GLM
    \item Fix $\bU_1$, and estimate each row of $(\bmu_1,\bA_1)$ via parallel GLM
    \item Fix $\{\bmu_2,\bA_2\}$, and estimate each row of $\bU_2$ via parallel GLM
    \item Fix $\bU_2$, and estimate each row of $(\bmu_2,\bA_2)$ via parallel GLM
    \ei
\item Fix the individual structure $\{\bU_1,\bU_2,\bA_1,\bA_2\}$
    \bi
    \item Fix $\bU_0$, and estimate each row of $(\bmu_1,\bV_1)$ via parallel GLM
    \item Fix $\bU_0$, and estimate each row of $(\bmu_2,\bV_2)$ via parallel GLM
    \item Fix $\{\bmu_1,\bmu_2,\bV_1,\bV_2\}$, and estimate each row of $\bU_0$ via a modified IRLS algorithm in parallel
    \ei
\item Normalize the estimated parameters to retrieve the identifiability conditions
\ei
\EndWhile
\end{algorithmic}
\end{algorithm}

Since the overall algorithm is iterative, we further substitute the IRLS algorithm with a one-step approximation with warm start to enhance computational efficiency.
A detailed description is provided in Section C of the supplementary material.
In our numerical studies, we observe that the one-step approximation algorithm almost always converges to the same values as the full algorithm, but is several fold faster (see Section \ref{sec:sim}).

%

\subsection{Variable Selection}\label{vs}
In practice, it is often desirable to incorporate variable selection into parameter estimation to facilitate interpretation, which is especially relevant when the number of variables is high.
Various regularization frameworks and sparsity methods have been extensively studied in the literature. See \cite{hastie2015statistical} and references therein.

Since Model \eqref{GCCA} is primarily used to investigate the association between the two data sets, it is of great interest to perform variable selection when estimating the joint structure.
In particular, sparse $\bV_1$ and $\bV_2$ facilitate model interpretability.
The variables corresponding to non-zero joint loading entries can be used to interpret the association between the two data sources.

In order to achieve variable selection in the estimation, we modify the normalization step in each iteration of the model fitting algorithm.
In particular, we substitute the SVD of the centered joint structure with the FIT-SSVD method developed by \cite{yang2014sparse}.
The FIT-SSVD method provides sparse estimation of the singular vectors via soft or hard thresholding, while maintaining the orthogonality among the vectors.
By default, an asymptotic threshold is used to automatically determine the sparsity level for each data set.
Consequently, the method is directly embedded into our algorithm to generate sparse estimates.
The final estimates of $\bV_1$ and $\bV_2$ may be sparse, and the estimated parameters satisfy the identifiability conditions.
We remark that FIT-SSVD can be applied to the individual structure as well if desired.

{\color{black}
\subsection{Rank Estimation}
In order to estimate $(r_0,r_1,r_2)$, we adopt a two-step procedure.
The first step is to estimate the ranks of the column centered natural parameter matrices for $\bX_1$, $\bX_2$, and $(\bX_1,\bX_2)$.
In order to achieve that, we devise an $N$-fold cross validation approach. 
The idea is as follows: we first randomly split the entries of a data matrix into $N$ folds; then we withhold one fold of data and use the rest to estimate natural parameter matrices with different ranks via an alternating algorithm; finally we calculate the cross validation score corresponding to each rank by taking the average of squared Pearson residuals of the withheld data. The candidate rank with the smallest score will be selected.
We remark that the approach can flexibly accommodate a data matrix from a single non-Gaussian distribution, or a data matrix consisting of mixed variables from multiple distributions (e.g., $(\bX_1,\bX_2)$).
We apply the approach to $\bX_1$, $\bX_2$, and $(\bX_1,\bX_2)$, respectively, and obtain  the estimated ranks $r_1^\star$, $r_2^\star$, and $r_0^\star$.

In the second step, we solve a system of linear equations to estimate $(r_0,r_1,r_2)$.
From Model \eqref{GCCA} and the identifiability conditions, we have the following relations: $r_0^\star=r_0+r_1+r_2$, $r_1^\star=r_0+r_1$, and $r_2^\star=r_0+r_2$.
Therefore, the estimate of $(r_0,r_1,r_2)$ is obtained by
\bes
r_0=r_1^\star+r_2^\star-r_0^\star, \quad r_1=r_0^\star-r_1^\star, \quad r_2=r_0^\star-r_1^\star.
\ees
A more detailed description of the two-step rank estimation procedure and comprehensive numerical studies can be found in Section D of the supplementary material.
}


\section{CAL500 Music Annotation}\label{sec:CAL500}

In this section, we analyze the CAL500 data. 
The data are publicly available at the {\sf Mulan} database \citep{tsoumakas2011mulan}.
The CAL500 data consist of 502 popular songs.
The audio signal of each song has been analyzed via signal processing methods, and converted to 68 continuous features.
The features are generally partitioned into 5 categories: spectral centroid, spectral flux, spectral roll-off, zero crossings, and Mel-Frequency Cepstral Coefficients (MFCC), measuring different aspects of an audio profile.
In addition, each song has been manually annotated by multiple listeners.
There are 174 total annotations, related to the emotion (36 variables), genre (47), usage (15), instrument (33), characteristic (27) and vocal type (16) of a song.
Each song has been assigned a binary sequence of annotations based on the responses from listeners.
A more detailed description can be found in \cite{turnbull2007towards}.

There are two data sets with matched samples but distinct data types in CAL500.
The primary goal is to understand the association between the two sets of variables (i.e., acoustic features and semantic annotations), and leverage the information to achieve automatic annotation and music retrieval. The proposed GAS framework is suitable for the association analysis.
In the following, we first elaborate the model fitting procedure with the CAL500 data, and then describe the annotation and retrieval performance.

\subsection{Model Fitting}\label{sec:fit}
Let $\bX_1$ denote the continuous acoustic features and $\bX_2$ denote the binary semantic annotations.
We have $n=502$, $p_1=68$ and $p_2=174$.
Each column of $\bX_1$ has been centered and normalized to have unit standard deviation.
Furthermore, we exploit SVD to estimate the standard deviation of the random noise in $\bX_1$ as $\sigma$, and scale the entire data matrix by $1/\sigma$ so that the noise has unit variance.
Consequently, we model the preprocessed data $\bX_1$ by Gaussian distributions with the structured mean matrix $\bTheta_1$ in Model \eqref{GCCA} and unit variance.
We model the binary data matrix $\bX_2$ by Bernoulli distributions with the structured natural parameter matrix $\bTheta_2$ in Model \eqref{GCCA}.

We use a data-driven approach  to estimate the model ranks to be $\widehat{r_0}=3$, $\widehat{r_1}=3$ and $\widehat{r_2}=2$.
A detailed description is provided in Section D of the supplementary material.
Subsequently, we fit Model \eqref{GCCA} to the CAL500 data with the estimated ranks.
We exploit the one-step approximated version of the algorithm without sparsity.
The algorithm converges at high accuracy within 300 iterations, taking less than 3 minutes on a desktop (Intel i5 CPU (3.3GHz) with 8Gb RAM).

We calculate the association coefficient \eqref{ac} based on the estimated parameters and get $\rho=0.265$.
The coefficient indicates a moderate association between the acoustic features and the semantic annotations.
Furthermore, we conduct the permutation-based association test (with 1000 permutations) as  described in Section \ref{sec:test}. The permuted statistics roughly follow a Gaussian distribution (see Figure \ref{fig:test}). The empirical p-value of the test is 0.
Namely, the  association between the acoustic features and the semantic annotations is highly statistically significant.

\begin{figure}[htbp]
  \centering
  \includegraphics[width=2.8in]{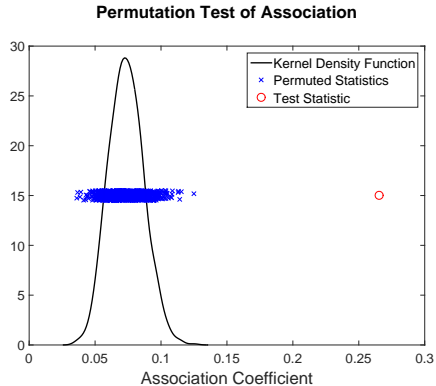}
  \caption{Permutation-based association test for the CAL500 data. The kernel density is estimated from 1000 permuted association coefficients. The original test statistic (red circle) and the permuted statistics (blue cross) are shown in the plot with random jitters on the y axis for the ease of visualization.}\label{fig:test}
\end{figure}

We further investigate the three joint loading vectors.
For each loading, we sort the variables in each data source based on the loading values from large to small.
In the first joint loading vector, annotations corresponding to the largest positive values include emotions such as ``Soft", ``Calming" and ``Loving", and Usage such as "Romancing."
Annotations corresponding to the largest negative values include emotions such as ``Aggressive" and ``Angry", and genres such as ``Metal Hard Rock."
Namely, the first loading primarily captures the emotion of a song.
The corresponding top acoustic features are the MFCCs and the zero crossings, which are known to measure the noisiness of audio signals.
The second joint loading mainly characterizes the attitude of a song (e.g., ``Cheerful" vs ``Not Cheerful", ``Danceable" vs ``Not Danceable").
Music genres such as ``R\&B", ``Soul" and ``Swing" also have large positive loading values on the cheerful side, which is quite intuitive.
The corresponding top acoustic features include the MFCCs and the zero crossings, as well as the spectral centroid, which measures the `brightness' of the music texture.
The third joint loading captures more subtle patterns.
For annotations, genres such as ``Jazz" and ``Bebop" and characteristics such as ``Changing Energy Level" and ``Positive Feelings" have large positive values, while genres ``Country", ``Roots Rock", ``Hip-Hop" and ``Rap" have large negative values.
The top acoustic features are dominated by the MFCCs.

\subsection{Automatic Annotation}
Under the GAS framework, we propose the following procedure to automatically annotate a new song based on its acoustic features.
Suppose we have all the model parameters, $\{\bmu_k, \bU_0, \bV_k, \bU_k, \bA_k; k=1,2\}$, estimated from a training data set. Given a new song with the acoustic feature vector $\bx_1^\star\in \real^{p_1}$, we first estimate the corresponding joint and individual scores $\left({\bu_0^\star}^T,{\bu_1^\star}^T\right)^T$ by regressing $\bx_1^\star-\bmu_1$ on $(\bV_1,\bA_1)$.
Next, we extract the joint score $\bu_0^\star$ and obtain an estimate of the annotation natural parameters via $\btheta_2^\star=\bmu_2+\bV_2\bu_0^\star$. Finally, we convert the natural parameters to probabilities via the entry-wise logistic transformation $\bpi^\star=\exp(\btheta_2^\star)/(1+\exp(\btheta_2^\star))$. Consequently, each entry of $\bpi^\star$ provides the probability of the song having the corresponding annotation. In other words, $\bpi^\star$ is the induced annotation profile of the song.
In practice, one could preset a threshold, and output the semantic descriptions in the vocabulary with probabilities greater than the threshold as the annotation of the song.

To compare the proposed method with existing auto-tagging approaches, we conduct a 10-fold cross validation study on the CAL500 data, similar to that in \cite{turnbull2008semantic}.
For simplicity, we select 500 out of the 502 songs in the data, and randomly partition them into 10 blocks, each having 50 songs.
In each run, we use 452 songs as the training set, and test on the remaining 50 songs.
To be consistent with \cite{turnbull2008semantic}, we annotate each test song with exactly ten annotations (the top ten annotations with the largest probabilities in $\bpi^\star$ according to our method).

The annotation performance is assessed by the mean per-word precision and recall.
More specifically, for each annotation, let $t_{GT}$ be the number of songs in the test set that have the annotation in the human-generated ``ground truth"; let $t_{A}$ be the number of songs that are annotated with the tag by a method; let $t_{TP}$ be the number of ``true positives" that have the tag both in the ground truth and in the automatic annotation prediction. The per-word precision is defined as $t_{TP}/t_{A}$, and the per-word recall is $t_{TP}/t_{GT}$.
The mean per-word precision and recall are calculated by averaging the values across different tags in each cross validation run.
Annotations with undefined precision or recall are omitted when calculating the mean.

We compare the proposed method with the MixHier method \citep{turnbull2008semantic} and the Autotagger method \citep{bertin2008autotagger}.
We also consider two baseline methods, a ``Random" lower bound and an empirical upper bound (denoted by ``UpperBnd"), for precision and recall, as discussed in \cite{turnbull2008semantic}.
Loosely speaking, the Random approach randomly selects ten annotations for each test song based on the observed tag frequencies, and mimics a random guessing procedure.
The UpperBnd approach serves as the best-case scenario.
It uses the ground truth to annotate test songs, and randomly adds or removes tags to meet the ten-annotation requirement.  
The mean and standard deviation of the mean per-word precision and recall for different methods from the 10-fold cross validation are presented in Table \ref{tab:ann}. 

\begin{table}[htbp]
\caption{The CAL500 automatic annotation results. The mean and standard deviation (in parenthesis) for mean per-word precision (``Precision") and mean per-word recall (``Recall") across 10 cross validation runs are presented. 
The best results are bold-faced}\label{tab:ann}
\begin{center}
\begin{tabular}{l|c|c}
\toprule
Method & Precision &  Recall \\
\toprule
Random & 0.144 (0.004) & 0.064 (0.002)\\
UpperBnd & 0.712 (0.007) & 0.375 (0.006) \\ 
\midrule
MixHier & 0.265 (0.007) & {\bf 0.158} (0.006) \\
Autotagger & 0.312 (0.060) & 0.153 (0.015) \\
Proposed & {\bf0.438} (0.051) & 0.078 (0.007)\\
\bottomrule
\end{tabular}
\end{center}
\end{table}

Overall, all three methods are significantly better than random guessing, but considerably worse than the empirical upper bounds.
The suboptimal results may be justified by the moderate association between the acoustic features and the semantic annotations (see Section \ref{sec:fit}).
Namely, only a moderate amount of information in the annotations can be explained by the existing acoustic features. Thus, to further improve the automatic annotation performance, more comprehensive characterization of the audio profile may be needed.

Although a good balance of precision and recall is desired, it has been argued that precision is more relevant for recommender systems \citep{herlocker2000explaining}.
The proposed method has the best precision among all three methods.
Thus, it may provide an effective approach for auto-tagging.
The relatively low recall may be due to the small number of predicted annotations (i.e., 10) per song.
{\color{black}We further increase the number of words used to characterize a song to 20, and redo the analysis. As a result, we get a recall rate of 0.154 with standard deviation 0.015, which is comparable to the competing methods, and a precision rate of 0.330 with standard deviation 0.036, which is still superior to the competing methods.}
We further investigate the complete annotation profile of each song using the proposed method.
Figure \ref{fig:anno} shows four randomly selected examples. The top and bottom bars in each plot correspond to the estimated and true annotation profiles. We particularly order the annotations for visualization convenience.
The proposed method produces sensible results.
It captures the majority of the true annotations with large probabilities, and has much richer patterns.
Whether the additional annotations with high probabilities are false positives or missing tags due to the well-known ``human bias" issue in music tagging \citep{ellis2002quest} remains an open question.

\begin{figure}[htbp]
  \centering
  \includegraphics[width=1.7in]{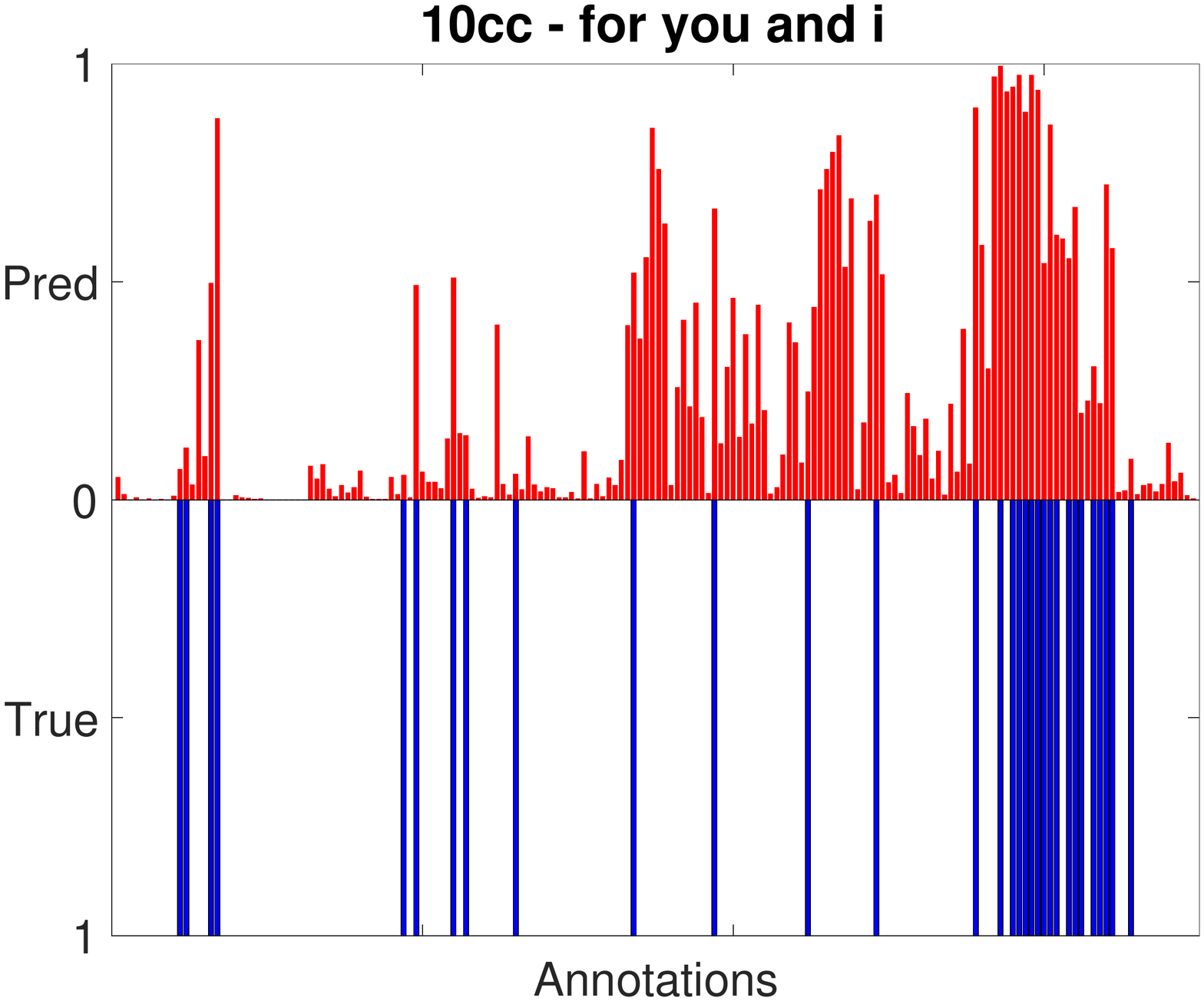}
  \includegraphics[width=1.7in]{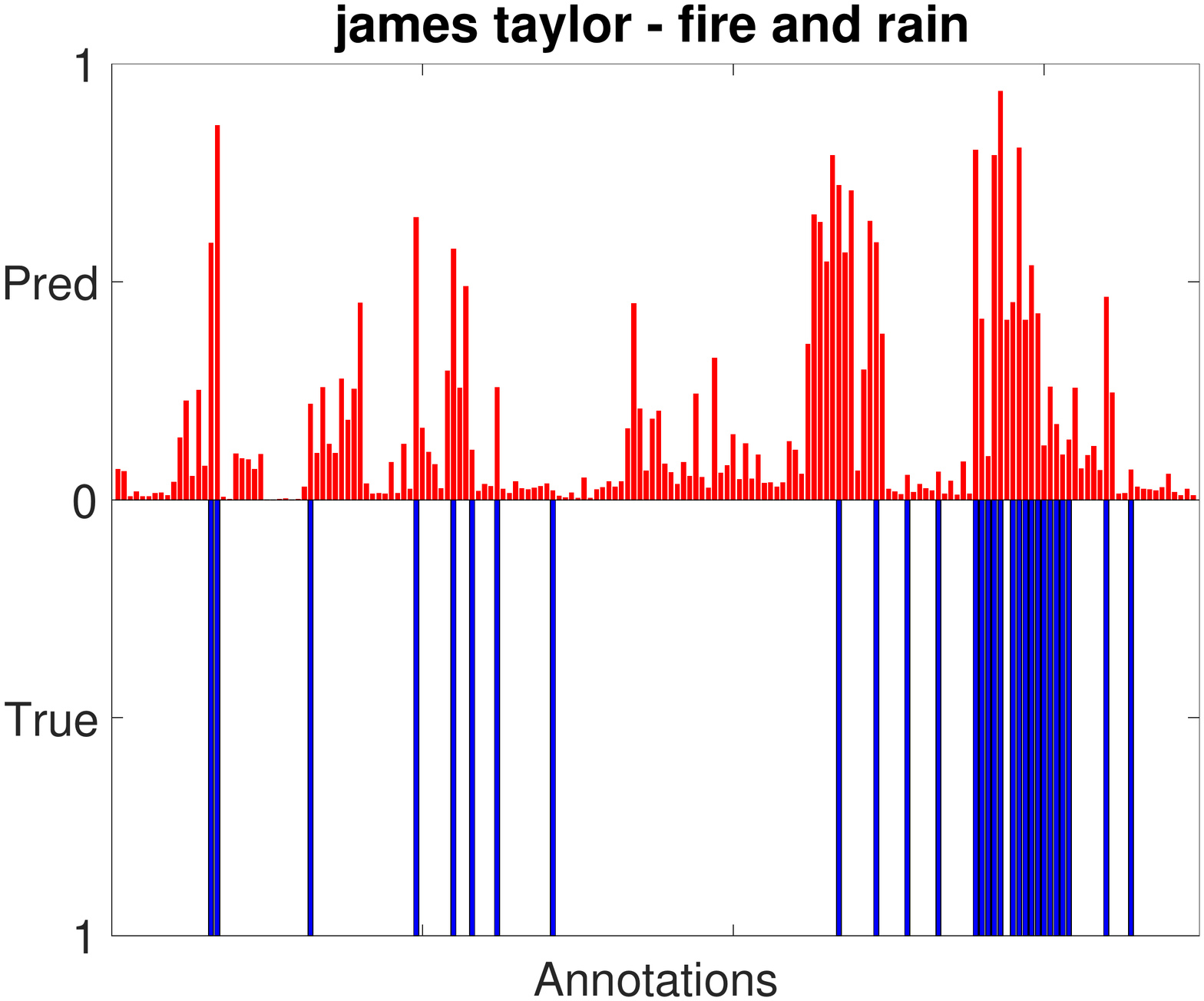}\\
  \includegraphics[width=1.7in]{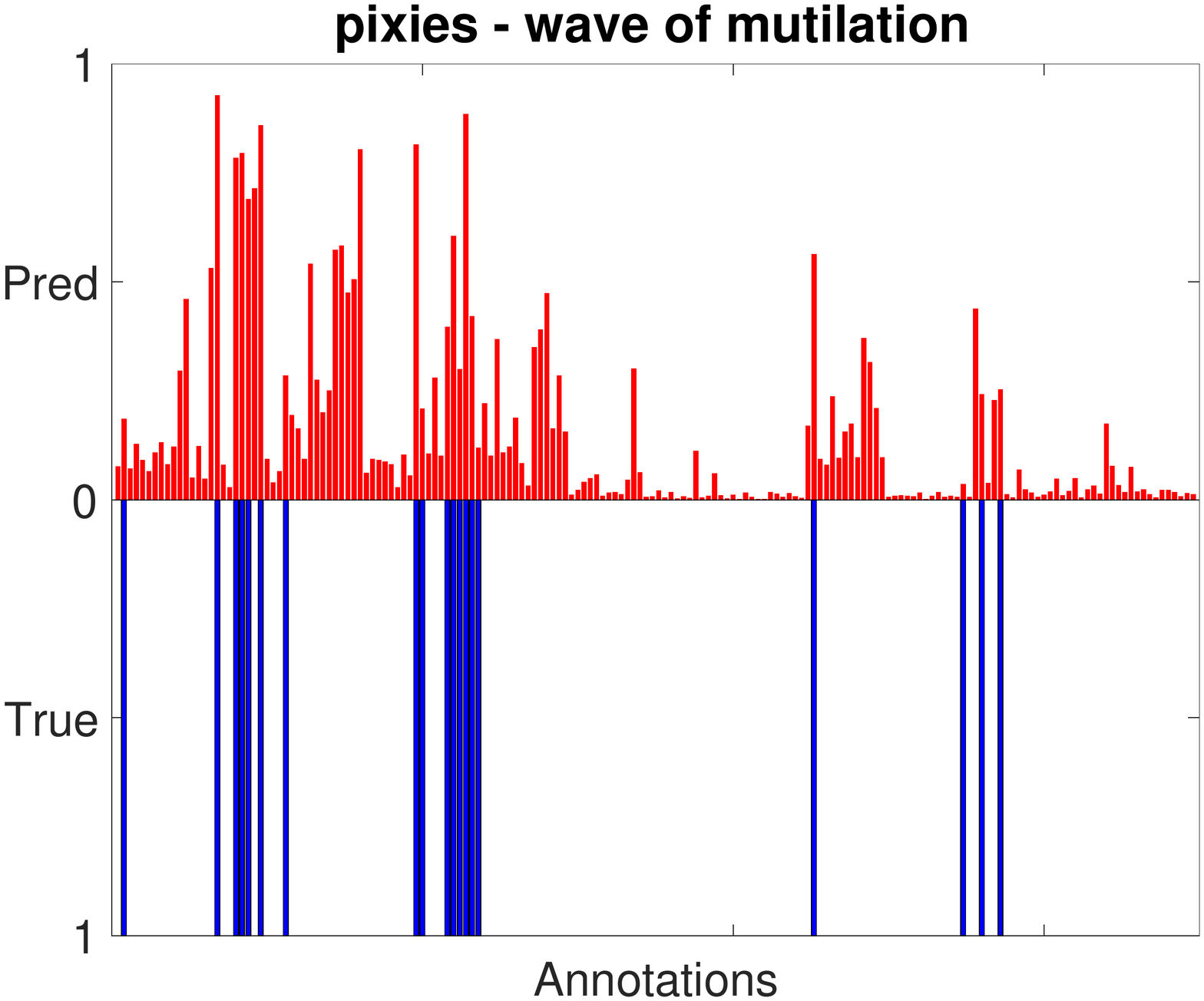}  \includegraphics[width=1.7in]{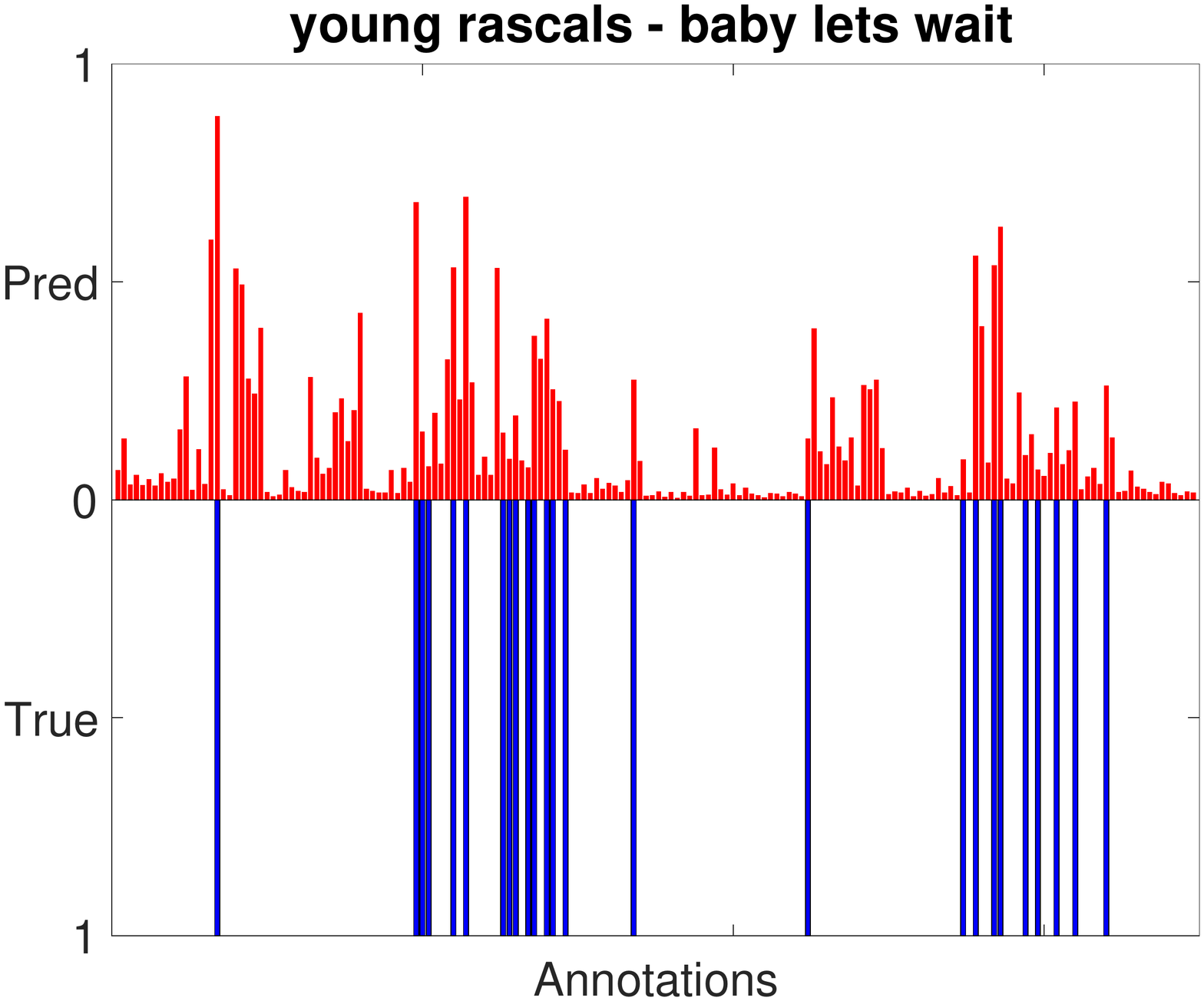}
  \caption{The CAL500 automatic annotation results. Each plot corresponds to a song. In each plot, the top red bars provide the predicted annotation profile; the lower blue bars correspond to the true annotations.The annotations are ordered for visualization convenience.}\label{fig:anno}
\end{figure}

\subsection{Music Retrieval}
We also investigate music retrieval using the proposed framework.
We remark that finding songs based on a small set of annotations is relatively easy. One could simply filter the songs in the database by the given tags, and output those satisfying all the requirements. Thus it is not our primary interest here.
Instead, we focus on retrieving songs according to a more complicated query consisting of multiple tags.

Similar to automatic annotation, we propose the following procedure for music retrieval based on a given annotation list.
Suppose the model parameters in \eqref{GCCA} have been estimated.
For any given query , we first convert it to a binary vector $\bx_2^\star$ using the semantic annotation library. Then, we regress $\bx_2^\star$ on $(\bV_2,\bA_2)$ using a logistic regression with offset $\bmu_2$, and obtain the estimate of the joint and individual scores $\bu_0^\star$ and $\bu_2^\star$.
Next, we calculate the Mahalanobis distances between the estimated score vector $\left({\bu_0^\star}^T,{\bu_2^\star}^T\right)^T$ and the score vectors corresponding to the songs in the database.
The covariance matrix used in the Mahalanobis distance is estimated from the model parameter $(\bU_0,\bU_2)$.
Finally, we sort the distances in an ascending order.
As a result, we obtain an ordered list with highest recommendation on the top.

To validate the procedure, we apply it to the CAL500 data.
We use the annotation profile of each song as a query.
For each query, we record the ranking of the reference song (also contained in the database) in the output recommendation list.
Figure \ref{fig:retrieval} shows the histogram of the rankings across the 502 requests.
As desired, most of the time, the reference song is among the top of the recommendation list.
Perhaps what's more interesting are the top choices other than the reference song in each request.
They are the most similar songs to the reference song in the database according to the  annotation query.
For instance, for the song ``For You and I" by {\sf 10cc}, the top recommendations include ``God Bless the Child" by {\sf Billie Holiday}, ``Suzanne" by {\sf Leonard Cohen} and ``Postcard Blues" by {\sf Cowboy Junkies}.
Without ``ground truth" of the true rankings, however, further validation of the music retrieval performance remains an open question \citep{ellis2002quest}.

\begin{figure}[htbp]
  \centering
  \includegraphics[width=2.5in]{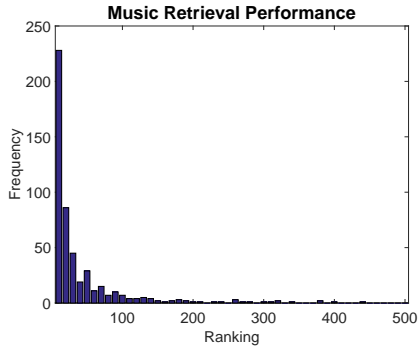}
  \caption{The CAL500 music retrieval result. The histogram of the reference song rankings across different music retrieval requests. }\label{fig:retrieval}
\end{figure}

\section{Simulation Study}\label{sec:sim}
In this section, we conduct comprehensive simulation studies to compare the proposed method with existing ones. We consider several versions of the method: the double-iterative version (denoted by ``iter-GAS") as described in Algorithm \ref{alg1}, the one-step version (``GAS") as described in Section C of the supplementary material, and the one-step with sparsity version (``sGAS") as described in Section \ref{vs}. In addition, we also consider an ad hoc competing method derived from EPCA \citep{collins2001generalization} and JIVE \citep{Lock2013}, where we first estimate a low-rank individual natural parameter matrix for each data set via EPCA, and then apply JIVE to the two estimated matrices.
We denote the ad hoc approach by EPCA-JIVE.

We generate data from Model \eqref{GCCA}, and apply different methods to estimate model parameters.
To avoid complication, we set the joint and individual ranks for the GAS methods to be the true ranks.
{\color{black} In Section G of the supplementary material, we further investigate the effect of rank misspecification on the performance.}
For the EPCA-JIVE method, in the EPCA step, we set the rank of each individual natural parameter matrix to be a large number (much larger than the true rank) in order to avoid information loss.
In particular, for Gaussian data, we use the full rank, or equivalently, the original data.
In the JIVE step, we use the true joint and individual ranks.
The assessment of the rank estimation procedure is conducted separately in Section D.3 of the supplementary material.

\subsection{Setting}
We set the sample size to be $n=200$, and the dimensions of both data sets to be $p_1=p_2=120$.
The joint and individual ranks of the column-centered natural parameter matrices are $r_0=r_1=r_2=2$.
The scores in $(\bU_0,\bU_1,\bU_2)$ are filled with random numbers generated from a uniform distribution between $-0.5$ to $0.5$ (i.e., $Unif(-0.5,0.5)$), and normalized via the Gram-Schmidt process to have orthonormal columns.
We particularly consider 4 settings of the natural parameters, and perform 100 simulation runs for each with the same underlying parameters.
\bi
\item {\bf Setting 1 (Gaussian-Gaussian)}: The joint loadings $(\bV_1^T,\bV_2^T)^T$ are generated in a similar way to the scores: filled with uniform random numbers and normalized to have orthonormal columns. The respective individual loadings $\bA_1$ and $\bA_2$ are similarly generated to satisfy the identifiability conditions. We set the singular values of the joint structure to be $(180,140)$, and of the individual structure to be $(120,100)$ and $(100,80)$. All singular values are absorbed into the scores. The intercepts $\bmu_1$ and $\bmu_2$ are filled with $Unif(-0.5,0.5)$.
\item {\bf Setting 2 (Gaussian-Bernoulli)}: The loadings are generated similarly to Setting 1, except that $\bV_1$ (Gaussian) and $\bV_2$ (Bernoulli) are initially filled with $Unif(-0.5,0.5)$ and  $Unif(-1,1)$ before the normalization. The singular values of the joint structure are $(240,220)$ and those for the individual structure are $(90,80)$ and $(200,180)$. The intercept is filled with $Unif(-0.5,0.5)$.
\item {\bf Setting 3 (Gaussian-Poisson)}: The loadings are generated similarly to Setting 1, except that $\bV_1$ (Gaussian) and $\bV_2$ (Poisson) are initially filled with $Unif(-0.5,0.5)$ and $Unif(-0.25,0.25)$. The singular values are $(80,40)$ (joint), $(60,40)$ (Gaussian individual), and $(20,16)$ (Poisson individual). The intercept terms $\bmu_1$ and $\bmu_2$ are filled with $Unif(-0.5,0.5)$ and $Unif(2,3)$ respectively.
\item {\bf Setting 4 (Bernoulli-Poisson)}: The loadings are generated similarly to Setting 1, except that $\bV_1$ (Bernoulli) and $\bV_2$ (Poisson) are initially filled with $Unif(-5,5)$ and $Unif(-0.5,0.5)$ respectively. The singular values are $(180,140)$ (joint), $(200,160)$ (Bernoulli individual), and $(12,10)$ (Poisson individual). The intercept terms $\bmu_1$ and $\bmu_2$ are filled with $Unif(-0.5,0.5)$ and $Unif(2,3)$ respectively.
\ei
Once the natural parameters are fixed, the observed data are generated independently from corresponding distributions.
In particular, for Gaussian random numbers, we set the variance to be one.

We remark that for Bernoulli distribution, the scale of the natural parameters needs to be relatively large in order to have a detectable signal. Hence we purposely increase the corresponding singular values and the relative loading scales for the Bernoulli distribution in {\bf Setting 2 and 4}. For Poisson distribution, due to the asymmetry of the canonical link function, the natural parameters are typically skewed towards positive values. To mimic reality, we set the intercept term for the Poisson distribution to be positive in {\bf Setting 3 and 4}.


We also consider the settings where the joint loadings are sparse. As the results for sparse settings are qualitatively similar to the results in dense settings, we refer the reader to Section F of supplementary material. 


\subsection{Result}
We compare GAS, iter-GAS, and EPCA-JIVE on the non-sparse simulation settings. Each method is applied to the simulated data to estimate the model parameters.
We evaluate the loading estimation accuracy by the maximum principal angle \citep{bjorck1973numerical} between the subspaces spanned by the estimated and the true loading vectors.
We consider the angles for the joint loadings $\angle(\bV_0,\widehat{\bV_0})$ (where $\bV_0=\left(\bV_1^T,\bV_2^T\right)^T$) and for separate individual loadings $\angle(\bA_k,\widehat{\bA_k})$ ($k=1,2$), respectively.
We assess the estimation accuracy of different model parameters (i.e., the intercept, the joint, and the individual structure) by the Frobenius norm of the difference between the true and the estimated values.
In particular, we calculate the following quantities ($k=1,2$):
\bes
Norm_{avg}&=&\|\bmu_k-\widehat{\bmu_k}\|_\mathbb{F},\\ Norm_{jnt}&=&\|\bU_0\bV_k^T-\widehat{\bU_0}\widehat{\bV_k}^T\|_\mathbb{F},\\
Norm_{ind}&=&\|\bU_k\bA_k^T-\widehat{\bU_k}\widehat{\bA_k}^T\|_\mathbb{F},
\ees
where $\|\cdot\|_\mathbb{F}$ represents the Frobenius norm.
Moreover, we also calculate the Frobenius loss of the overall natural parameter estimates $Norm_{\bTheta}=\|\bTheta_k-\widehat{\bTheta_k}\|_\mathbb{F}$.
In addition, we compare the model fitting times for different methods.
The results are summarized in Table \ref{tab:1}.

\begin{sidewaystable}[htbp]
\caption{Simulation results based on 100 simulation runs in each setting. The median and median absolute deviation (in parenthesis) of each criterion for different methods across different settings are presented. For each method,  $Norm_{avg}$, $Norm_{jnt}$, $Norm_{ind}$, $Norm_{\bTheta}$ and $\angle(\bA_k,\widehat{\bA_k})$ are evaluated and compared per data set; $\angle(\bV_0,\widehat{\bV_0})$ is evaluated across two data sets.  The best results are highlighted in bold.}\label{tab:1}
{\scriptsize
\begin{center}
\begin{tabular}{lcccccccc}
 & &  \multicolumn{2}{c}{GAS} & \multicolumn{2}{c}{iter-GAS} & \multicolumn{2}{c}{EPCA-JIVE} \\
 & &  Data 1 & Data 2 &   Data 1 & Data 2 &  Data 1 & Data 2 \\
\toprule
 \multirow{7}{*}{{\bf Setting 1}}
       & $\|\bmu_k-\widehat{\bmu_k}\|_\mathbb{F}$ &   {\bf 0.78}(0.03) &  {\bf 0.77}(0.04) &  {\bf 0.78}(0.03) & {\bf 0.77}(0.04) & {\bf 0.78}(0.03)&  {\bf 0.77}(0.04)\\
       & $\|\bU_0\bV_k^T-\widehat{\bU_0}\widehat{\bV_k}^T\|_\mathbb{F}$ &   {\bf 21.32}(0.43) &  {\bf 21.15}(0.41) &  {\bf 21.32}(0.43) & {\bf 21.15}(0.41) &21.33(0.42) &  {\bf 21.15}(0.41)\\
       & $\|\bU_k\bA_k^T-\widehat{\bU_k}\widehat{\bA_k}^T\|_\mathbb{F}$ &   {\bf25.39}(0.51) &  {\bf 25.65}(0.53) &  {\bf25.39}(0.51) & {\bf25.65}(0.53) &{\bf25.39}(0.51) &  {\bf25.65}(0.53)\\
       & $\|\bTheta_k-\widehat{\bTheta_k}\|_\mathbb{F}$    & {\bf34.61}(0.39)  & {\bf34.58}(0.49) & {\bf34.61}(0.39) & {\bf34.58}(0.49)&{\bf 34.61}(0.40) & {\bf34.58}(0.49) \\
       & $\angle(\bA_k,\widehat{\bA_k})$    &{\bf6.27}(0.27)  &  {\bf 7.96}(0.30)  &  {\bf6.27}(0.27) & {\bf7.96}(0.30)& {\bf6.27}(0.26) & {\bf7.96}(0.30) \\
       & $\angle(\bV_0,\widehat{\bV_0})$    & \multicolumn{2}{c}{{\bf6.36}(0.20)}  & \multicolumn{2}{c}{{\bf6.36}(0.20)}  &\multicolumn{2}{c}{{\bf6.36}(0.20)} \\
       & Time (sec)   & \multicolumn{2}{c}{10.04(0.82)}  & \multicolumn{2}{c}{44.78(3.27)}  &\multicolumn{2}{c}{{\bf0.51}(0.01)} \\
\midrule
 \multirow{6}{*}{{\bf Setting 2}}
       & $\|\bmu_k-\widehat{\bmu_k}\|_\mathbb{F}$ &   {\bf0.78}(0.04) &  {2.54}(0.10) &  {\bf0.78}(0.03) & {\bf1.96}(0.10) & {\bf0.78}(0.04)&  2.59(0.10)\\
       & $\|\bU_0\bV_k^T-\widehat{\bU_0}\widehat{\bV_k}^T\|_\mathbb{F}$ &   {\bf23.69}(0.45) &  {\bf 89.36}(5.63) &  42.79(0.56) & 128.98(1.00) & 25.15(0.48) &  185.51(1.07)\\
       & $\|\bU_k\bA_k^T-\widehat{\bU_k}\widehat{\bA_k}^T\|_\mathbb{F}$ &   {\bf26.00}(0.40) &  {\bf 110.89}(5.30) &  26.01(0.45) & 133.88(1.04) & 26.11(0.44) &  174.32(1.04)\\
       & $\|\bTheta_k-\widehat{\bTheta_k}\|_\mathbb{F}$    & {\bf36.08}(0.45)  & {\bf146.86}(7.47) & {50.80}(0.45) & 187.77(0.96)&{ 37.09}(0.48) & 257.07(1.14) \\
       & $\angle(\bA_k,\widehat{\bA_k})$    &{\bf8.18}(0.40)  &  {14.47}(0.69)  &  8.20(0.38) & {\bf13.95}(0.60)& 8.24(0.38) & 22.03(0.99) \\
       & $\angle(\bV_0,\widehat{\bV_0})$    & \multicolumn{2}{c}{12.96(0.79)}  & \multicolumn{2}{c}{{\bf12.70}(0.40)}  &\multicolumn{2}{c}{29.46(0.43)} \\
       & Time (sec)    & \multicolumn{2}{c}{{\bf10.94}(1.36)}  & \multicolumn{2}{c}{55.13(6.39)}  &\multicolumn{2}{c}{43.21(3.71)} \\
\midrule
 \multirow{6}{*}{{\bf Setting 3}}
       & $\|\bmu_k-\widehat{\bmu_k}\|_\mathbb{F}$ &   {\bf0.77}(0.03) &  {\bf 0.23}(0.01) &  {\bf0.77}(0.03) & {\bf0.23}(0.01) & {\bf0.77}(0.03)&  0.25(0.01)\\
       & $\|\bU_0\bV_k^T-\widehat{\bU_0}\widehat{\bV_k}^T\|_\mathbb{F}$ &   {\bf18.65}(0.49) &  {\bf 6.68}(0.14) &  {\bf18.65}(0.49) & 6.69(0.14) & 76.32(4.29) &  22.16(3.58)\\
       & $\|\bU_k\bA_k^T-\widehat{\bU_k}\widehat{\bA_k}^T\|_\mathbb{F}$ &   {\bf26.31(0.53)} &  {\bf 7.16}(0.16) &  {\bf26.31}(0.53) & {\bf7.16}(0.16) & 76.63(4.00) &  28.22(3.04)\\
       & $\|\bTheta_k-\widehat{\bTheta_k}\|_\mathbb{F}$    & 33.98(0.45)  & {\bf10.15}(0.13) & {\bf33.97}(0.45) & {\bf10.15}(0.13)&{ 37.86}(0.46) & 18.93(0.13) \\
       & $\angle(\bA_k,\widehat{\bA_k})$    &{\bf15.96}(0.77)  &  {\bf 11.49}(0.55)  &  {\bf15.96}(0.77) & {\bf11.49}(0.55)& 84.31(4.17) & 88.51(1.00) \\
       & $\angle(\bV_0,\widehat{\bV_0})$    & \multicolumn{2}{c}{{\bf16.28}(0.60)}  & \multicolumn{2}{c}{{\bf16.28}(0.60)}  &\multicolumn{2}{c}{85.68(3.21)} \\
       & Time (sec)    & \multicolumn{2}{c}{{\bf23.10}(1.28)}  & \multicolumn{2}{c}{111.32(6.58)}  &\multicolumn{2}{c}{54.15(6.59)} \\
\midrule
 \multirow{6}{*}{{\bf Setting 4}}
       & $\|\bmu_k-\widehat{\bmu_k}\|_\mathbb{F}$ &   2.36(0.12) &  {\bf 0.23}(0.01) &  {\bf1.87}(0.08) & {\bf0.23}(0.01) & 2.48(0.07)&  0.24(0.01)\\
       & $\|\bU_0\bV_k^T-\widehat{\bU_0}\widehat{\bV_k}^T\|_\mathbb{F}$ &   {\bf82.99}(4.23) &  {\bf 6.17}(0.11) &  101.71(1.16) & 7.81(0.17) & 203.54(3.13) &  16.59(0.89)\\
       & $\|\bU_k\bA_k^T-\widehat{\bU_k}\widehat{\bA_k}^T\|_\mathbb{F}$ &   {\bf106.96}(5.51) &  {\bf 7.50}(0.15) &  119.11(1.09) & 7.54(0.15) &233.41(0.77) &  20.11(0.88)\\
       & $\|\bTheta_k-\widehat{\bTheta_k}\|_\mathbb{F}$    & {\bf138.99}(5.22)  & {\bf10.17}(0.14) & {157.89}(1.22) & 11.27(0.15)&{ 218.95}(1.21) & 13.96(0.14) \\
       & $\angle(\bA_k,\widehat{\bA_k})$    &14.37(0.84)  &  {\bf 18.88}(0.94)  &  {\bf13.29}(0.74) & 18.97(0.92)& 86.86(1.96) & 88.57(0.90) \\
       & $\angle(\bV_0,\widehat{\bV_0})$    & \multicolumn{2}{c}{15.39(1.02)}  & \multicolumn{2}{c}{{\bf14.98}(0.78)}  &\multicolumn{2}{c}{87.59(1.64)} \\
       & Time (sec)   & \multicolumn{2}{c}{{\bf7.42}(0.63)}  & \multicolumn{2}{c}{35.53(3.18)}  &\multicolumn{2}{c}{81.13(5.01)} \\


\bottomrule
\end{tabular}
\end{center}
}
\end{sidewaystable}

We observe that under {\bf Setting 1} where the two data sets are both Gaussian, all three methods have very similar performances.
In particular, GAS and iter-GAS are identical because the IRLS algorithm degenerates to the ordinary least squares under the Gaussian assumption.
Model \eqref{GCCA} coincides with the JIVE model in this setting, and thus GAS provides an alternative way of fitting the JIVE model.
In {\bf Setting 2} where the distributions are Gaussian and Bernoulli, the GAS method is generally the best (except for the mean structure and loading estimation in the second data set). For Bernoulli distributions, sometimes the maximum likelihood of EPCA and iter-GAS is reached at infinity, posing a convergence issue to both methods. The same issue has been pointed out in \cite{collins2001generalization}.  As a remedy, we introduce a small ridge penalty to the GLM likelihood functions. This allows the algorithm to converge to a finite value. However, the resulting estimates are biased and shrunk towards zero. See Section E of the supplementary material for more details. We remark that the one-step approximation algorithm is more robust against the convergence issue, and typically does not require such a penalty. Consequently, the estimates are more accurate.
In {\bf Setting 3} where the distributions are Gaussian and Poisson, GAS and iter-GAS have similar results, both outperforming the EPCA-JIVE method.
In {\bf Setting 4} where the distributions are Bernoulli and Poisson, again, GAS is generally among the best in almost all aspects, followed by iter-GAS. Both provide more accurate estimates than EPCA-JIVE.
In terms of the computational cost, the one-step GAS method is always more efficient than the iterative GAS method. Both outperform the ad hoc approach except for the Gaussian case.

{\color{black} As suggested by a referee, we also investigate the performance of the GAS method in high dimensional settings. We focus on {\bf Setting 3} and consider two variants with dimensions $p_1=p_2=200$ and $p_1=p_2=300$, respectively. We keep the signal-to-noise ratio constant as the dimensions increase. Analysis results show that the estimation accuracy further improves with increasing dimensions due to the ``blessing of dimensionality" \citep{li2017embracing},  demonstrating the efficacy of the GAS method in high dimensional settings. More details can be found in Section G of the supplementary material.

In addition, we also study the proposed method in the situation where ranks are misspecified. Results show that the estimation of underlying natural parameter matrices, loading subspaces, and association coefficients is very robust against rank misspecification. More details can be found in Section H of the supplementary material.}


\section{Discussion}\label{sec:dis}
In this paper, we develop a generalized association study framework for estimating the dependency structure and testing the significance of association between two heterogeneous data sets.
We analyze the CAL500 music annotation data with the proposed method, and identify a statistically significant but moderate association between the acoustic features and the semantic annotations.
By leveraging the information in both data sets, we develop new auto-tagging and music retrieval methods that with superior precision performance over existing approaches.
As such, they may serve as useful tools for recommender systems.



There are a few interesting directions for future research.
{\color{black}First, for the music annotation study, it is compelling to investigate what additional audio features may significantly enhance the association with the semantic annotations and improve the auto-tagging performance.
Second, from a methodological point of view, the proposed framework may be extended to over-dispersed distributions and/or to more than two data sets. How to simultaneously estimate dispersion parameters is an open question.}
Third, the application of the proposed methods to other areas such as multi-omics studies is open and promising.

\section*{Acknowledgement}
The authors would like to thank the Computer Audition Laboratory at the University of California, San Diego, for generating the CAL500 data.
GL's research was partially supported by the Calderone Junior Faculty Award by the Mailman School of Public Health at Columbia University.


\newpage
\appendix
\begin{center}
{\Large\bf Supplementary Materials for \\``A General Framework for Association Analysis of Heterogeneous Data'' by Gen Li and Irina Gaynanova}
\end{center}

\renewcommand\theequation{S.\arabic{equation}}
\setcounter{equation}{0}
\newcommand{\newcaption}{%
  \setlength{\abovecaptionskip}{-20pt}%
  \setlength{\belowcaptionskip}{0pt}%
  \caption}
\renewcommand{\thefigure}{S\arabic{figure}}
\renewcommand{\thetable}{S\arabic{table}}

\section{Proof of Proposition 3.2 and Extensions}
In this section, we first prove Proposition 3.2 in the main paper, and then prove Corollary 3.3.
Afterwards, we provide a couple of examples to demonstrate the intuition behind the proposed association coefficient.
\subsection{Proof of Proposition 3.2}
{\bf We first prove part (i).}
From the definition, it is straightforward to see that $\rho(\bX_1,\bX_2)\ge 0$.
What remains to be shown is $\|\wbTheta_1^T\wbTheta_2\|_\star\leq \|\wbTheta_1\|_\mathbb{F}\|\wbTheta_2\|_\mathbb{F}$.
This follows directly from the following lemma.
\begin{lem}
Let $\bX$ be an $n\times p$ matrix in $\real$. Then
\[
\|\bX\|_\star=\min_{\bA,\bB: \bX=\bA\bB}\|\bA\|_\mathbb{F}\|\bB\|_\mathbb{F}.
\]
\end{lem}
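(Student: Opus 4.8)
The plan is to establish the lemma by proving two matching bounds: (a) for \emph{every} factorization $\bX=\bA\bB$ one has $\|\bA\|_\mathbb{F}\|\bB\|_\mathbb{F}\ge\|\bX\|_\star$, and (b) the value $\|\bX\|_\star$ is actually attained by a suitable factorization built from the SVD. Together these give $\min_{\bA,\bB:\bX=\bA\bB}\|\bA\|_\mathbb{F}\|\bB\|_\mathbb{F}=\|\bX\|_\star$. Throughout I would use $\|\bX\|_\star=\sum_i\sigma_i(\bX)$ and write $\langle\bM,\bN\rangle=\tr(\bM^T\bN)$ for the Frobenius inner product, so that $\|\cdot\|_\mathbb{F}$ is the norm it induces.

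For the lower bound (a), fix a factorization $\bX=\bA\bB$ and let $\bX=\bU\bSigma\bV^T$ be a thin SVD, where $\bU,\bV$ have orthonormal columns indexed by $\rank(\bX)$. Then $\|\bX\|_\star=\tr(\bSigma)=\tr(\bU^T\bA\bB\bV)=\langle\bA^T\bU,\bB\bV\rangle$, and Cauchy--Schwarz for the Frobenius inner product gives $\|\bX\|_\star\le\|\bA^T\bU\|_\mathbb{F}\,\|\bB\bV\|_\mathbb{F}$. It remains to note that right-multiplication by a matrix with orthonormal columns does not increase the Frobenius norm: since $\bU\bU^T\preceq\bI$, we get $\|\bA^T\bU\|_\mathbb{F}^2=\tr(\bU^T\bA\bA^T\bU)\le\tr(\bA\bA^T)=\|\bA\|_\mathbb{F}^2$, and similarly $\|\bB\bV\|_\mathbb{F}\le\|\bB\|_\mathbb{F}$. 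Combining these yields $\|\bX\|_\star\le\|\bA\|_\mathbb{F}\|\bB\|_\mathbb{F}$.

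For attainment (b), take $\bA=\bU\bSigma^{1/2}$ and $\bB=\bSigma^{1/2}\bV^T$; then $\bA\bB=\bU\bSigma\bV^T=\bX$, while $\|\bA\|_\mathbb{F}^2=\tr(\bSigma^{1/2}\bU^T\bU\bSigma^{1/2})=\tr(\bSigma)=\|\bX\|_\star$ and likewise $\|\bB\|_\mathbb{F}^2=\|\bX\|_\star$, so $\|\bA\|_\mathbb{F}\|\bB\|_\mathbb{F}=\|\bX\|_\star$. This proves the lemma. Applying it with the factorization $\wbTheta_1^T\wbTheta_2=(\wbTheta_1^T)(\wbTheta_2)$ gives $\|\wbTheta_1^T\wbTheta_2\|_\star\le\|\wbTheta_1^T\|_\mathbb{F}\|\wbTheta_2\|_\mathbb{F}=\|\wbTheta_1\|_\mathbb{F}\|\wbTheta_2\|_\mathbb{F}$, which is exactly the inequality needed to conclude $\rho(\bX_1,\bX_2)\le1$ in part (i) of the proposition.

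I do not expect a genuine obstacle: the argument is just Cauchy--Schwarz plus orthogonal-invariance of the Frobenius norm. The only point requiring care is dimension bookkeeping in the SVD (thin versus full), so that $\bU^T\bU=\bI$ is invoked where columns are orthonormal and $\bU\bU^T\preceq\bI$ where it is only a projection; getting these straight is what makes the two one-line inequalities $\|\bA^T\bU\|_\mathbb{F}\le\|\bA\|_\mathbb{F}$ and $\|\bB\bV\|_\mathbb{F}\le\|\bB\|_\mathbb{F}$ rigorous.
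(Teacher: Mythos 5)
Your proof is correct and follows essentially the same route as the paper's: express $\|\bX\|_\star=\tr(\bU^T\bA\bB\bV)$ via the SVD, bound it by Cauchy--Schwarz for the Frobenius inner product, note that multiplication by a matrix with orthonormal columns cannot increase the Frobenius norm, and exhibit the minimizer $\bA=\bU\bSigma^{1/2}$, $\bB=\bSigma^{1/2}\bV^T$. The only cosmetic difference is that you justify $\|\bA^T\bU\|_\mathbb{F}\le\|\bA\|_\mathbb{F}$ via $\bU\bU^T\preceq\bI$ while the paper completes $\bU$ to an orthonormal basis and subtracts the complementary term; these are the same fact.
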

\begin{proof}
  Let $\bX=\bU\bD\bV^T$ be the singular value decomposition (SVD) of the rank-$r$ ($r\leq \min(n,p)$) matrix $\bX$, where $\bU\in\real^{n\times r}$ and $\bV\in\real^{p\times r}$ are the left and right singular matrices with orthonormal columns respectively, and $\bD$ is an $r\times r$ diagonal matrix with positive non-increasing singular values on the diagonal.
  For any real matrices $\bA$ and $\bB$ such that $\bX=\bA\bB$, we have $\bU\bD\bV^T=\bA\bB$, and correspondingly $\bD=\bU^T\bA\bB\bV$. Subsequently,
  \[
  \|\bX\|_\star=\tr(\bD)=\tr(\bU^T\bA\bB\bV).
  \]
  Let $\vect(\bX)$ denote the vectorization of $\bX$ along the columns. According to the Cauchy-Schwarz inequality, we have
  $$\tr(\bU^T\bA\bB\bV)=\langle\vect(\bA^T\bU),\vect(\bB\bV)\rangle\leq \|\bU^T\bA\|_\mathbb{F}\|\bB\bV\|_\mathbb{F}.$$
  Furthermore, $\|\bU^T\bA\|^2_\mathbb{F}=\tr(\bU^T\bA\bA^T\bU)=\tr((\bU,\widetilde{\bU})^T\bA\bA^T(\bU,\widetilde{\bU}))-\tr(\widetilde{\bU}^T\bA\bA^T\widetilde{\bU})=\|\bA\|^2_\mathbb{F}-\|\widetilde{\bU}^T\bA\|^2_\mathbb{F}$, where $\widetilde{\bU}\in\real^{n\times(n-r)}$ contains a set of basis of the orthogonal complement to the column space of $\bU$. Namely, $$\|\bU^T\bA\|_\mathbb{F}\leq \|\bA\|_\mathbb{F},$$ and similarly we can show $\|\bB\bV\|_\mathbb{F}\leq\|\bB\|_\mathbb{F}$.
  Combining all the results together, we have
  \[
  \|\bX\|_\star\leq\|\bA\|_\mathbb{F}\|\bB\|_\mathbb{F}.
  \]

  Let $\bA=\bU\bD^{1\over 2}$ and $\bB=\bD^{1\over2}\bV^T$. It is easy to see that $\bX=\bA\bB$ and $\|\bA\|_\mathbb{F}=\|\bB\|_\mathbb{F}=\sqrt{\tr(\bD)}$, and hence $\|\bX\|_\star=\|\bA\|_\mathbb{F}\|\bB\|_\mathbb{F}$. This concludes the proof.
\end{proof}

{\bf Next, we prove part (ii).}
The association coefficient is zero if and only if the numerator is zero.
Namely, $\rho(\bX_1,\bX_2)=0$ if and only if $\|\wbTheta_1^T\wbTheta_2\|_\star=0$.
Furthermore,  $\|\wbTheta_1^T\wbTheta_2\|_\star=0$ if and only if all the singular values of $\wbTheta_1^T\wbTheta_2$ are zero, i.e., $\wbTheta_1^T\wbTheta_2=\0$.
Thus, the necessary and sufficient condition of $\rho(\bX_1,\bX_2)=0$ is $\col(\wbTheta_1)$ orthogonal to $\col(\wbTheta_2)$, where $col(\cdot)$ represents the column space of a matrix.

\vskip.1in
{\bf Finally, we prove part (iii).}
Let $\wbTheta_k=\bU_k\bD_k\bV_k^T$ be the SVD of $\wbTheta_k$ ($k=1,2$).
If $\bU_1=\bU_2$ and $\bD_1=c\bD_2$ for some constant $c>0$, we have
\[
\wbTheta_1^T\wbTheta_2=\bV_1\bD_1\bU_1^T\bU_2\bD_2\bV_2^T=\bV_1\bD_1\bD_2\bV_2^T=c\bV_1\bD_2^2\bV_2^T.
\]
Because $\bV_1^T\bV_1=\bI$, $\bV_2^T\bV_2=\bI$, and $\bD_2^2$ is diagonal,  we know $c\bV_1\bD_2^2\bV_2^T$ is the SVD of $\wbTheta_1^T\wbTheta_2$, and hence
\[
\|\wbTheta_1^T\wbTheta_2\|_\star=\tr(c\bD^2_2)=c\|\bD_2\|_\mathbb{F}^2.
\]
In addition, we have
\[
\|\wbTheta_k\|_\mathbb{F}=\|\bU_k\bD_k\bV^T_k\|_\mathbb{F}=\|\bD_k\|_\mathbb{F},\quad k=1,2.
\]
Namely, $\|\wbTheta_1\|_\mathbb{F}\|\wbTheta_2\|_\mathbb{F}=\|\bD_1\|_\mathbb{F}\|\bD_2\|_\mathbb{F}=c\|\bD_2\|^2_\mathbb{F}$.
Therefore,
\[
\|\wbTheta_1^T\wbTheta_2\|_\star=\|\wbTheta_1\|_\mathbb{F}\|\wbTheta_2\|_\mathbb{F},
\]
and hence $\rho(\bX_1,\bX_2)=1$.

\subsection{Proof of Corollary 3.3}
Under Model (2.1) in the main paper, with the correctly specified ranks and the identifiability conditions, we have $\col(\wbTheta_1)=\col((\bU_0,\bU_1))$ and $\col(\wbTheta_1)=\col((\bU_0,\bU_2))$.
Thus,  $\rho(\bX_1,\bX_2)=0$ if and only if $\bU_0=\0$ and $\bU_1^T\bU_2=\0$.
This proves (i) of Corollary 3.3.

if $\bU_1=\0$ and $\bU_2=\0$, we have $\wbTheta_1=\bU_0\bV_1^T$ and $\wbTheta_2=\bU_0\bV_2^T$.
In particular, let $\bD_0=\bU_0^T\bU_0$.
From the identifiability conditions we know $\bD_0$ is a diagonal matrix with positive diagonal values.
We further set $\bL=\bU_0\bD_0^{-{1\over2}}$, $\bR_1={1\over\sqrt{c}}\bV_1$ and $\bM_1=\sqrt{c}\bD_0^{1\over2}$.
Under the additional condition $\bV_1^T\bV_1=c\bI$ ($0<c<1$), we know $\bL^T\bL=\bR_1^T\bR_1=\bI$ and $\bM_1$ is a diagonal matrix with positive diagonal values. Similarly, we set $\bR_2={1\over\sqrt{1-c}}\bV_2$ and $\bM_2=\sqrt{1-c}\bD_0^{1\over2}$.
Thus, $$\wbTheta_1=\bU_0\bV_1^T=\bL\bM_1\bR_1^T,\quad \wbTheta_2=\bU_0\bV_2^T=\bL\bM_2\bR_2^T$$ are the SVD of $\wbTheta_1$ and $\wbTheta_2$, respectively.
Namely, $\wbTheta_1$ and $\wbTheta_2$ have the same left singular vectors (i.e., $\bL$), and the singular values are proportional (i.e., $\bM_1=\sqrt{c\over 1-c}\bM_2$).
From the previous result, we know $\rho(\bX_1,\bX_2)=1$.
This proves (ii) of Corollary 3.3.

\subsection{Examples of Association Coefficients}

To better understand the association coefficient and the conditions under which it is equal to one, we provide a couple of examples under Model (2.1) when the identifiability conditions are satisfied.
In particular, we assume there is only joint structure in the data, i.e., $\bU_1=\0$ and $\bU_2=\0$.

First, we consider the case where $r_0=1$ and the joint score and loading are $\bu_0$ and $(\bv_1^T,\bv_2^T)^T$, respectively.
The expression of the association coefficient becomes
\[
\rho(\bX_1,\bX_2)={\|\bv_1\bu_0^T\bu_0\bv_2^T\|_\star\over\|\bu_0\bv_1^T\|_\mathbb{F}\|\bu_0\bv_2^T\|_\mathbb{F}}.
\]
The numerator is $\|\bv_1\|_\mathbb{F}\|\bv_2\|_\mathbb{F}\|\bu_0\|^2_\mathbb{F}$ which is equivalent to the denominator.
Namely, $\rho(\bX_1,\bX_2)=1$.
In other words, when the individual structure does not exist and the joint structure is unit-rank, the association coefficient is always equal to one.

Now consider the case $r_0>1$. We remark that the absence of the individual structure is no longer sufficient for $\rho(\bX_1,\bX_2)=1$.
The reason lies in the fact that although the joint loadings in $(\bV_1^T, \bV_2^T)^T$ are orthonormal, the individual matrices $\bV_1$ and $\bV_2$ are unconstrained.
If, after reordering the columns, $(\bV_1^T, \bV_2^T)^T$ presents a $2\times 2$ block-wise pattern with large values in the diagonal blocks and small (but not all zero) values in the off-diagonal blocks, the nominal joint structure essentially captures the individual patterns. Correspondingly, the singular values of $\wbTheta_1^T\wbTheta_2$ compared to the separate Frobenius norms of $\wbTheta_1$ and $\wbTheta_2$ are small, and hence the association coefficient is small.
We emphasize that this is a desired property of the newly defined association coefficient, because it automatically reduces the risk of overestimation of the strength of association when the joint and individual ranks are misspecified due to some numerical noise.

As a toy example, consider the case where there is no individual structure, $r_0 = 2$, $p_1 = p_2=2$, $n=3$ and the decomposition of  $(\wbTheta_1,\wbTheta_2)$ is
\begin{align*}
(\wbTheta_1, \wbTheta_2) = \bU_0 (\bV_1^T, \bV_2^T)=
\bpm
2 & 1\\
-2 & 1\\
0 & -2
\epm
\bpm
5/\sqrt{50.02} & 5/\sqrt{50.02} & 0.1/\sqrt{50.02} & -0.1/\sqrt{50.02}\\
0.1/\sqrt{50.02} &-0.1/\sqrt{50.02} & 5/\sqrt{50.02} & 5/\sqrt{50.02}\\
%
\epm.
\end{align*}
In this example, $\bV_1$ has much larger norm of the first column than the second column, while $\bV_2$ is the opposite. Conceptually, this indicates that $\wbTheta_1$ is primarily formed by the first column of $\bU_0$, and $\wbTheta_2$ is primarily formed by the second column of $\bU_0$.
Hence, while $\bU_0$ is deemed shared across both matrices, the weights put on different columns are quite different.
In other words, $\bU_0$ more likely captures the individual structure.
The association coefficient of the data is only 0.0404, which well reflects the fact.

In contrast, consider
\begin{align*}
(\wbTheta_1, \wbTheta_2) = \bU_0 (\bV_1^T, \bV_2^T)=
\bpm
2 & 1\\
-2 & 1\\
0 & -2
\epm
\bpm
0.1/\sqrt{1.5} & 0.2/\sqrt{1.5} & 0.8\sqrt{1.5} & 0.9/\sqrt{1.5}\\
-0.2/\sqrt{1.5} &0.1/\sqrt{1.5} & -0.9\sqrt{1.5} & 0.8/\sqrt{1.5} \epm.
\end{align*}
Although the scale of $\bV_1$ is generally smaller than that of $\bV_2$, the respective column norms are homogeneous, indicating $\bU_0$ is the truly joint structure.
The association coefficient for this example is equal to 1.

\section{GLM with Heterogeneous Link Functions}
Let $\by=(y_1,\cdots,y_n)^T\in\real^n$ denote a vector of random variables with potentially heterogenous distributions from the exponential family. In particular, assume the pdf of $y_i$ is $f_i(y_i)=h_i(y_i)\exp(y_i\theta-b_i(\theta))$, where $b_i(\cdot)$ is the corresponding cumulant function.
Let $\bX=(\bx_{(1)},\cdots,\bx_{(n)})^T$ be an $n\times p$ design matrix and $\bbeta\in\real^p$ be an unknown coefficient vector.
Suppose our goal is to fit the following GLM
\[
\Ex(y_i)=g_i^{-1}(\bx_{(i)}^T\bbeta), \ i=1,\cdots,n;
\]
where $g_i(\cdot)$ is an appropriate link function for the $i$th observation.

Following the derivation of the IRLS algorithm \citep{mccullagh1989generalized} verbatim, we obtain that each iteration solves the following weighted least square problem:
\be\label{IRLS}
\min_{\bbeta} \|\bW^{1\over2}\by^\star-\bW^{1\over2}\bX\bbeta\|_\mathbb{F}^2,
\ee
where $\bW$ is a diagonal weight matrix and $\by^\star=(y_1^\star,\cdots,y_n^\star)^T$ is an induced response vector.
More specifically,
\[
\bW=\diag\left({1\over b_1''(\hat{\theta}_1) {g_1'}^2(\hat{\mu}_1)},\cdots,{1\over b_n''(\hat{\theta}_n) {g_n'}^2(\hat{\mu}_n)}\right),
\]
and
\[
y_i^\star=\bx_{(i)}^T\hat{\bbeta}+(y_i-\hat{\mu}_i)g'_i(\hat{\mu}_i),\quad i=1,\cdots,n,
\]
where $\hat{\bbeta}$ is the coefficient estimate from the previous iteration, $\hat{\mu}_i=g_i^{-1}(\bx_{(i)}^T\hat{\bbeta})$, and $\hat{\theta}_i=b_i'^{-1}(\hat{\mu}_i)$.
Thus, by iteratively solving \eqref{IRLS}, we obtain the maximum likelihood estimate of $\bbeta$.

\section{Details of the One-Step Approximation Algorithm}\label{one-step}

To further alleviate the computational burden of the double-iterative model fitting algorithm, we substitute the IRLS algorithm for the GLM model fitting with a one-step approximation with warm start.
More specifically, to estimate each parameter, we use the estimate from the previous iteration as the initial value to calculate the induced response and weights as in the standard IRLS algorithm, and solve a weighted least square problem exactly once.
The obtained estimate, after proper normalization, is used in the next iteration.
As a result, there is only one layer of iteration in the entire algorithm.

More specifically, in each iteration, we update the model parameter estimates sequentially, following the order:
\[
\bU_1\rightarrow\{\bmu_1,\bA_1\}\rightarrow\bU_2\rightarrow\{\bmu_2,\bA_2\}\rightarrow\{\bmu_1,\bV_1\}\rightarrow\{\bmu_2,\bV_2\}\rightarrow\bU_0.
\]
We remark that any change of the order does not affect the convergence of the algorithm.
In addition, whether to update the estimate of the intercepts ($\bmu_1$ and $\bmu_2$) twice as is, or just once with the individual loadings, or just once with the joint loadings, has little effect on the final results. Thus, we focus on the above order hereafter.

We denote the estimates from the previous iteration by $\{\widetilde{\bmu}_1,\widetilde{\bmu}_2,\widetilde{\bU}_0,\widetilde{\bU}_1,\widetilde{\bU}_2,\widetilde{\bV}_1,\widetilde{\bV}_2,\widetilde{\bA}_1,\widetilde{\bA}_2\}$. To estimate each row of $\bU_1$ (i.e., $\bu_{1,(i)}$), in the original algorithm we propose to fit the following GLM
\[
\Ex(\bx_{1,(i)})=b_1'(\btheta_{1,(i)}), \mbox{ and } \btheta_{1,(i)}=\widetilde{\bmu}_1+\widetilde{\bV}_1\widetilde{\bu}_{0,(i)}+\widetilde{\bA}_1\bu_{1,(i)},
\]
where $b_1'(\cdot)$ represent an entrywise function.
The one-step approximation algorithm, which we shall elaborate here, alleviates computation by performing just one step of the IRLS algorithm.
More specifically, let $\widetilde{\btheta}_{1,(i)}=\widetilde{\bmu}_1+\widetilde{\bV}_1\widetilde{\bu}_{0,(i)}+\widetilde{\bA}_1\widetilde{\bu}_{1,(i)}$.
We only need to solve the following weighted least square problem
\be\label{U1}
\min_{\bu_{1,(i)}} \|\bW^{1\over2}\by^\star-\bW^{1\over2}\widetilde{\bA}_1\bu_{1,(i)}\|_\mathbb{F}^2,
\ee
where
\[
\bW=\diag\left( b_1''(\widetilde{\btheta}_{1,(i)})\right), \mbox{ and }
\by^\star=\widetilde{\bA}_1\widetilde{\bu}_{1,(i)}+\left\{\bx_{1,(i)}-b_1'(\widetilde{\btheta}_{1,(i)})\right\}\cdot {1\over b_1''(\widetilde{\btheta}_{1,(i)})}.
\]
Similar to the original algorithm, the estimation of different rows of $\bU_1$ can be easily parallelized.
Once every row is estimated, we update $\widetilde{\bU_1}$ to be the latest estimates.

To estimate $\{\bmu_1,\bA_1\}$, let us denote $\widetilde{\btheta}_{1,j}=\widetilde{\mu}_{1j}\1+\widetilde{\bU}_0\widetilde{\bv}_{1,(j)}+\widetilde{\bU}_1\widetilde{\ba}_{1,(j)}$, and  solve the following weighted least square problem
\be\label{A1}
\min_{\mu_{1j},\ba_{1,(j)}} \|\bW^{1\over2}\by^\star-\bW^{1\over2}(\mu_{1j}\1+\widetilde{\bU}_1\ba_{1,(j)})\|_\mathbb{F}^2,
\ee
where
\[
\bW=\diag\left( b_1''(\widetilde{\btheta}_{1,j})\right), \mbox{ and }
\by^\star=(\widetilde{\mu}_{1j}\1+\widetilde{\bU}_1\widetilde{\ba}_{1,(j)})+\left\{\bx_{1,j}-b_1'(\widetilde{\btheta}_{1,j})\right\}\cdot {1\over b_1''(\widetilde{\btheta}_{1,j})}.
\]
Again, once estimated, we update $\widetilde{\bmu}_1$ and $\widetilde{\bA}_1$ to be the latest estimates.
Almost identically, we can update the estimates of $\bU_2$, $\bmu_2$, and $\bA_2$.

To estimate $\{\bmu_1,\bV_1\}$, we exploit the same expression of $\widetilde{\btheta}_{1,j}$, and solve the following weighted least square problem
\be\label{V1}
\min_{\mu_{1j},\bv_{1,(j)}} \|\bW^{1\over2}\by^\star-\bW^{1\over2}(\mu_{1j}\1+\widetilde{\bU}_0\bv_{1,(j)})\|_\mathbb{F}^2,
\ee
where
\[
\bW=\diag\left( b_1''(\widetilde{\btheta}_{1,j})\right), \mbox{ and }
\by^\star=(\widetilde{\mu}_{1j}\1+\widetilde{\bU}_0\widetilde{\bv}_{1,(j)})+\left\{\bx_{1,j}-b_1'(\widetilde{\btheta}_{1,j})\right\}\cdot {1\over b_1''(\widetilde{\btheta}_{1,j})}.
\]
Similarly, we estimate $\bmu_2$ and $\bV_2$.

Finally, we estimate $\bU_0$. Let us denote $\widetilde{\btheta}_{0,(i)}=\left(\widetilde{\bmu}_1^T+\widetilde{\bu}^T_{1,(i)}\widetilde{\bA}_1^T\ , \ \widetilde{\bmu}_2^T+\widetilde{\bu}^T_{2,(i)}\widetilde{\bA}_2^T\right)^T +\widetilde{\bV}_0\widetilde{\bu}_{0,(i)}$.
Furthermore, with a slight abuse of notation,  we use $b_0(\cdot)$ to denote an entrywise function mapping $\real^{p_1+p_2}$ to $\real^{p_1+p_2}$, with the first $p_1$ functions being $b_1:\real\mapsto\real$, and the last $p_2$ functions being $b_2:\real\mapsto\real$.
Correspondingly, $b_0'(\cdot)$ and $b_0''(\cdot)$ denote the entrywise first and second order derivative functions of $b_0(\cdot)$, respectively.
Subsequently, we solve the following weighted least square problem
\be\label{U0}
\min_{\bu_{0,(i)}} \|\bW^{1\over2}\by^\star-\bW^{1\over2}\widetilde{\bV}_0\bu_{0,(i)}\|_\mathbb{F}^2,
\ee
where
\[
\bW=\diag\left( b_0''(\widetilde{\btheta}_{0,(i)})\right), \mbox{ and }
\by^\star=\widetilde{\bV}_0\widetilde{\bu}_{0,(i)}+\left\{(\bx_{1,(i)}^T,\bx_{2,(i)}^T)^T-b_0'(\widetilde{\btheta}_{0,(i)})\right\}\cdot {1\over b_0''(\widetilde{\btheta}_{0,(i)})}.
\]

At the end of each iteration, we normalize the estimated parameters following the same procedure as in the main paper. Consequently, the obtained parameters satisfy the identifiability conditions.
After each iteration, we calculate the difference of the log likelihood values between the current estimates and the previous estimates. We stop the iterations when the difference becomes sufficiently small. Although there is no proof that the one-step approximation algorithm will increase the likelihood value in each iteration as the original algorithm does, we observe that it typically converges quickly. A more rigorous proof of convergence needs further investigation.
The pseudo code of the one-step approximation algorithm is presented in Algorithm \ref{alg2}.

\begin{algorithm}[h]
\caption{The One-Step Approximation Algorithm for Model Fitting}\label{alg2}
\begin{algorithmic}
\State Initialize $\{\bmu_1,\bmu_2,\bU_0,\bU_1,\bU_2,\bV_1,\bV_2,\bA_1,\bA_2\}$;
\While {The log likelihood difference has not reached convergence}
\bi
\item Estimate $\bu_{1,(i)}$ by solving \eqref{U1} for $i=1,\cdots,n$ in parallel;
\item Estimate $\{\mu_{1j},\ba_{1,(j)}\}$ by solving \eqref{A1} for $j=1,\cdots,p_1$ in parallel;
\item Estimate $\bu_{2,(i)}$ the same way as one estimates $\bu_{1,(i)}$;
\item Estimate $\{\mu_{2j},\ba_{2,(j)}\}$ the same way as one estimates $\{\mu_{1j},\ba_{1,(j)}\}$;
\item Estimate $\{\mu_{1j},\bv_{1,(j)}\}$ by solving \eqref{V1} for $j=1,\cdots,p_1$ in parallel;
\item Estimate $\{\mu_{2j},\bv_{2,(j)}\}$ the same way as one estimates $\{\mu_{1j},\bv_{1,(j)}\}$;
\item Estimate $\bu_{0,(i)}$ by solving \eqref{U0} for $i=1,\cdots,n$ in parallel;
\item Normalize the estimated parameters to retrieve the identifiability conditions;
\item Calculate the log likelihood value of the new parameter estimates.
\ei
\EndWhile
\end{algorithmic}
\end{algorithm}

\section{Rank Estimation}

There has been a large body of literature on selecting ranks for matrix factorization problems and determining the number of components in factor models under the Gaussian assumption \citep{bai2002determining,kritchman2008determining,owen2009bi}.
However, none of the methods directly extends to non-Gaussian data.
Moreover, little has been studied for the rank estimation of more than one data set.

In Section \ref{sec:CV}, we develop an $N$-fold cross validation (CV) approach to estimate the rank of the column-centered natural parameter matrix underlying a non-Gaussian data set. The approach flexibly accommodates a data matrix from a single distribution, or a data matrix consisting of mixed variables from multiple distributions.
In Section \ref{sec:two-step}, we devise a two-step procedure to estimate the joint and individual ranks $(r_0,r_1,r_2)$ in Model (2.1) in the main paper.
In Section \ref{sec:CVnum}, we validate the two-step procedure using different simulation examples described in Section 6.1 of the main paper.
Finally, in Section \ref{sec:real}, we apply the two-step procedure to estimate the model ranks for the CAL500 data.

\subsection{$N$-Fold CV}\label{sec:CV}
Let $\bX$ represent an $n\times p$ data matrix, where the entries are independently distributed and may follow heterogeneous distributions from the exponential family.
Let $\bTheta=\1\bmu^T+\wbTheta$ represent the underlying natural parameter matrix with $\wbTheta$ being the column-centered structure.
The goal is to estimate the rank of $\wbTheta$.

The idea  stems from the CV procedure for estimating the number of principal components in factor models \citep{wold1978cross,bro2008cross,josse2012selecting}. Here we generalize it to the exponential family, and furthermore, to mixed data types.
The general procedure is as follows. First, we randomly split the entries of $\bX$ into $N$ blocks of  roughly equal size. Each time, we use $N-1$ blocks of data to estimate the natural parameter matrices with different candidate ranks. With each estimated natural parameter matrix, we predict the left-out entries with the corresponding expectations, and calculate the sum of squared Pearson residuals of those entries. The CV score is the sum of squares divided by the number of entries in this block.
We repeat this procedure for all $N$ blocks, and take the average or median of the $N$ CV scores as the overall score for each candidate rank. The rank with the minimum overall score is selected.

More specifically, let $x_{ij}$ and $\theta_{ij}$ be the $ij$th entries of $\bX$ and $\bTheta$, respectively.
The pdf of $x_{ij}$ is
\[
f_{ij}(x_{ij}|\theta_{ij})=h_{ij}(x_{ij})\exp\{x_{ij}\theta_{ij}-b_{ij}(\theta_{ij})\},\quad i=1,\cdots,n; j=1,\cdots,p,
\]
where $f_{ij}(\cdot)$ is the pdf for $x_{ij}$ with potentially heterogeneous normalization function $h_{ij}(\cdot)$ and cumulant function $b_{ij}(\cdot)$.
we first randomly split the entries of $\bX$.
Let $\bx^{[l]}$ denote the vector of left-out entries in the $l$th block ($l=1,\cdots,N$), and $\bX^{[-l]}$ denote the remaining data matrix where the values of the left-out entries are missing.
In particular, we require that none of the rows or columns in $\bX^{[-l]}$ is entirely missing.
Otherwise, we manually modify the partition or simply re-split the data.
The requirement is easily satisfied in practice as long as $N$ is moderately large (e.g., $N\geq 5$).

Next, we use $\bX^{[-l]}$ to estimate a natural parameter matrix with rank $r$ for the column-centered structure.
Let $\bTheta=\1\bmu^T+\wbTheta$ denote the natural parameter matrix, where ${\wbTheta}=\bU\bV^T$ is a rank-$r$ matrix with $\1^T\bU=\0$ and $\bU\in\real^{n\times r},\bV\in\real^{p\times r}$.
We exploit an alternating procedure, similar to the model fitting algorithm, to estimate the parameters $\{\bmu,\bU,\bV\}$ via parallel GLMs.
Moreover, the one-step approximation idea described in Section \ref{one-step} is readily applicable to facilitate computation.
When $\bU$ is fixed, we fit a model to the 
observed values in each column of $\bX^{[-l]}$ 
to estimate each entry of $\bmu$ (i.e., $\mu_j$) and each row of $\bV$ (i.e., $\bv_{(j)}$). Specifically, denote $\widetilde{\theta}_{ij}=\widetilde{\mu}_j+\widetilde{\bu}_{(i)}^T\widetilde{\bv}_{(j)}$, where the parameters with the tilde symbol are estimated from the previous iteration. To estimate $\mu_j$ and $\bv_{(j)}$, we shall solve
\bes
\min_{\mu_{j},\bv_{(j)}} \|\bW^{1\over2}\by^\star-\bW^{1\over2}(\mu_{j}\1+\widetilde{\bU}\bv_{(j)})\|_\mathbb{F}^2,
\ees
where $\bW$ is an $n\times n$ diagonal matrix with the $i$th diagonal value being
\bes
w_{ii}=\left\{
\begin{aligned}
&b_{ij}''(\widetilde{\theta}_{ij}),\quad \mbox{if $x_{ij}$ is observed,}\\
&0,\quad\quad\ \ \quad \mbox{otherwise},
\end{aligned}\right.
\ees
and $\by^\star$ is a length-$n$ vector with the $i$th value being $y_{i}^\star=\widetilde{\theta}_{ij}+\left\{x_{ij}-b_{ij}'(\widetilde{\theta}_{ij})\right\}/b_{ij}''(\widetilde{\theta}_{ij})$.
Similarly, when $\{\bmu,\bV\}$ is fixed, we fit a model to the observed values in each row of $\bX^{[-l]}$ to estimate each row of $\bU$ (i.e., $\bu_{(i)}$).
With the same notation of $\widetilde{\theta}_{ij}$, we shall solve
\bes
\min_{\bu_{(i)}} \|\bW^{1\over2}\by^\star-\bW^{1\over2}\widetilde{\bV}\bu_{(i)}\|_\mathbb{F}^2,
\ees
where $\bW$ is a $p\times p$ matrix with the $j$th diagonal value being
\bes
w_{jj}=\left\{
\begin{aligned}
&b_{ij}''(\widetilde{\theta}_{ij}),\quad \mbox{if $x_{ij}$ is observed,}\\
&0,\quad\quad\ \ \quad \mbox{otherwise},
\end{aligned}\right.
\ees
and $\by^\star$ is a length-$p$ vector with the $j$th value being
$y_{j}^\star=\left(\widetilde{\theta}_{ij}-\widetilde{\mu}_j\right)+\left\{(x_{ij}-b_{ij}'(\widetilde{\theta}_{ij})\right\}/ b_{ij}''(\widetilde{\theta}_{ij})$.
We alternate between the two steps until convergence.
Consequently, we obtain the estimate of a natural parameter matrix with rank-$r$ column-centered structure.

Let $\widehat{\bTheta}^{[-l]}_r$ represent the estimated natural parameter matrix from $\bX^{[-l]}$ with rank $r$ for the column-centered structure.
The Pearson residual for $x_{ij}$ is defined as
\[
R_{ij}={x_{ij}-b_{ij}'(\widehat{\theta}^{[-l]}_{r,ij})\over \sqrt{b''_{ij}(\widehat{\theta}^{[-l]}_{r,ij})}},
\]
where $\widehat{\theta}^{[-l]}_{r,ij}$ is the $ij$th entry of $\widehat{\bTheta}^{[-l]}_{r}$.
The CV score for rank $r$ in the $l$th fold is calculated as the summation of the squared Pearson residuals for the entries in $\bx^{[l]}$, divided by the number of entries in $\bx^{[l]}$.
Similarly, we can calculate the CV scores for different ranks and in different folds.
Finally, we compare the average or the median of the CV scores across different folds for different candidate ranks, and select the rank with the minimum score.

\subsection{Two-Step Rank Estimation Procedure}\label{sec:two-step}
To estimate the joint and individual ranks in Model (2.1) of the paper main, we devise a two-step procedure.
First, we apply the CV procedure described in Section \ref{sec:CV} to $\bX_1$, $\bX_2$, and the concatenated data set $(\bX_1,\bX_2)$, respectively.
We obtain the estimates of the ranks of the column-centered natural parameter matrices $\bTheta_1$, $\bTheta_2$, and $(\bTheta_1,\bTheta_2)$ as $r_1^\star$, $r_2^\star$, and $r_0^\star$.
According to the identifiability conditions in Section 2.2 of the main paper, we know that $r_0^\star=r_0+r_1+r_2$, $r_1^\star=r_0+r_1$, and $r_2^\star=r_0+r_2$.
Therefore, in the second step, by solving the linear equations, we obtain the estimate of the joint and individual ranks $(r_0,r_1,r_2)$ as
\be\label{rank}
\widehat{r_0}=r_1^\star+r_2^\star-r_0^\star, \quad \widehat{r_1}=r_0^\star-r_1^\star, \quad \widehat{r_2}=r_0^\star-r_1^\star.
\ee
A similar procedure has been used in \cite{hellton2016integrative}.
As a result, we obtain the rank estimates for Model (2.1) in the main paper.

{\color{black} In practice, low ranks are typically preferred for the computational efficiency and interpretability.
Thus, we can set a small upper bound (i.e., 10) for $r_1^\star$ and $r_2^\star$.}
Moreover, notice that $ \max(r_1^\star,r_2^\star)\leq r_0^\star \leq r_1^\star+r_2^\star$.
One could first select $r_1^\star$ and $r_2^\star$ using the CV procedure, and then use $\max(r_1^\star, r_2^\star)$ and $r_1^\star+r_2^\star$ as the lower and upper bounds for the CV candidate set of $r_0^\star$.



\subsection{Numerical Studies}\label{sec:CVnum}
In this section, we validate the two-step rank estimation procedure using the four simulation settings described in Section 6.1 of the main paper.

Given two data sets $\bX_1$ and $\bX_2$ in each simulation setting,
we first estimate the ranks of the underlying column-centered natural parameter matrices of $\bX_1$, $\bX_2$, and the concatenated data $(\bX_1,\bX_2)$, respectively.
According to the setup, the true ranks are 4, 4, and 6.
We let the candidate set of the ranks for the individual data be $\{1,2,3,4,5,6\}$, and use the selected individual ranks to determine the range of the candidate set for the concatenated data.
We apply the 10-fold CV method in each case, and the results are presented in Figures \ref{rank1}--\ref{rank4}, each corresponding to a single simulation run in each setting.

\begin{figure}[htbp]
  \centering
  \includegraphics[width=6in,height=3in]{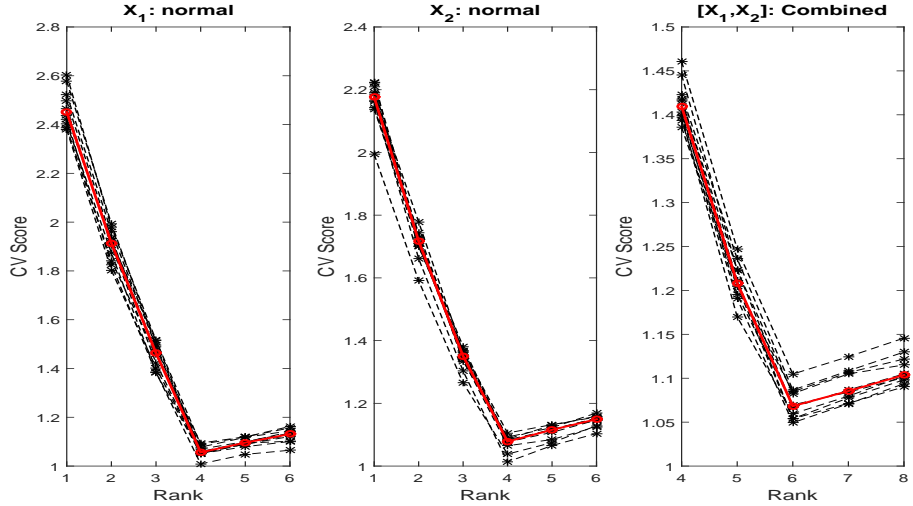}
  \caption{Rank selection under Setting 1 (Gaussian-Gaussian). From left to right is the 10-fold CV score plot for $\bX_1$, $\bX_2$, and $(\bX_1,\bX_2)$ respectively. In each plot, a dashed line with asterisks corresponds to one fold of CV; the solid line with circles correspond to the median CV scores.}\label{rank1}
\end{figure}

\begin{figure}[htbp]
  \centering
  \includegraphics[width=6in,height=3in]{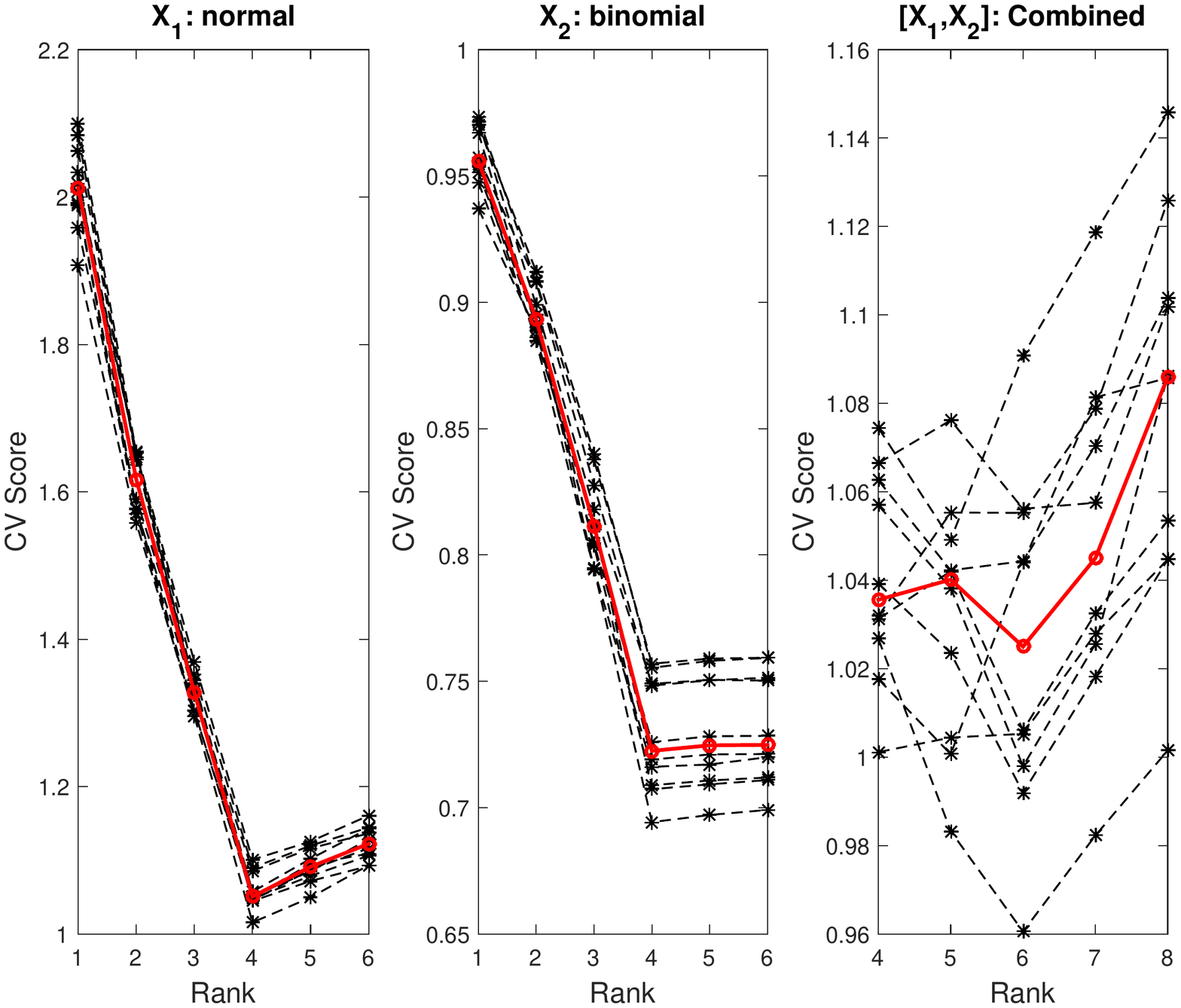}
  \caption{Rank selection under Setting 2 (Gaussian-Bernoulli). From left to right is the 10-fold CV score plot for $\bX_1$, $\bX_2$, and $(\bX_1,\bX_2)$ respectively. In each plot, a dashed line with asterisks corresponds to one fold of CV; the solid line with circles correspond to the median CV scores.}\label{rank2}
\end{figure}

\begin{figure}[htbp]
  \centering
  \includegraphics[width=6in,height=3in]{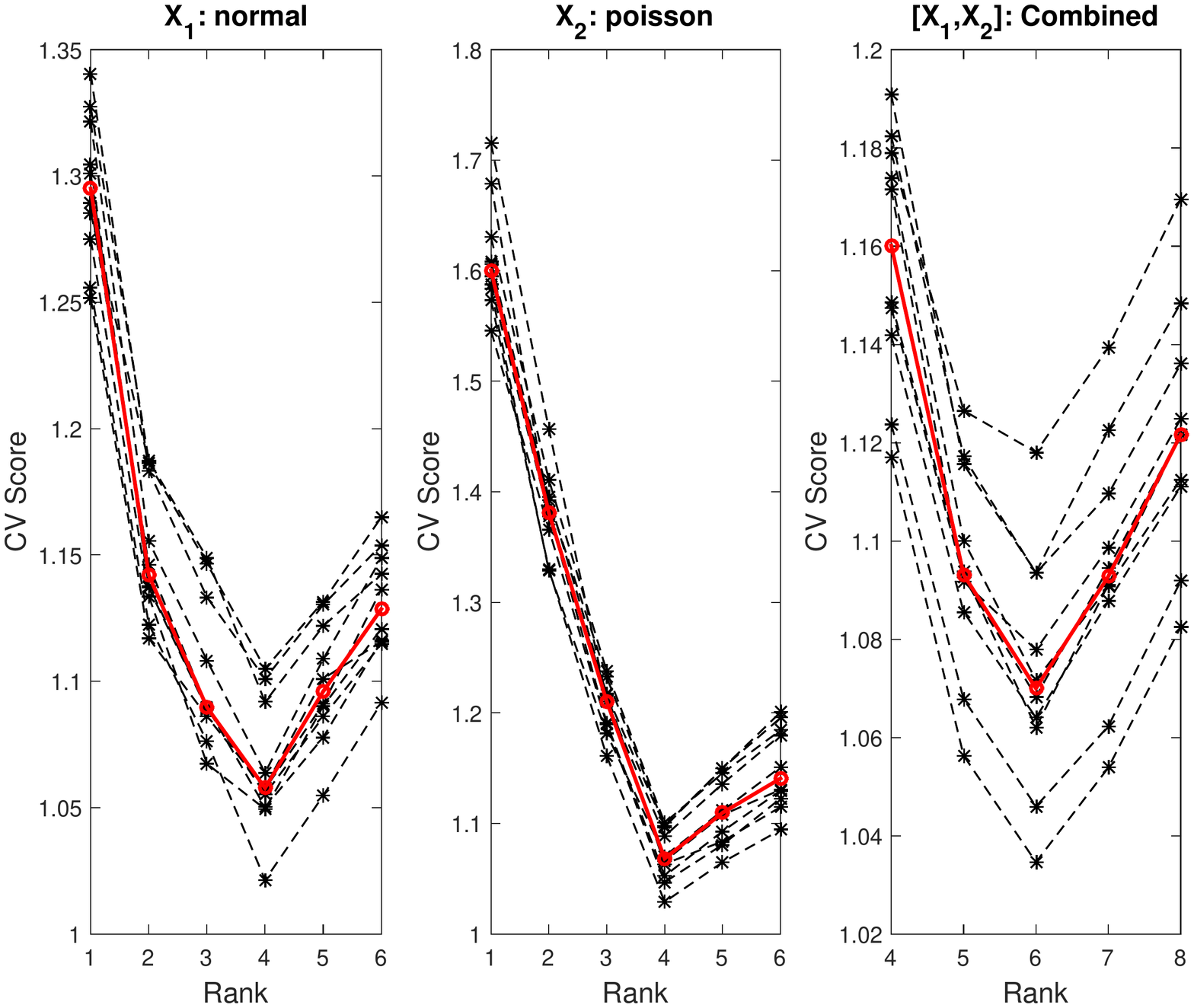}
  \caption{Rank selection under Setting 3 (Gaussian-Poisson). From left to right is the 10-fold CV score plot for $\bX_1$, $\bX_2$, and $(\bX_1,\bX_2)$ respectively. In each plot, a dashed line with asterisks corresponds to one fold of CV; the solid line with circles correspond to the median CV scores.}\label{rank3}
\end{figure}

\begin{figure}[htbp]
  \centering
  \includegraphics[width=6in,height=3in]{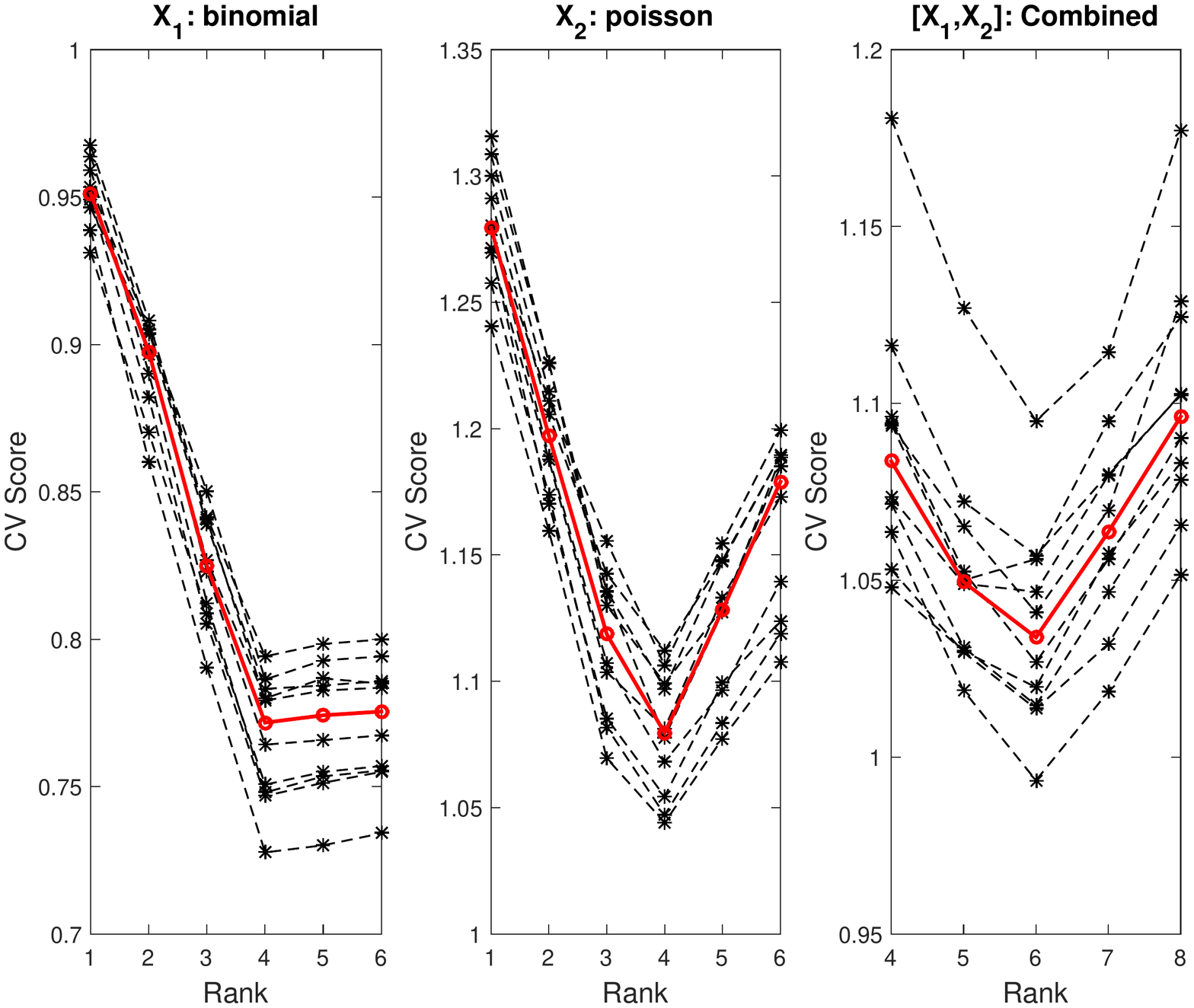}
  \caption{Rank selection under Setting 4 (Bernoulli-Poisson). From left to right is the 10-fold CV score plot for $\bX_1$, $\bX_2$, and $(\bX_1,\bX_2)$ respectively. In each plot, a dashed line with asterisks corresponds to one fold of CV; the solid line with circles correspond to the median CV scores.}\label{rank4}
\end{figure}

Overall, the 10-fold CV procedure works very well for various data types in different settings.
Cross validation for each block of data almost always correctly identifies the true ranks, except for a couple of times for mixed-type data involving Bernoulli data in Setting 2 and Setting 4.
We also notice that for purely Bernoulli data (e.g., the middle panel in Figure \ref{rank2} and the left panel in Figure \ref{rank4}), the CV scores tend to drop quickly before the candidate rank reaches the true rank, and stay flat afterwards.
This pattern makes it difficult to select the correct ranks for Bernoulli data.
We emphasize that in general the rank estimation for Bernoulli data is extremely difficult, because dichotomized data contain relatively scarce information about the rank of the underlying structure.
Unless the signal level (i.e., the magnitude of the natural parameters) is relatively high, it is very tricky to correctly estimate the rank for a Bernoulli data matrix.
To our best knowledge, the proposed CV method is among the first attempts to address this problem.
Given the prevalence of binary data in practice (e.g., genetic mutations, music annotations), the corresponding rank estimation problem remains an open question.

Once the separate ranks are estimated, the second step is to calculate the joint and individual model ranks using \eqref{rank}.
As a result, we obtain a unique set of joint and individual ranks for the model.
In the above simulation studies, since the selected values of the separate ranks are equal to the true values, the subsequently calculated model ranks are also consistent with the truth.

\subsection{Rank Estimation for CAL500}\label{sec:real}
We apply the  two-step procedure to estimate the model ranks for the CAL500 data.
The 10-fold CV score plots for separate data matrices and the concatenated data matrix are shown in Figure \ref{fig:rank}.
For the individual data matrices, the CV scores flatten out from rank 6 (for acoustic features) and rank 5 (for semantic annotations), respectively.
This phenomenon is probably due to the high level of noise in the data, as we observe in the simulation study in Section \ref{sec:CVnum}.
Nevertheless, we choose $r_1^\star=6$ and $r_2^\star=5$.
Subsequently, we set the range of the rank $r_0^\star$ to be 6 (i.e., $\max(r_1^\star,r_2^\star)$) to 11 (i.e., $r_1^\star+r_2^\star$) for the concatenated data.
The CV scores reach the minimum at rank 8, and hence we choose $r_0^\star=8$.
From the set of equations in \eqref{rank}, we obtain the estimated model ranks $r_0=3$, $r_1=3$, and $r_2=2$.

\begin{figure}[htbp]
  \centering
  \includegraphics[width=5.5in]{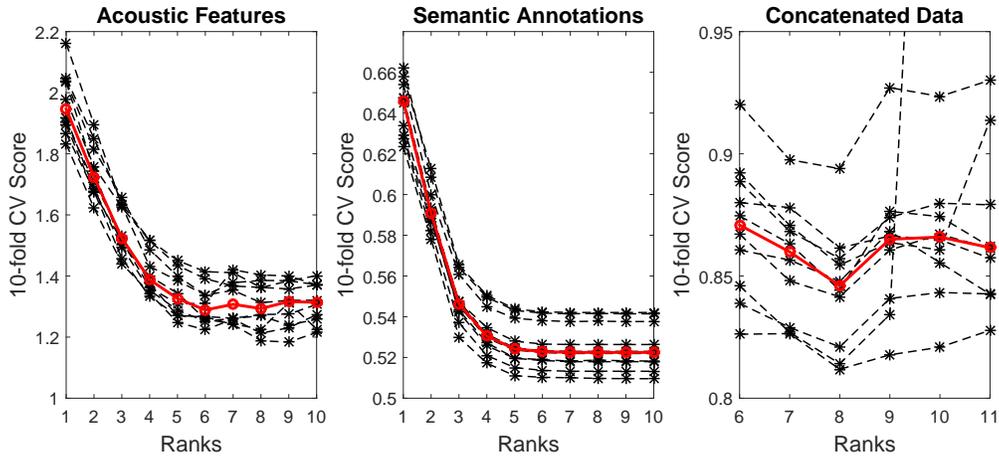}
  \caption{Rank selection for the CAL500 data. From left to right is the 10-fold CV score plot for $\bX_1$, $\bX_2$, and $(\bX_1,\bX_2)$ respectively. In each plot, a dashed line with asterisks corresponds to one fold of CV; the solid line with circles correspond to the median CV scores. }\label{fig:rank}
\end{figure}

\section{Ridge Remedy for Non-convergence for Bernoulli Data}
Sometimes the likelihood of GLM for Bernoulli random variables does not have a finite optimizer.
Consider, for example, a binary response vector $\by$ and a univariate predictor $\bx$, where $\by=\In(\bx>0)$ with $\In(\cdot)$ being an entrywise indicator function.
Let $\beta$ be the coefficient for the GLM
\[
g\{\Ex(\by)\}=\bx\beta,
\]
where $g(\cdot)$ is an entrywise link function (e.g., a logistic function).
It is easy to see that a larger value of $\beta$ generates a larger likelihood value for the GLM.
Consequently, the MLE of $\beta$ is positive infinity.

This phenomenon may lead to degenerate estimates in presence of Bernoulli data. 
It is especially non-negligible in alternating procedures, such as EPCA \citep{collins2001generalization}, and the original algorithm for GAS in the main paper. 
This is because the singularity may build up over iterations, even though the initial estimates may not be degenerate.
Without special treatment, the EPCA algorithm and the original GAS algorithm almost always fail to converge to finite values for Bernoulli data.
We emphasize that the one-step approximation algorithm effectively alleviates the problem, because in each iteration it does not implement the complete IRLS algorithm, and hence less likely to build up the singularity.
Overall, the one-step procedure is more robust against the divergence issue, but not completely immune of it.
Here we provide a universal remedy for the divergence issue for the Bernoulli data.

The idea stems from the ridge regression.
We propose to add a small ridge penalty to the Bernoulli likelihood to shrink the MLE towards zero.
As a result, the infinity is not a local optimum of the penalized likelihood any more, and the optimization algorithm will converge to a finite value.
More specifically, let $\by$ be an $n\times 1$ binary response vector and $\bX$ be an $n\times p$ design matrix. With the canonical logit link function, we propose to maximize the following penalized log likelihood function
\[
\by^T\bX\bbeta-\log\{1+\exp(\bX\bbeta)\} - {n\over 2}\lambda\|\bbeta\|^2_\mathbb{F},
\]
where $\lambda\geq0$ is a tuning parameter.
The optimization is easily implemented by a slight modification of the IRLS algorithm.
In particular, we substitute the weighted least square with the penalized weighted least square, which also bears a closed form solution.
As a result, it addresses the degeneracy issue efficiently.
Since the inclusion of the penalty will shrink the estimate towards zero, in practice, we recommend using a small tuning parameter, e.g., $\lambda=10^{-2}$ or $10^{-3}$.
Selection of the best ridge tuning parameter is beyond the scope of the paper, and remains an open question.

\section{Simulation under the Sparse Settings}
We modify the simulation settings in the main paper to obtain the corresponding sparse  settings.
In particular, we truncate the joint loadings $\bV_0=(\bV_1^T,\bV_2^T)^T$ by the 40\% quantile of the absolute values in each setting, and re-normalize them to have orthonormal columns. Consequently, we obtain a sufficiently sparse true joint loading matrix. All the other parameters are kept unchanged. Similar to the main paper, we conduct 100 simulation runs under each setting, and compare the GAS, sGAS, and EPCA-JIVE methods using various  criteria described in the paper.  The results are summarized in Figures \ref{fig:1a}--\ref{fig:4b}.

From the results we observe that the sGAS method, with variable selection in the joint loadings, outperforms the GAS method in terms of the joint loading estimation and the joint and overall structure recovery in all settings. The two methods have similar performance on the individual loading and structure estimation. This is mainly because we only introduce sparsity to the joint loadings. Hence the major advantage of the sparse method is in the joint structure estimation.
Both methods significantly outperform the EPCA-JIVE method in Settings 2--4.
When the data follow the Gaussian distribution (Setting 1), as shown in the main paper, the GAS method and the EPCA-JIVE method are essentially the same, and thus have similar performance.

\begin{figure}[h]
  \centering
  \includegraphics[width=6in]{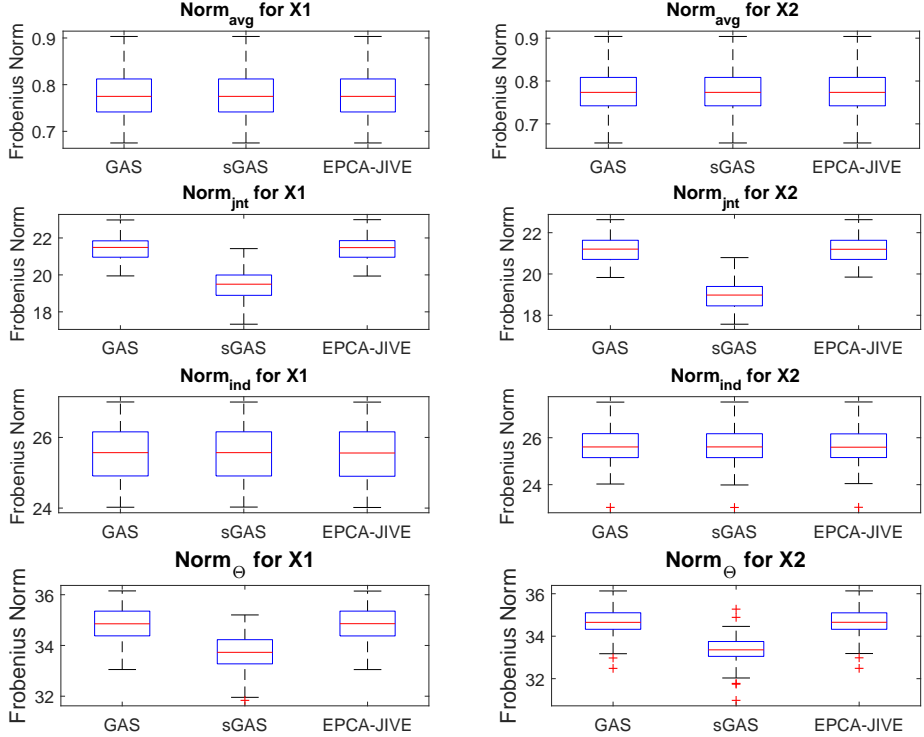}
  \caption{Sparse Setting 1 (Gaussian-Gaussian): comparison of the low-rank structure estimation accuracy among the GAS, sGAS, and EPCA-JIVE methods. The left panels are for $\bX_1$ and the right panels are for $\bX_2$. From top to bottom, we evaluate $Norm_{avg}=\|\bmu_k-\widehat{\bmu_k}\|_\mathbb{F},\ Norm_{jnt}=\|\bU_0\bV_k^T-\widehat{\bU_0}\widehat{\bV_k}^T\|_\mathbb{F},\
Norm_{ind}=\|\bU_k\bA_k^T-\widehat{\bU_k}\widehat{\bA_k}^T\|_\mathbb{F},\ Norm_{\bTheta}=\|\bTheta_k-\widehat{\bTheta_k}\|_\mathbb{F}$, respectively. }\label{fig:1a}
\end{figure}

\begin{figure}[htbp]
  \centering
  \includegraphics[width=6in,height=3in]{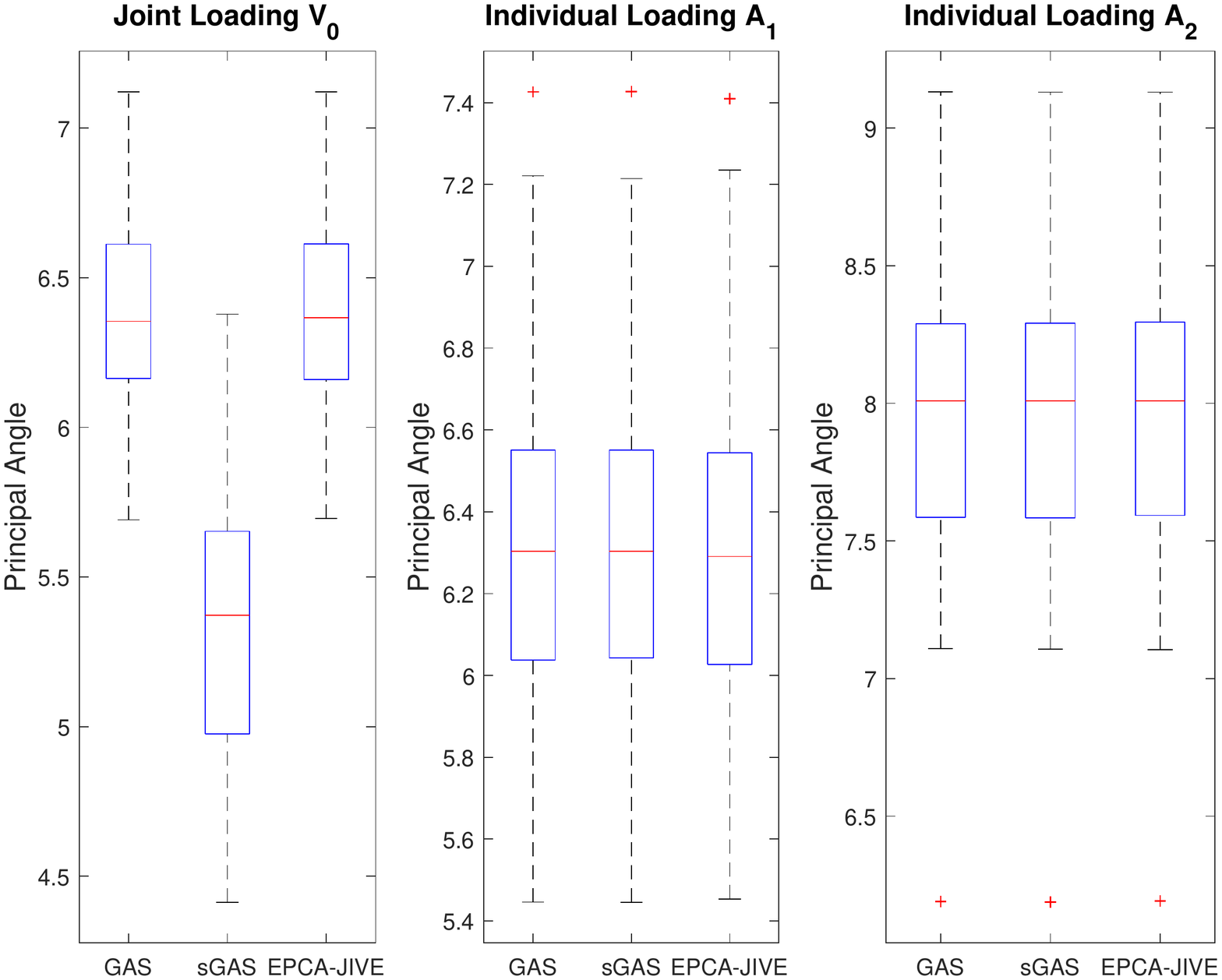}
  \caption{Sparse Setting 1 (Gaussian-Gaussian): comparison of the loading estimation accuracy among the GAS, sGAS, and EPCA-JIVE methods. From left to right, we evaluate the principal angles $\angle(\bV_0,\widehat{\bV_0}),\angle(\bA_1,\widehat{\bA_1}),\angle(\bA_2,\widehat{\bA_2})$, respectively. }\label{fig:1b}
\end{figure}

\begin{figure}[htbp]
  \centering
  \includegraphics[width=6in]{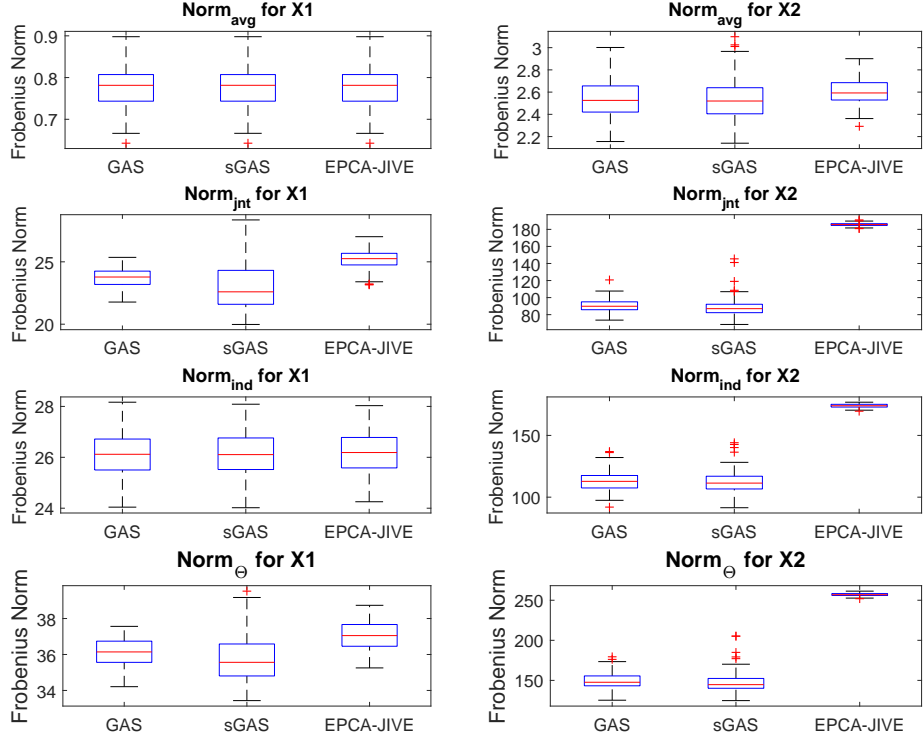}
  \caption{Sparse Setting 2 (Gaussian-Bernoulli): comparison of the low-rank structure estimation accuracy among the GAS, sGAS, and EPCA-JIVE methods. The left panels are for $\bX_1$ and the right panels are for $\bX_2$. From top to bottom, we evaluate $Norm_{avg}=\|\bmu_k-\widehat{\bmu_k}\|_\mathbb{F},\ Norm_{jnt}=\|\bU_0\bV_k^T-\widehat{\bU_0}\widehat{\bV_k}^T\|_\mathbb{F},\
Norm_{ind}=\|\bU_k\bA_k^T-\widehat{\bU_k}\widehat{\bA_k}^T\|_\mathbb{F},\ Norm_{\bTheta}=\|\bTheta_k-\widehat{\bTheta_k}\|_\mathbb{F}$, respectively. }\label{fig:2a}
\end{figure}

\begin{figure}[htbp]
  \centering
  \includegraphics[width=6in,height=3in]{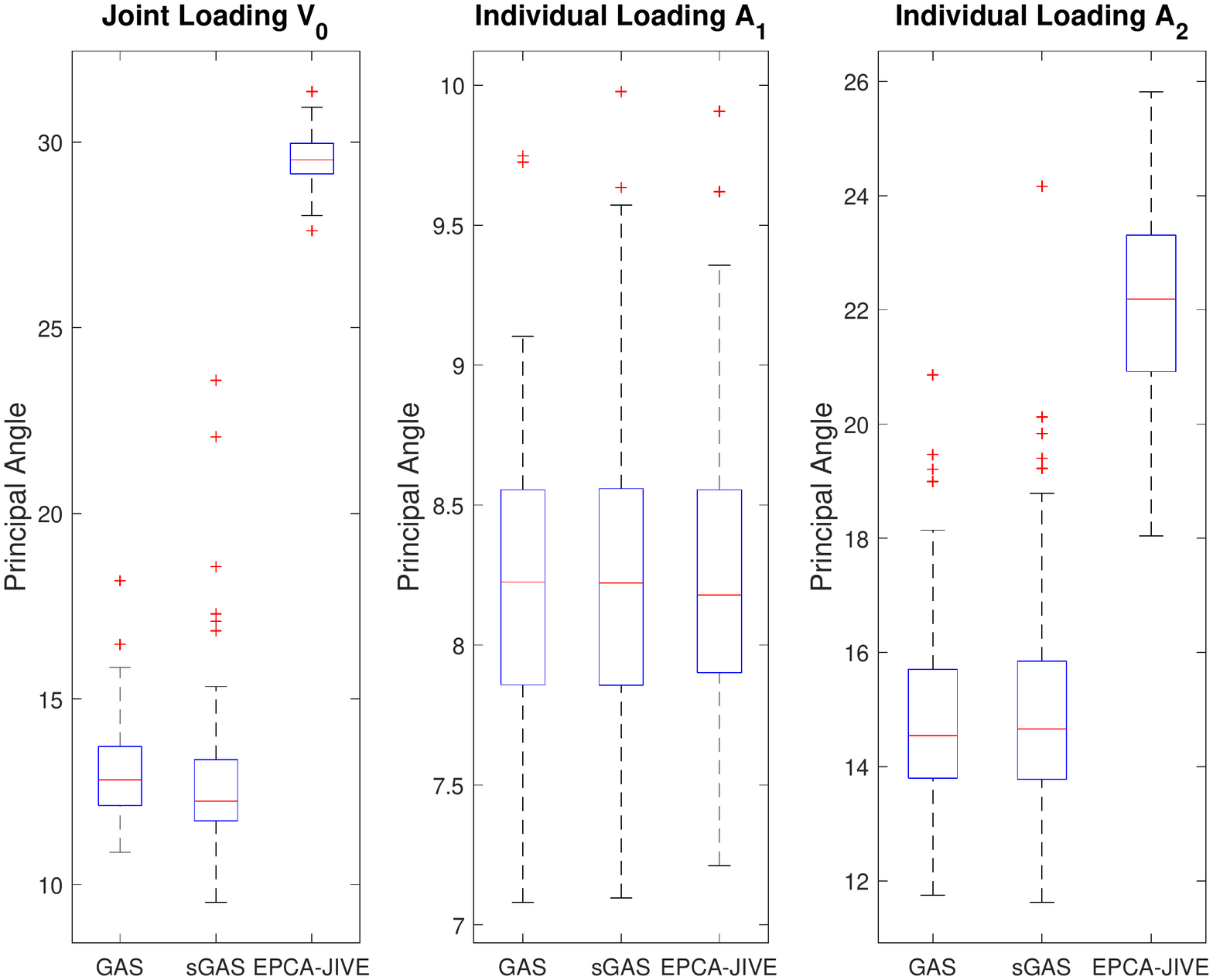}
  \caption{Sparse Setting 2 (Gaussian-Bernoulli): comparison of the loading estimation accuracy among the GAS, sGAS, and EPCA-JIVE methods. From left to right, we evaluate the principal angles $\angle(\bV_0,\widehat{\bV_0}),\angle(\bA_1,\widehat{\bA_1}),\angle(\bA_2,\widehat{\bA_2})$, respectively. }\label{fig:2b}
\end{figure}

\begin{figure}[htbp]
  \centering
  \includegraphics[width=6in]{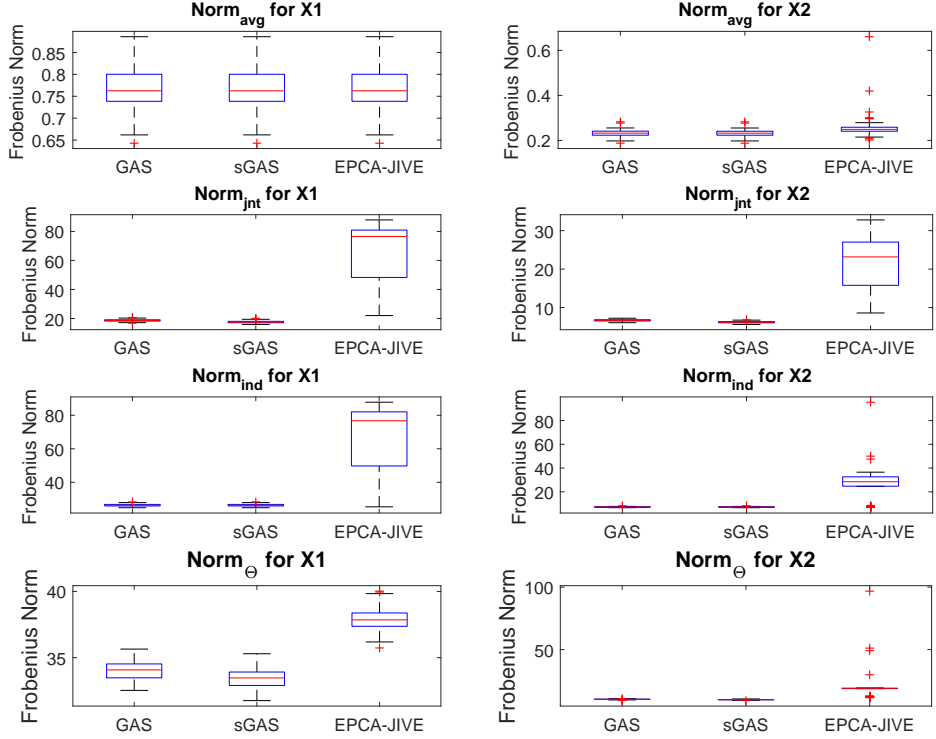}
  \caption{Sparse Setting 3 (Gaussian-Poisson): comparison of the low-rank structure estimation accuracy among the GAS, sGAS, and EPCA-JIVE methods. The left panels are for $\bX_1$ and the right panels are for $\bX_2$. From top to bottom, we evaluate $Norm_{avg}=\|\bmu_k-\widehat{\bmu_k}\|_\mathbb{F},\ Norm_{jnt}=\|\bU_0\bV_k^T-\widehat{\bU_0}\widehat{\bV_k}^T\|_\mathbb{F},\
Norm_{ind}=\|\bU_k\bA_k^T-\widehat{\bU_k}\widehat{\bA_k}^T\|_\mathbb{F},\ Norm_{\bTheta}=\|\bTheta_k-\widehat{\bTheta_k}\|_\mathbb{F}$, respectively. }\label{fig:3a}
\end{figure}

\begin{figure}[htbp]
  \centering
  \includegraphics[width=6in,height=3in]{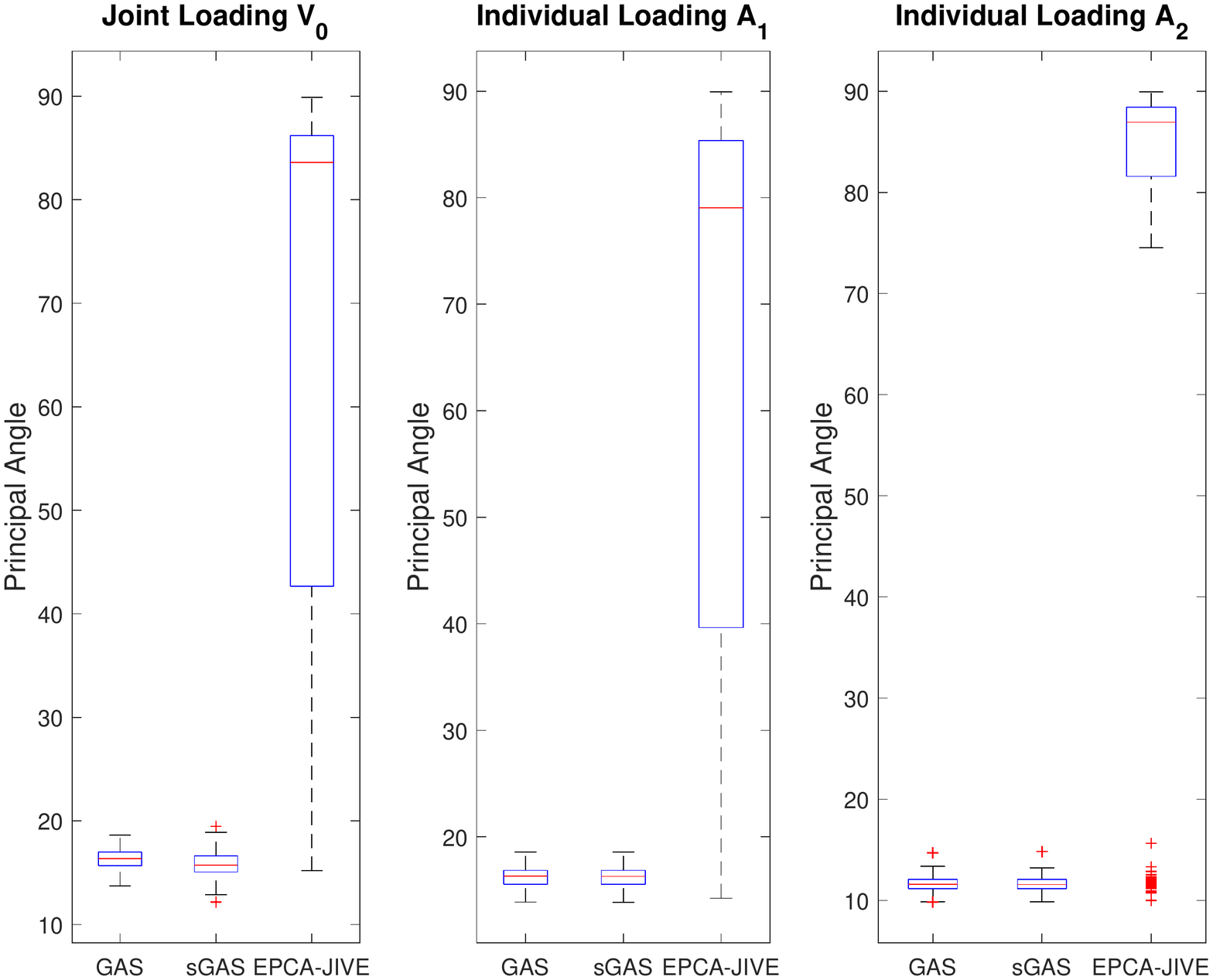}
  \caption{Sparse Setting 3 (Gaussian-Poisson): comparison of the loading estimation accuracy among the GAS, sGAS, and EPCA-JIVE methods. From left to right, we evaluate the principal angles $\angle(\bV_0,\widehat{\bV_0}),\angle(\bA_1,\widehat{\bA_1}),\angle(\bA_2,\widehat{\bA_2})$, respectively. }\label{fig:3b}
\end{figure}

\begin{figure}[htbp]
  \centering
  \includegraphics[width=6in]{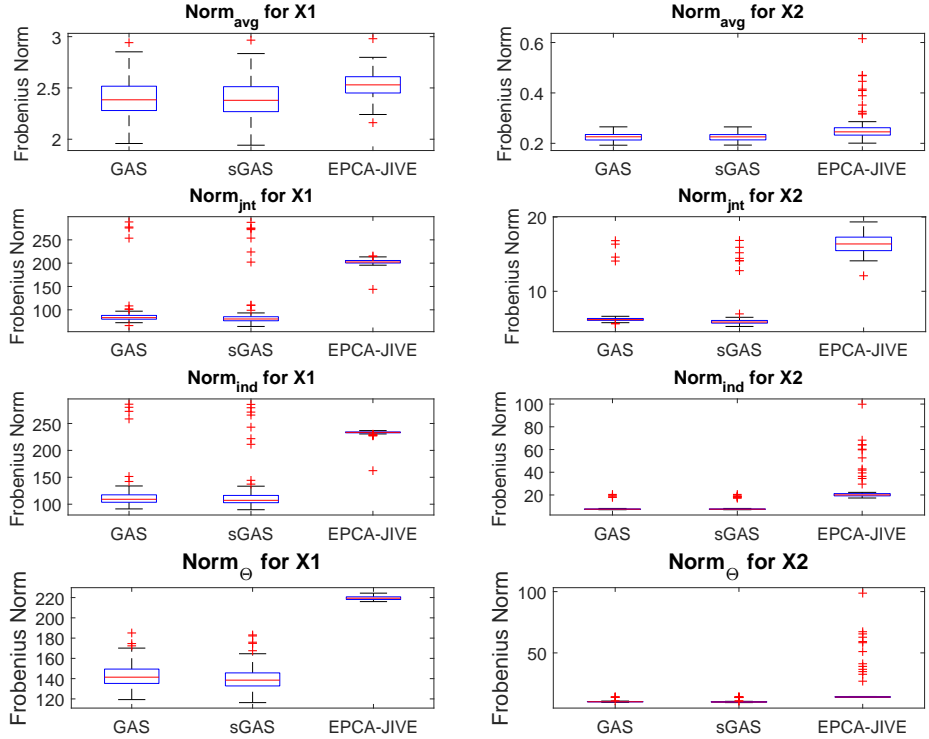}
  \caption{Sparse Setting 4 (Bernoulli-Poisson): comparison of the low-rank structure estimation accuracy among the GAS, sGAS, and EPCA-JIVE methods. The left panels are for $\bX_1$ and the right panels are for $\bX_2$. From top to bottom, we evaluate $Norm_{avg}=\|\bmu_k-\widehat{\bmu_k}\|_\mathbb{F},\ Norm_{jnt}=\|\bU_0\bV_k^T-\widehat{\bU_0}\widehat{\bV_k}^T\|_\mathbb{F},\
Norm_{ind}=\|\bU_k\bA_k^T-\widehat{\bU_k}\widehat{\bA_k}^T\|_\mathbb{F},\ Norm_{\bTheta}=\|\bTheta_k-\widehat{\bTheta_k}\|_\mathbb{F}$, respectively. }\label{fig:4a}
\end{figure}

\begin{figure}[htbp]
  \centering
  \includegraphics[width=6in,height=3in]{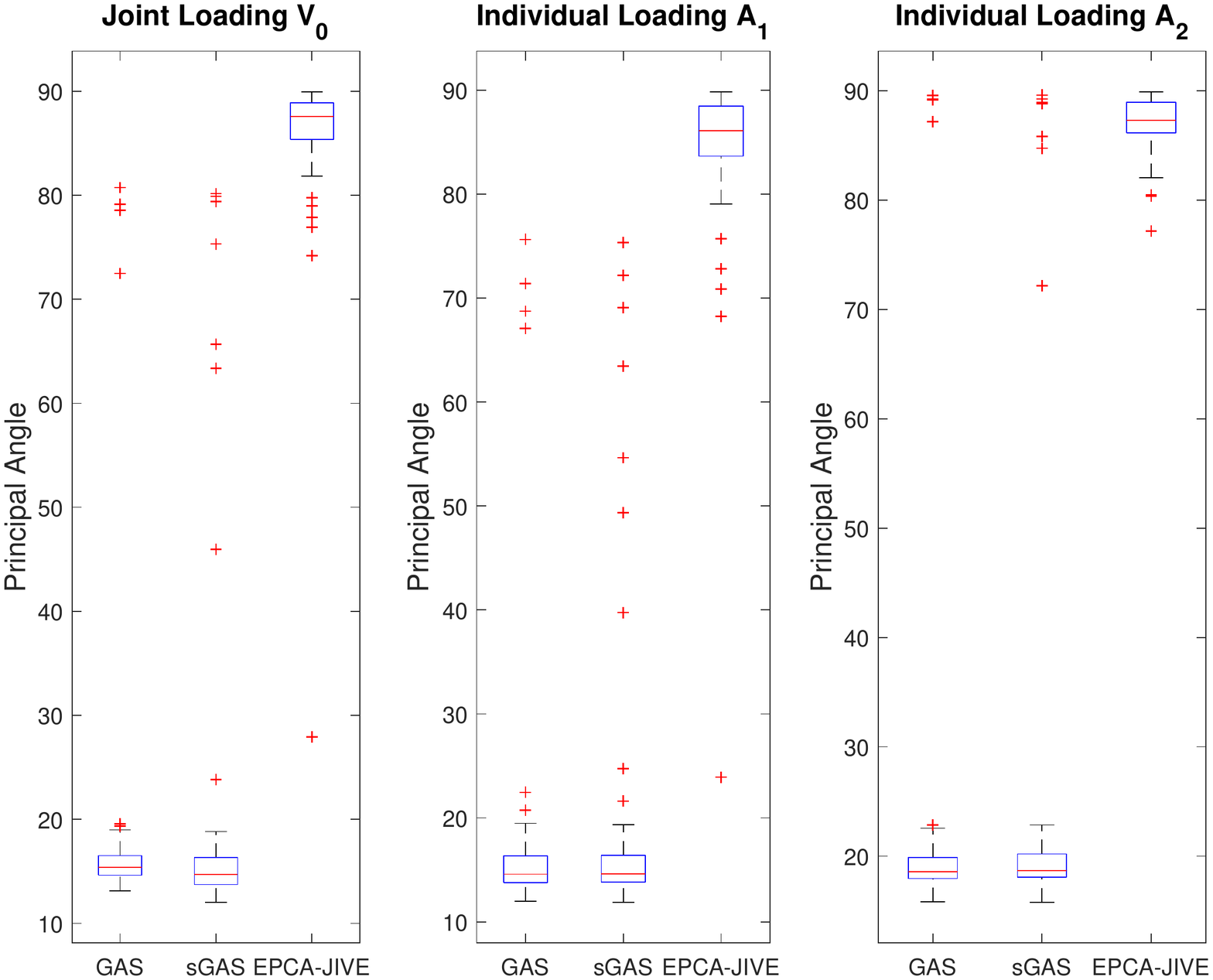}
  \caption{Sparse Setting 4 (Bernoulli-Poisson): comparison of the loading estimation accuracy among the GAS, sGAS, and EPCA-JIVE methods. From left to right, we evaluate the principal angles $\angle(\bV_0,\widehat{\bV_0}),\angle(\bA_1,\widehat{\bA_1}),\angle(\bA_2,\widehat{\bA_2})$, respectively. }\label{fig:4b}
\end{figure}

\section{Simulation under High-Dimensional Settings}
{\color{black}
In this section we investigate the effect of increasing dimensions $p_1$ and $p_2$ on the performance of the one-step GAS method. We focus on {\bf Setting 3} (Gaussian-Poisson) with $n=200$ and consider two additional variants for dimensions: $p_1=p_2=200$ and $p_1=p_2=300$.  In different settings, we keep the unit-norm scores unchanged and make the singular values proportional to the dimensions, so that the Frobenius norms of the column centered $\bTheta_k$ are proportional to the dimensions. As a result, the signal-to-noise ratios are comparable across different settings.
        We compare the relative Frobenius loss defined by $\|\bTheta_k-\widehat{\bTheta_k}\|_\mathbb{F}/\|\bTheta_k\|_\mathbb{F}$, the angles $\angle(\bV_0,\widehat{\bV_0})$ and $\angle(\bA_k,\widehat{\bA_k})$, and the computing time across different settings. The results are shown in Table \ref{tab:dim}.
The estimation accuracy assessed by the relative Frobenius loss and the principal angles becomes better with increasing $p_1$ and $p_2$  due to the ``blessing of dimensionality" \citep{li2017embracing}. While the fitting time becomes longer with higher dimensions, the model fitting procedure is still very efficient even when $p_1=p_2=300$.
}

 \begin{table}[htbp]
        \caption{Simulation results for one-step GAS under varying dimensions. Data are generated from simulation Setting 3 (Gaussian-Poisson) and its two variants with $p_1=p_2=200$ and $p_1=p_2=300$. The median and median absolute deviation (in parenthesis) of each criterion across different settings are presented. }\label{tab:dim}

        \begin{center}
        {\scriptsize\begin{tabular}{|c|cc|cc|cc|}
        \hline
        &  \multicolumn{2}{c|}{($p_1=120,p_2=120$)} & \multicolumn{2}{c|}{($p_1=200,p_2=200$)} & \multicolumn{2}{c|}{($p_1=300,p_2=300$)} \\
        \hline
        &  Data 1 & Data 2 &   Data 1 & Data 2 &  Data 1 & Data 2 \\
        \hline
        $\|\bTheta_k-\widehat{\bTheta_k}\|_\mathbb{F}/\|\bTheta_k\|_\mathbb{F}$ & 0.2894(0.0042) & 0.0254(0.0004)& 0.2071(0.0024)&0.0223(0.0002)& 0.1618(0.0018)&0.0212(0.0002)  \\
        $\angle(\bA_k,\widehat{\bA_k})$    & 15.96(0.77) & 11.49(0.55)& 12.30(0.34)&8.65(0.28)& 9.89(0.33)&7.07(0.24)\\
        \hline
        $\angle(\bV_0,\widehat{\bV_0})$    & \multicolumn{2}{c|}{16.28(0.60)}  & \multicolumn{2}{c|}{12.35(0.33)}  &\multicolumn{2}{c|}{10.35(0.30)}  \\
        Time (sec)    & \multicolumn{2}{c|}{{8.75}(0.45)}  & \multicolumn{2}{c|}{14.34(0.42)}  &\multicolumn{2}{c|}{22.92(3.47)}   \\
        \hline
        \end{tabular}}
        \end{center}
        \end{table}

\section{Simulation under Rank Misspecification}
{\color{black} We further investigate the effect of rank misspecification on the parameter estimation of the proposed method.  We focus on the simulation {\bf Setting 2} (Gaussian-Bernoulli), because its rank estimation result has some ambiguity as shown in Figure \ref{rank2}, which leaves room for rank misspecification. The true ranks are $(r_0=r_1=r_2=2)$. We particularly consider 3 additional sets of misspecified ranks: $(r_0=1,r_1=3, r_2=3)$, $(r_0=3, r_1=1,r_2=2)$, and $(r_0=4,r_1=0,r_2=2)$. The first case corresponds to the situation where a joint structure is misspecified as two individual structures (one for each data source); the second corresponds to the situation where an individual structure in the Gaussian data is misspecified as a joint structure; the third corresponds to the situation where all individual structures in the Gaussian data are misspecified as joint. We apply the GAS method with different sets of ranks to the data, and the results are shown in Table \ref{tab:rank}.

\begin{sidewaystable}[h]
\caption{Rank misspecification results for the proposed method. Data are generated from simulation Setting 2 where $r_0=r_1=r_2=2$. The median and median absolute deviation (in parenthesis) of each criterion across different rank settings are presented. For each method,  $Norm_{avg}$, $Norm_{jnt}$, $Norm_{ind}$, $Norm_{\bTheta}$ and $\angle(\bA_k,\widehat{\bA_k})$ are evaluated and compared per data set; $\angle(\bV_0,\widehat{\bV_0})$, association coefficient $\rho$, \# of iterations and computing time are evaluated across two data sets.}\label{tab:rank}
\begin{center}
{\small\begin{tabular}{|c|cc|cc|cc|cc|}
\hline
&  \multicolumn{2}{c|}{($r_0=2,r_1=2,r_2=2$)} & \multicolumn{2}{c|}{($r_0=1,r_1=3,r_2=3$)} & \multicolumn{2}{c|}{($r_0=3,r_1=1,r_2=2$)} & \multicolumn{2}{c|}{($r_0=4,r_1=0,r_2=2$)}\\
\hline
&  Data 1 & Data 2 &   Data 1 & Data 2 &  Data 1 & Data 2 &  Data 1 & Data 2\\
\hline
$\|\bmu_k-\widehat{\bmu_k}\|_\mathbb{F}$ &   {0.78}(0.04) &  {2.54}(0.10) &  0.77(0.03) & {2.57}(0.12) & (0.77(0.03))&  2.51(0.15) & 0.77(0.03) & 2.61(0.15) \\
$\|\bU_0\bV_k^T-\widehat{\bU_0}\widehat{\bV_k}^T\|_\mathbb{F}$ &   {23.69}(0.45) &  {89.36}(5.63)   & 99.90(1.18) & 213.40(3.92) & 87.13(1.44)& 98.10(6.54)  & 124.96(0.68)& 107.65(5.57) \\
$\|\bU_k\bA_k^T-\widehat{\bU_k}\widehat{\bA_k}^T\|_\mathbb{F}$ &   {26.00}(0.40) &  {110.89}(5.30) &  103.54(1.46) & 281.38(8.86) & 84.25(1.71)& 112.50(4.96)  & 120.42(0) & 115.62(5.25) \\
$\|\bTheta_k-\widehat{\bTheta_k}\|_\mathbb{F}$    & {36.08}(0.45)  & {146.86}(7.47) & 36.93(0.48) & 173.72(8.21) & 36.16(0.52) & 152.42(7.21) & 36.11(0.48) & 162.16(6.80) \\
$\angle(\bA_k,\widehat{\bA_k})$    &{8.18}(0.40)  &  {14.47}(0.69)  & 8.03(0.38)  & 14.64(0.96) & 7.28(0.33) & 14.39(0.90)  & NA & 14.78(1.05) \\
\hline
$\angle(\bV_0,\widehat{\bV_0})$    & \multicolumn{2}{c|}{12.96(0.79)}  & \multicolumn{2}{c|}{12.21(0.84)}  &\multicolumn{2}{c|}{12.91(0.79)}  &\multicolumn{2}{c|}{12.84(0.75)} \\
$\rho$ & \multicolumn{2}{c|}{0.5612(0.0046)}  & \multicolumn{2}{c|}{0.5544(0.0055)}  &\multicolumn{2}{c|}{0.5889(0.0059)}  &\multicolumn{2}{c|}{0.6178(0.0044)}  \\
\# iteration & \multicolumn{2}{c|}{21(1.00)}  & \multicolumn{2}{c|}{25(2.00)}  &\multicolumn{2}{c|}{20(1.00)}  & \multicolumn{2}{c|}{14(1.00)}\\
Time (sec)    & \multicolumn{2}{c|}{{10.94}(1.36)}  & \multicolumn{2}{c|}{14.07(1.90)}  &\multicolumn{2}{c|}{11.90(0.85)}  & \multicolumn{2}{c|}{3.96(0.23)}  \\
\hline
\end{tabular}}
\end{center}
\end{sidewaystable}

We observe that the Frobenius losses of individual structures and joint structures estimated under misspecified ranks are larger than those estimated under the true ranks. This is expected because some individual structures might be mistaken as joint structures and vice versa. Nevertheless, the Frobenius losses of the estimated natural parameter matrices and the principal angles for respective loadings are comparable across different rank settings. Moreover, the association coefficients estimated under different ranks are relatively stable. The results demonstrate that the GAS method and the corresponding association coefficient are both robust against rank misspecification.  }

\clearpage

\end{document}